\documentclass[sigconf]{acmart}

\AtBeginDocument{%
  \providecommand\BibTeX{{%
    \normalfont B\kern-0.5em{\scshape i\kern-0.25em b}\kern-0.8em\TeX}}}


\copyrightyear{2021}
\acmYear{2021}
\setcopyright{acmlicensed}\acmConference[PODS '21]{Proceedings of the 40th ACM SIGMOD-SIGACT-SIGAI Symposium on Principles of Database Systems}{June 20--25, 2021}{Virtual Event, China}
\acmBooktitle{Proceedings of the 40th ACM SIGMOD-SIGACT-SIGAI Symposium on Principles of Database Systems (PODS '21), June 20--25, 2021, Virtual Event, China}
\acmPrice{15.00}
\acmDOI{10.1145/3452021.3458332}
\acmISBN{978-1-4503-8381-3/21/06}



\usepackage{pifont}
\usepackage{tikz}
\usetikzlibrary{positioning,arrows,fit,calc}
\usetikzlibrary{decorations.pathmorphing}
\usepackage{pgfplots}
\pgfplotsset{compat=newest} 

\newtheorem{theorem}{Theorem}

\newtheorem{example}{Example}
\newtheorem{lemma}[theorem]{Lemma}
\newtheorem{corollary}[theorem]{Corollary}
\newtheorem{proposition}[theorem]{Proposition}

\newtheorem{lclaim}{{\sc Claim}}[theorem]



\newcommand{\solitarypair}{solitary pair}
\newcommand{\formulablock}{main block}
\newcommand{\locality}{branch}

\newcommand{\tleft}{t_0}
\newcommand{\tright}{t_1}

\newcommand{\pbr}{properly branching}
\newcommand{\pinit}{properly initialising}
\newcommand{\itype}{input-types}
\newcommand{\pcomp}{properly computing}

\newcommand{\bform}[1]{\mbox{\sc #1}}
\newcommand{\fbr}{\bform{MustBranch}}
\newcommand{\fnobr}{\bform{NoBranch}}
\newcommand{\fgood}{\bform{Good}}
\newcommand{\freject}{\bform{Reject}}
\newcommand{\fdelta}{\bform{Step}}
\newcommand{\finit}{\bform{Init}}
\newcommand{\fhead}{\bform{Head}}
\newcommand{\fstate}{\bform{State}}
\newcommand{\fcell}{\bform{Cell}}
\newcommand{\fscell}{\bform{SameCell}}




\newcommand{\rr}{\mathfrak r}



\newcommand{\w}{\boldsymbol{w}}
\newcommand{\foc}{focus}
\newcommand{\rfoc}{root-focus}

\newcommand{\syc}{correct}
\newcommand{\syic}{incorrect}


\newcommand{\snode}{\mathfrak{s}}
\newcommand{\tnode}{\mathfrak{t}}
\newcommand{\unode}{\mathfrak{u}}
\newcommand{\vnode}{\mathfrak{v}}

\newcommand{\ct}{\mathsf{t}}
\newcommand{\cf}{\mathsf{f}}
\newcommand{\cc}{\mathsf{c}}

\newcommand{\shift}{g_\leftarrow}

\newcommand{\wormh}{{\mathcal H}^{(\ct,\cf)}}









\newcommand{\slin}[3]
{
\scriptsize
\draw[-, thick] ($({#1}) +(0,0.09)$) -- node[above=0.1] {#2}  node[below=0.1] {#3} ($({#1}) +(0,-0.09)$);
}


\newcommand{\OWLQL}{\textsl{OWL\,2\,QL}}

\newcommand{\NL}{\textsc{NL}}
\renewcommand{\L}{\textsc{L}}

\newcommand{\DL}{\textsl{DL-Lite}}

\newcommand{\DLb}{\textsl{DL-Lite}_\textit{bool}}

\newcommand{\q}{{\boldsymbol{q}}}

\renewcommand{\G}{{\boldsymbol{G}}}

\newcommand{\ACz}{{\ensuremath{\textsc{AC}^0}}}

\newcommand{\LogSpace}{\textsc{L}}

\newcommand{\coNP}{\textsc{coNP}}
\newcommand{\NP}{\textsc{NP}}
\newcommand{\PTime}{\textsc{P}}
\newcommand{\ExpTime}{\textsc{ExpTime}}
\newcommand{\Exp}{\textsc{ExpTime}}
\newcommand{\AExpSpace}{\textsc{AExpSpace}}
\newcommand{\NExpTime}{\textsc{NExpTime}}

\newcommand{\PSpace}{\textsc{PSpace}}

\newcommand{\atm}{\boldsymbol{M}}

\marginparwidth=10mm

\newcommand{\segs}{\mathfrak s}

\newcommand{\ind}{\mathsf{ind}}

\newcommand{\A}{\mathcal{D}}
\renewcommand{\C}{\mathcal{C}}
\newcommand{\T}{\mathcal{T}}

\newcommand{\I}{\mathcal{I}}

\newcommand{\avec}[1]{\boldsymbol{#1}}

\def\data{{\mathcal{D}}}

\newcommand{\inp}{\boldsymbol{\mathfrak i}}
\newcommand{\out}{\boldsymbol{\mathfrak o}}

\newcommand{\pre}{\boldsymbol{B}}
\newcommand{\per}{\boldsymbol{P}}
\newcommand{\post}{\boldsymbol{E}}

\def\D{{\data}}

\newcommand{\qmx}{\q}
\newcommand{\qmxfa}{\q^-_{TT}}
\newcommand{\omqmx}{\q}
\newcommand{\otree}{computation-tree}

\newcommand{\whatisit}{desired}

\newcommand{\dexp}{\boldsymbol{e}}

\newcommand{\good}{good}
\newcommand{\nodea}{\mathfrak a}
\newcommand{\mnode}{main node}

\newcommand{\upp}{P}

\newcommand{\inputgs}{\avec{b}_{\ga}^{\mathfrak s}}
\newcommand{\gga}{G_{\ga}}
\newcommand{\ga}{\mathfrak g}
\newcommand{\fga}{M_{\ga}}
\newcommand{\fgai}{M_{\ga_i}}
\newcommand{\foga}{\varphi_{\ga}}
\newcommand{\iga}{I_{\ga}}
\newcommand{\igai}{I_{\ga_i}}

\newcommand{\inode}{\iota_{\ga}}
\newcommand{\pnode}{\pi_{\ga}}
\newcommand{\inodei}{\iota_{\ga_i}}
\newcommand{\inodej}{\iota_{\ga_j}}
\newcommand{\fnode}{\tau_{\ga}}
\newcommand{\fnodei}{\tau_{\ga_i}}
\newcommand{\fnodej}{\tau_{\ga_j}}
\newcommand{\fnodeone}{\tau_{\ga_1}}
\newcommand{\fnodem}{\tau_{\ga_m}}

\newcommand{\rnode}{\varrho_{\ga}}
\newcommand{\rnodej}{\varrho_{\ga_j}}
\newcommand{\midnode}{\alpha}
\newcommand{\Unode}{U_{\ga}}
\newcommand{\Unodej}{U_{\ga_j}}

\newcommand{\itf}{\mathsf{m}}
 


\tikzstyle{or-gate}=[rectangle,draw,inner sep=4pt,thick]
\tikzstyle{and-gate}=[rectangle,draw,inner sep=4pt,thick]
\tikzstyle{input}=[circle,draw,minimum size=3mm]

\tikzset{>=latex, 
	point/.style = {circle,draw,thick,minimum size=2mm,inner sep=0pt},
	point1/.style = {circle,draw,thick,minimum size=6mm,inner sep=0pt},
	hm/.style = {dotted,semithick},
	role/.style = {thick},
	tree/.style = {rounded corners=10pt, dashed, fill opacity=0.5, fill=nullscolour},
	wiggly/.style={thick,
	},
	query/.style={thick},
	itria/.style={
  draw,dashed,shape border uses incircle,
  isosceles triangle,shape border rotate=90,yshift=-1.45cm},
  square/.style={regular polygon,regular polygon sides=4}
}

\newcommand{\qinit}{q_{\textit{init}}}
\newcommand{\qaccept}{q_{\textit{accept}}}
\newcommand{\qreject}{q_{\textit{reject}}}
\newcommand{\cinit}{c_{\textit{init}(\w)}}
\newcommand{\ppoly}{\boldsymbol{p}}
\newcommand{\dpoly}{\boldsymbol{d}}
\newcommand{\creject}{c_{\textit{reject}}}


\settopmatter{printacmref=true}
\begin{document}
\fancyhead{}



\title{Deciding Boundedness of Monadic Sirups}

%
%
%
%

\author{Stanislav Kikot}
\affiliation{\institution{Institute for Information Transmission Problems}\city{Moscow}\country{ Russia}}
\email{staskikotx@gmail.com}

\author{Agi Kurucz}
\affiliation{\institution{King's College London}\city{London}\country{UK}}
\email{agi.kurucz@kcl.ac.uk}


\author{Vladimir V. Podolskii}
\affiliation{\institution{
HSE University}\city{Moscow}\country{Russia}}
\email{vpodolskii@hse.ru}

\author{Michael Zakharya\-schev}
\affiliation{\institution{Birkbeck, University of London, UK \& \\ HSE University, Moscow, Russia}\country{}}
\email{michael@dcs.bbk.ac.uk}

\renewcommand{\shortauthors}{Kikot, Kurucz, Podolskii and Zakharyaschev}

\begin{abstract}
We show that deciding boundedness (aka FO-rewritability) of mon\-adic single rule datalog programs (sirups) is 2\Exp-hard, which matches the upper bound known since 1988 and finally settles a long-standing open problem. We obtain this result as a byproduct of an attempt to classify monadic `disjunctive sirups'---Boolean conjunctive queries $\q$ with unary and binary predicates mediated by a disjunctive rule $T(x) \lor F(x) \leftarrow A(x)$---according to the data complexity of their evaluation. Apart from establishing that deciding FO-rewritability of disjunctive sirups with a dag-shaped $\q$ is also 2\Exp-hard, we make substantial progress towards obtaining a complete FO/\L-hardness dichotomy of disjunctive sirups with ditree-shaped $\q$.
\vspace*{1.3cm}
\end{abstract}


%
%
\begin{CCSXML}
<ccs2012>
<concept>
<concept_id>10002951.10002952.10003197</concept_id>
<concept_desc>Information systems~Query languages</concept_desc>
<concept_significance>500</concept_significance>
</concept>
<concept>
<concept_id>10003752.10003777.10003787</concept_id>
<concept_desc>Theory of computation~Complexity theory and logic</concept_desc>
<concept_significance>500</concept_significance>
</concept>
<concept>
<concept_id>10003752.10003790.10003797</concept_id>
<concept_desc>Theory of computation~Description logics</concept_desc>
<concept_significance>500</concept_significance>
</concept>
<concept>
<concept_id>10010147.10010178.10010187</concept_id>
<concept_desc>Computing methodologies~Knowledge representation and reasoning</concept_desc>
<concept_significance>500</concept_significance>
</concept>
</ccs2012>
\end{CCSXML}

\ccsdesc[500]{Information systems~Query languages}
\ccsdesc[500]{Theory of computation~Complexity theory and logic}
\ccsdesc[500]{Theory of computation~Description logics}
\ccsdesc[500]{Computing methodologies~Knowledge representation and reasoning}
%
%

%
%


\keywords{Boundedness, monadic datalog, first-order rewritability, ontology-mediated query.}

\maketitle

\section{Introduction}

There have been two waves in the investigation of boundedness or first-order rewritability of various types of recursive queries. The first one started in the mid 1980s, when the deductive database community was analysing recursion in datalog queries with the aim of optimising and parallelising their execution. One of the fundamental issues was the problem of deciding whether the depth of recursion required to evaluate a given datalog query could be bounded independently of the input data. By 2000, among other remarkable results, it had been discovered that 
\begin{itemize}
\item[--] boundedness of linear datalog queries with binary predicates and of ternary linear datalog queries with a single recursive rule is undecidable~\cite{DBLP:journals/jlp/HillebrandKMV95,DBLP:journals/siamcomp/Marcinkowski99};

\item[--] deciding program boundedness is 2\ExpTime-complete for monadic programs~\cite{DBLP:conf/stoc/CosmadakisGKV88,DBLP:conf/lics/BenediktCCB15}, \PSpace-complete for linear mon\-adic programs~\cite{DBLP:conf/stoc/CosmadakisGKV88,DBLP:journals/ijfcs/Meyden00}, and \NP-complete for linear monadic and dyadic single rule programs~\cite{DBLP:conf/pods/Vardi88}.
\end{itemize}
Interestingly, the exact complexity of deciding boundedness of monadic datalog programs with a single recursive rule, known as \emph{sirups} since~\cite{DBLP:conf/pods/CosmadakisK86}, has remained open so far, somewhere between \NP{} and 2\ExpTime, to be more precise. To clarify the `status [of boundedness] for sirups' is  part of Open Problem 4.2.10 in~\cite{DBLP:books/el/leeuwen90/Kanellakis90}. According to~\cite{DBLP:journals/jacm/AfratiP93}, Kanellakis and Papadimitriou, who were interested in datalog programs computable in \textsc{NC}, and so parallelisable, `have investigated the case of unary sirups, and have made progress towards a complete characterization'\!. Alas, that work appears to have never been completed and published. 

In this paper, we finally settle the boundedness problem for mon\-adic sirups by showing that it is 2\ExpTime-hard, which matches the upper bound for deciding boundedness of arbitrary monadic datalog  programs~\cite{DBLP:conf/stoc/CosmadakisGKV88} (and which should be compared with the \NP--\PSpace{} gap between deciding boundedness of \emph{linear} sirups and non-sirups.)

We obtained this result while surfing the second wave, which was triggered in the mid 2010s by the theory and practice of ontology-based data access (OBDA)~\cite{PLCD*08,CDLLR07,DBLP:conf/ijcai/XiaoCKLPRZ18} (recently rebranded to virtual knowledge graphs~\cite{DBLP:journals/dint/XiaoDCC19}). In OBDA, a typical ontology-mediated query (OMQ) takes the form $\boldsymbol{Q}= (\mathcal{O},\q)$ with a description logic (DL) ontology $\mathcal{O}$ and a conjunctive query (CQ) $\q$. A fundamental problem in this setting is to decide whether a given OMQ $\boldsymbol{Q}$ is FO-rewritable, in which case finding certain answers to $\boldsymbol{Q}$ can be done by evaluating a non-recursive SQL-query using a standard RDBMS. 

The ontology language \OWLQL{} for OBDA systems (such as Mastro\footnote{\url{https://www.obdasystems.com}} or Ontop\footnote{\url{https://ontopic.biz}}), standardised by the W3C in 2009, is based on \DL{} that uniformly guarantees FO-rewritability of all OMQs with an \OWLQL{} ontology. 
Uniformly FO-rewritable tgds, aka Datalog$^\pm$ or existential rules, have also been identified; see, e.g.,~\cite{DBLP:conf/datalog/CiviliR12,DBLP:journals/tods/GottlobOP14,DBLP:journals/semweb/KonigLMT15}. As an inevitable consequence, however, all of these ontology languages are very inexpressive. 

The FO-rewritability problem for OMQs in more expressive ontology languages was attacked in~\cite{DBLP:journals/tods/BienvenuCLW14} via a reduction to CSPs. It has been shown, among other results, that
\begin{itemize}
\item[--] deciding FO-rewritability of OMQs with ontologies in expressive DLs such as $\mathcal{ALC}$ (notational variant of multi-modal logic $\textbf{K}_n$) and atomic CQs is \NExpTime-complete~\cite{DBLP:journals/tods/BienvenuCLW14}, which becomes  2\NExpTime-complete in the case of (non-atomic) CQs and also monadic disjunctive datalog queries~\cite{DBLP:conf/kr/BourhisL16,DBLP:journals/lmcs/FeierKL19};

\item[--] any OMQ with a (Horn) $\mathcal{EL}$ ontology and a CQ is either FO-, or  linear-datalog-, or datalog-rewritable, and deciding this trichotomy is \textsc{ExpTime}-complete~\cite{DBLP:conf/ijcai/LutzS17,DBLP:journals/corr/abs-1904-12533}; see also~\cite{DBLP:conf/ijcai/Bienvenu0LW16,DBLP:conf/ijcai/BarceloBLP18} for complexity results on deciding FO-rewritability of OMQs with more expressive Horn description logic ontologies and frontier-guarded existential rules.
\end{itemize}
In~\cite{DBLP:conf/kr/GerasimovaKKPZ20}, aiming to single out and classify possible causes of non-FO- or non-(linear)-datalog-rewritability of OMQs, we considered (in the DL setting) a disjunctive analogue of monadic sirups, namely, monadic disjunctive datalog programs $\Delta_\q$ of the form
\begin{align}\label{d-sirup1}
T(x) \lor F(x) & \leftarrow A(x)\\\label{d-sirup2}
\boldsymbol{G} & \leftarrow \q
\end{align}
where $\q$ is a (Boolean) CQ with unary predicates $T(x)$, $F(y)$ and arbitrary binary predicates, and $\G$ is a nullary (goal) predicate. 
In DL and conceptual modelling, rule~\eqref{d-sirup1} is known as a \emph{covering axiom} (or constraint) $A \sqsubseteq T \sqcup F$ (as in `class Animal is covered by classes Male and Female'). 
We illustrate the zoo of `monadic disjunctive sirups' by an example, where CQs are given as digraphs with labelled edges and (partially) labelled nodes. 
\begin{example}\label{ex-comple}\em
Consider the CQs $\q_1,\dots,\q_5$ shown below:
\begin{center}
\begin{tikzpicture}[>=latex,line width=0.8pt, rounded corners,scale = 1.3]
		\node (0) at (-0.3,0) {$\q_1$};
		\node[point,scale=0.6,label=above:{\scriptsize $F$}] (1) at (0,0) {};
		\node[point,scale=0.6,label=above:\scriptsize$F$] (m) at (1,0) {};
		\node[point,scale=0.6,label=above:\scriptsize$T$] (2) at (2,0) {};
		\node[point,scale=0.6,label=above:\scriptsize$T$] (3) at (3,0) {};
		\draw[->,right] (1) to node[below] {\scriptsize $R$}  (m);
		\draw[->,right] (m) to node[below] {\scriptsize $R$} (2);
		\draw[->,right] (2) to node[below] {\scriptsize $R$} (3);
		\end{tikzpicture}\\[-2pt]
		\begin{tikzpicture}[>=latex,line width=0.8pt, rounded corners,scale = 1.3]
		\node (0) at (-0.3,0) {$\q_2$};
		\node[point,scale=0.6,label=above:{\scriptsize $T$}] (1) at (0,0) {};
		\node[point,scale=0.6,label=above:\scriptsize$T$] (m) at (1,0) {};
		\node[point,scale=0.6,label=above:\scriptsize$F$] (2) at (2,0) {};
		\draw[->,right] (1) to node[below] {\scriptsize $S$}  (m);
		\draw[->,right] (m) to node[below] {\scriptsize $R$} (2);
		\end{tikzpicture}\\[-2pt]
\begin{tikzpicture}[>=latex,line width=0.8pt, rounded corners,scale = 1.3]
		\node (0) at (-0.3,0) {$\q_3$};
		\node[point,scale=0.6,label=above:{\scriptsize $T$}] (1) at (0,0) {};
		\node[point,scale=0.6,label=above:{\scriptsize$T$}] (m) at (1,0) {};
		\node[point,scale=0.6,label=above:{\scriptsize$F$}] (2) at (2,0) {};
		\draw[->,right] (1) to node[below] {\scriptsize $R$}  (m);
		\draw[->,right] (m) to node[below] {\scriptsize $R$} (2);
		\end{tikzpicture}\\[-2pt]
\begin{tikzpicture}[>=latex,line width=0.8pt, rounded corners, scale = 1.3]
\node (0) at (-0.3,0) {$\q_4$};
		\node[point,scale=0.6,label=above:{},label=above:{\scriptsize $T$},label=below:{\scriptsize$z$}] (1) at (0,0) {};
		\node[point,scale=0.6,label=below:{\scriptsize$y$}] (m) at (1,0) {};
		\node[point,scale=0.6,label=above:{\scriptsize $F$},label=below:{\scriptsize$x$}] (2) at (2,0) {};
		\draw[->,right] (m) to node[below] {\scriptsize $R$}  (1);
		\draw[->,right] (m) to node[below] {\scriptsize $R$} (2);
		\end{tikzpicture}\\[-2pt]
\begin{tikzpicture}[>=latex,line width=0.8pt,rounded corners, scale = 0.9]
\node (0) at (-1.9,0) {$\q_5$};
\node[point,scale = 0.6] (0) at (-1.5,0) {};
\node[point,scale = 0.6] (1) at (0,0) {};
\node[point,scale = 0.6,label=above:{\scriptsize $T$}] (m) at (1.5,0) {};
\node[point,scale = 0.6] (2) at (3,0) {};
\node[point,scale = 0.6,label=above:{\scriptsize $FT$}] (3) at (4.5,0) {};
\node[point,scale = 0.6,label=above:{\scriptsize $F$}] (4) at (6,0) {};
\draw[<-,right] (0) to node[below] {\scriptsize $R$}  (1);
\draw[<-,right] (1) to node[below] {\scriptsize $R$}  (m);
\draw[<-,right] (m) to node[below] {\scriptsize $R$} (2);
\draw[->,right] (2) to node[below] {\scriptsize $R$} (3);
\draw[->,right] (3) to node[below] {\scriptsize $R$} (4);
\end{tikzpicture}
\end{center}
For instance, in full, rule~\eqref{d-sirup2} in the program $\Delta_{\q_4}$ looks as
$$
\G \leftarrow F(x), R(y,x), R(y,z), T(z).
$$
Intuitively, the certain answer to the Boolean query $(\Delta_{\q_4},\G)$ over a data instance $\D$ (given in the form of a labelled graph) is `yes' iff we can find the pattern $\q_4$ in every graph obtained by labelling each of the $A$-nodes in $\D$ with either $T$ or $F$. 
As shown in~\cite{DBLP:conf/kr/GerasimovaKKPZ20}, answering $(\Delta_{\q_i}, \G)$ is \coNP-complete for $\q_1$, \PTime-complete for $\q_2$, \NL-complete for $\q_3$, \LogSpace-complete for $\q_4$, and, in view of Example~\ref{ex:one more} below,  $\q_5$ is FO-rewritable and so in \ACz{}.
\end{example}

Every disjunctive sirup $\Delta_\q$, in which $\q$ has a single `solitary' $F$-node (like in $\q_2$--$\q_5$), is equivalent to a monadic datalog program $\Pi_\q$. 
For instance, $\Delta_{\q_4}$ is equivalent to $\Pi_{\q_4}$ with three rules
\begin{align*}
& \G \leftarrow F(x), R(y,x), R(y,z), P(z)\\
& P(x) \leftarrow T(x)\\
& P(x) \leftarrow A(x), R(y,x), R(y,z), P(z)
\end{align*}
Furthermore, for certain CQs $\q$, boundedness of $\Pi_\q$ coincides with boundedness of a sirup sub-program of $\Pi_\q$ (see Sec.~\ref{sec:prelims}). 
In the above example, this sirup, $\Sigma_{\q_4}$, comprises the last two rules of $\Pi_{\q_4}$, and neither $(\Delta_{\q_4},\G)$ nor $(\Sigma_{\q_4},P)$ is FO-rewritable.

On the other hand, every disjunctive sirup $\Delta_\q$ can be encoded as a CQ mediated by a Schema.org\footnote{\url{https://schema.org}: `Many applications from Google, Microsoft, Pinterest, Yandex and others already use these vocabularies to power rich, extensible experiences'\!.} ontology. Deciding FO-rewritability of UCQs mediated by Schema.org is known to be \PSpace-hard~\cite{DBLP:conf/ijcai/HernichLOW15}.


Our first result in this paper establishes 2\ExpTime-hardness of deciding FO-rewritability in all of these cases. In Sec.~\ref{sec:hardness}, we show how a computation of an alternating Turing machine 
can be captured in terms of boundedness of the disjunctive sirup $\Delta_\q$, datalog program $\Pi_\q$ or its sirup sub-program $\Sigma_\q$, for some CQ $\q$. 
Compared to known techniques, which require multiple rules in a program or a union of multiple CQs to check properties of Turing machine computations, we achieve the same aim by means of polynomially-many small Boolean circuits that are `implemented' by a \emph{single} CQ $\q$ and check local properties of binary trees representing the expansions of $\Pi_\q$.



What causes such high computational costs of recognising FO-rewritab\-ility of seemingly very primitive programs? Are there any natural classes of monadic (disjunctive) sirups whose boundedness can be checked by tractable algorithms? The 2\ExpTime-hardness proof provides three clues: first, the CQs $\q$ used in it are dags; second, each of them has two $T$-nodes; and, third, they contain many twin $FT$-nodes (as in $\q_5$ above). In~\cite{DBLP:conf/kr/GerasimovaKKPZ20}, we gave a complete classification of monadic disjunctive sirups $\Delta_\q$ with a path CQ $\q$ and an extra disjointness constraint
\begin{equation}\label{disjoint}
\bot \leftarrow T(x), F(x)
\end{equation}
(as in `classes Male and Female are disjoint') according to their data complexity (\ACz/\NL/\PTime/\coNP{}) and rewritability type (FO/linear datalog/datalog/disjunctive datalog). 

Here, in Sec.~\ref{sec:ditrees}, we make significant progress towards a complete  understanding of FO-rewrit\-ability of disjunctive sirups $\Delta_\q$ with a ditree-shaped CQ $\q$. First, we prove that twin-free CQs $\q$ as well as those that contain comparable (w.r.t.\ the tree order in $\q$) solitary $F$- and $T$-nodes (like in $\q_1$--$\q_3$ but not $\q_4$ and $\q_5$) give rise to \NL-hard disjunctive sirups. In particular, this yields a tractable FO/\NL-hardness dichotomy of the ditree disjunctive sirups with disjointness~\eqref{disjoint}. Second, we obtain a tractable FO/\L/\NL-completeness trichotomy of ditree disjunctive sirups with one solitary $F$, one solitary $T$ and any number of $FT$-twins. (This case corresponds to linear ditree sirups.) Finally, we establish an FO/\L-hardness dichotomy for ditree disjunctive sirups with one solitary $F$ and show that this dichotomy can be decided in polynomial time if the number of solitary $T$s in the CQs is bounded (like in our 2\ExpTime-hardness proof) and in exponential time otherwise. It follows  that deciding FO-rewritability of such disjunctive sirups is fixed-parameter tractable if the number of solitary $T$s is regarded as a parameter.




\section{Preliminaries}\label{sec:prelims}


We remind the reader (who can consult~\cite{Abitebouletal95} for details) that a datalog program is a finite set, $\Pi$, of rules of the form
\begin{equation}\label{rule}
\forall \avec{x}\, (\gamma_0 \leftarrow \gamma_1 \land \dots \land \gamma_m)
\end{equation}
where each $\gamma_i$ is a (constant- and function-free) atom $Q(\avec{y})$ with $\avec{y} \subseteq  \avec{x}$. As usual, we omit $\forall \avec{x}$ and replace $\land$ with a comma. The atom $\gamma_0$ is the \emph{head} of the rule, and $\gamma_1,\dots,\gamma_m$ comprise its \emph{body}. The variables in the head must also occur in the body. The predicate in the head of a rule in $\Pi$ is called an \emph{IDB predicate}; non-IDB predicates in $\Pi$ are \emph{EDB predicates}. 
We call a rule \emph{recursive} if its body has at least one IDB predicate; otherwise, it is an \emph{initialisation rule}. The \emph{arity} of $\Pi$ is the maximum arity of its IDB predicates. Here, we only consider \emph{monadic} datalog programs with at most \emph{binary} EDBs. 
A \emph{monadic sirup} is a monadic program 
with a single recursive rule.

A \emph{data instance} for $\Pi$ is any finite set $\A$ of ground atoms with EDB predicates in $\Pi$. The set of constants in $\A$ is denoted by $\ind(\A)$. 
For a unary IDB predicate $P$, a \emph{certain answer} to the \emph{datalog query} $(\Pi,P)$ \emph{over} $\A$ is any $a \in \ind(\A)$ such that $\I \models P[a]$, for every model $\I$ of $\Pi$ and $\A$, or, in other words, $P(a)$ is in the closure $\Pi(\A)$ of $\A$ under the rules in $\Pi$. For a 0-ary IDB $\G$ (goal), a \emph{certain answer} to $(\Pi,\G)$ over $\A$ is `yes' if $\G \in \Pi(\A)$, and `no' otherwise.


A typical monadic datalog program, $\Pi_\q$, we deal with in this paper is associated with a \emph{conjunctive query} (CQ) $\q$, which in our context is just a set of atoms with unary predicates $F$, $T$ and arbitrary binary predicates. An atom $F(z) \in \q$ is \emph{solitary} if $T(z) \notin \q$, and symmetrically for $T(z)$; a pair $T(z),F(z) \in \q$ is referred to as \emph{twins}. 

For a CQ $\q$ with a single solitary $F(x)$, possibly multiple solitary $T(y_1),\dots,T(y_n)$, arbitrary twins $T(z)$, $F(z)$ and binary atoms, the program $\Pi_\q$ comprises the following rules with 0-ary goal $\G$:
\begin{align}
\G &\leftarrow F(x), \q^-, P(y_1), \dots, P(y_n)\label{one}\\
P(x)  &\leftarrow T(x)\label{two} \\ 
P(x) &\leftarrow A(x), \q^-, P(y_1), \dots, P(y_n) \label{three}
\end{align}
where $\q^- = \q \setminus \{F(x),T(y_1), \dots, T(y_n)\}$, and $A$ and $P$ are fresh unary EDB and IDB  predicates, respectively. 
One can show (see~\cite{DBLP:journals/ai/KaminskiNG16,DBLP:conf/kr/GerasimovaKKPZ20} for details) that, for any such $\q$, called a \emph{1-CQ} henceforth, $(\Pi_\q,\G)$ is equivalent to the \emph{disjunctive datalog program} $(\Delta_\q, \G)$ with rules \eqref{d-sirup1} and \eqref{d-sirup2} in the sense that they return the same answer over any data instance $\A$. Here, as usual, a \emph{certain answer} to $(\Delta_\q, \G)$ over $\A$ is `yes' iff $\I \models \G$, for every model $\I$ of $\Delta_\q$ and $\A$. 


The monadic sirups, deciding boundedness of which is proved to be 2\Exp-hard in Sec.~\ref{sec:hardness}, take the form $\Sigma_\q = \{\eqref{two},\eqref{three}\}$ with a 1-CQ $\q$ and goal predicate $P$.
Adapting a similar terminology, we refer to disjunctive datalog programs $\Delta_\q = \{\eqref{d-sirup1}, \eqref{d-sirup2}\}$ and queries $(\Delta_\q, \G)$, where $\q$ may contain multiple $T$ and $F$ in general, as \emph{monadic disjunctive sirups} or \emph{d-sirups\/}, for short. 

\begin{example}\label{recursion}\em
Note that recursion in d-sirups is implicit and originates in `proof by exhaustion' or `case distinction'\!, which can be seen by evaluating $(\Delta_{\q_1}, \G)$ and $(\Delta_{\q_2}, \G)$ (or the corresponding $(\Pi_{\q_2}, \G)$), with the $\q_i$ from Example~\ref{ex-comple}, over the data instances $\A_1$ and, respectively $\A_2$ below.\\
\centerline{
\begin{tikzpicture}[line width=0.8pt,scale = 0.85]
\node (d) at (-1.8,0) {$\A_1$};
\node[point,scale=0.6,label=below:{\scriptsize $F$}] (1) at (-2,-1) {};
\node[point,scale=0.6,label=below:{\scriptsize $F$}] (2) at (-1,-1) {};
\node[point,scale=0.6,label=below:{\scriptsize $A$}] (3) at (0,-1) {};
\node[point,scale=0.6,label=below:{\scriptsize $A$}] (4) at (1,-1) {};
\node[point,scale=0.6,label=above:{\scriptsize $T$}] (5) at (2,-1) {};
\node[point,scale=0.6,label=above:{\scriptsize $T$}] (6) at (3,-1) {};
\node[point,scale=0.6,label=left:{\scriptsize $T$}] (7) at (0,0) {};
\draw[->,right] (1) to node[below] {\scriptsize $R$} (2);
\draw[->,right] (2) to node[below] {\scriptsize $R$} (3);
\draw[->,right] (3) to node[below] {\scriptsize $R$} (4);
\draw[->,right] (4) to node[below] {\scriptsize $R$} (5);
\draw[->,right] (5) to node[below] {\scriptsize $R$} (6);
\draw[->,right] (3) to node[left] {\scriptsize $R$} (7);
\end{tikzpicture}
\hspace{0.07cm} 
\begin{tikzpicture}[line width=0.8pt,scale = 0.85]
\node (d) at (0.2,-0.9) {$\A_2$};
\node (a) at (3.19,0.15) {\scriptsize$a$};
\node (b) at (2.19,0.15) {\scriptsize$b$};
\node[point,scale=0.6,label=above:{\scriptsize $T$}] (1) at (0,0) {};
\node[point,scale=0.6,label=above:\scriptsize$T$] (2) at (1,0) {};
\node[point,scale=0.6,label=above:\hspace{-0.5mm}\scriptsize$A$] (3) at (2,0) {};
\node[point,scale=0.6,label=above:\hspace{-0.5mm}{\scriptsize $A$}] (4) at (3,0) {};
\node[point,scale=0.6,label=above:\scriptsize$F$] (5) at (4,0) {};
\node[point,scale=0.6,label=right:\scriptsize$T$] (6) at (3,-1) {};
\node[point,scale=0.6,label=above:\scriptsize$T$] (7) at (2,-1) {};
\draw[->,right] (1) to node[below] {\scriptsize $S$}  (2);
\draw[->,right] (2) to node[below] {\scriptsize $R$}  (3);
\draw[->,right] (3) to node[below] {\scriptsize $S$}  (4);
\draw[->,right] (4) to node[below] {\scriptsize $R$} (5);
\draw[->,right] (6) to node[right] {\scriptsize $R$}  (4);
\draw[->,right] (7) to node[above] {\scriptsize $S$}  (6);
\end{tikzpicture}}
For instance, let $\I$ be any model of $\Delta_{\q_2}$ and $\D_2$. By rule~\eqref{d-sirup1}, each of the $A$-nodes $a$ and $b$ in $\I$ is labelled by $F$ or $T$. If $a$ is an $F$-node, $\q_2$ is embeddable in $\I$ via the vertical $R$-arrow. So let $a$ be a $T$-node. If $b$ is a $T$-node, $\q_2$ is embeddable in $\I$ starting from $a$, and if $b$ is an $F$-node, there is an embedding starting from $b$. Thus, $\I \models \q_2$.
\end{example}

A monadic (disjunctive) datalog query $(\Pi,Q)$ is \emph{bounded} or \emph{FO-rewri\-table} if there is a first-order formula $\Phi(x)$ (a sentence $\Phi$ if $Q$ is 0-ary) such that, for any data instance $\A$, a constant \mbox{$a \in \ind(\A)$} (or `yes') is a certain answer to $(\Pi,Q)$ over $\A$ iff $\A \models \Phi[a]$ (respectively, $\A \models \Phi$), where $\A$ is regarded as an FO-structure. It is known (see, e.g.,~\cite{DBLP:journals/tods/BienvenuCLW14,DBLP:journals/lmcs/FeierKL19}) that in this case $(\Pi,Q)$ is rewritable into a union of conjunctive queries (UCQ). It is also known~\cite{DBLP:conf/pods/Naughton86} that FO-rewritability of datalog queries $(\Pi,Q)$ can be characterised in terms of $Q$-\emph{expansions}, which are defined inductively below for our special queries $(\Pi_\q,\G)$ under the moniker `cactuses'\!.

To begin with, we set $\C_{\G} = \{F(x), \q^-, T(y_1),\dots,T(y_n)\} = \{\q\}$ and $\mathfrak K_\q = \{\C_\G\}$. Then we take the closure of $\mathfrak K_\q$ under the rule
\begin{itemize}
\item[\bf(bud)] if $T(y) \in \mathcal{C} \in \mathfrak K_\q$ is solitary, then we add to $\mathfrak K_\q$ the set of atoms obtained from $\C$ by replacing $T(y)$ with the atoms $A(x), \q^-, T(y_1),\dots,T(y_n)$, in which $x$ is renamed to $y$ and all other variables are given \emph{fresh} names. 
\end{itemize}
The elements of the resulting (infinite if $n \ge 1$) set $\mathfrak K_\q$ are called \emph{cactuses} for $(\Pi_\q,\G)$. 
We represent cactuses as labelled digraphs. 


For $\C \in \mathfrak K_\q$, we refer to the copies $\mathfrak s$ of (maximal subsets of) $\q$ that comprise $\C$ as \emph{segments\/} and to the copy of the solitary $F$-node in $\mathfrak s$ as its \emph{\foc}. The \emph{skeleton} $\C^s$ of $\C$ is the ditree whose nodes are the segments $\mathfrak s$ of $\C$ and edges $(\mathfrak s, \mathfrak s')$ mean that $\mathfrak s'$ was attached to $\mathfrak s$ by budding. The \emph{depth of $\mathfrak s$ in} $\C$ 
(or \emph{in} $\C^s$) is the number of edges on the branch from the root of $\C^s$ to $\mathfrak s$. The \emph{depth of} $\C$ is the maximum depth of its segments. 

\begin{example}\label{cac-ill}\em
The data instance $\D_2$ from Example~\ref{recursion} is (isomorphic to) a cactus from $\mathfrak K_{\q_2}$ obtained by applying {\bf (bud)} to $\q_2$ twice. The skeleton $\D_2^s$ along with its three segments 
$\mathfrak s_0,\mathfrak s_1,\mathfrak s_2$ and their respective focuses $z_0,z_1,z_2$ are illustrated below:\\
\centerline{
\begin{tikzpicture}[line width=0.8pt,scale = 0.85]
\node (d) at (0,.5) {$\D_2^s$};
\node[point,scale=0.5,fill=black,label=below:{$\mathfrak s_1$}] (2) at (-.5,-1) {};
\node[point,scale=0.5,fill=black,label=below:{$\mathfrak s_2$}] (4) at (.5,-1) {};
\node[point,scale=0.5,fill=black,label=left:{$\mathfrak s_0$}] (7) at (0,0) {};
\draw[->,right] (7) to node[below] {} (2);
\draw[->,right] (7) to node[below] {} (4);
\end{tikzpicture}
\hspace{0.5cm} 
\begin{tikzpicture}[line width=0.8pt,scale = 1.3]
\node (d) at (0.9,-0.9) {$\A_2$};
\node[point,scale=0.6,label=above:{\scriptsize $T$}] (1) at (0,0) {};
\node[point,scale=0.6,label=above:\scriptsize$T$] (2) at (1,0) {};
\node[point,scale=0.6,label=above:\scriptsize$A$,label=below:\scriptsize$z_2$] (3) at (2,0) {};
\node[point,scale=0.6,label=above:{\scriptsize $A$},label=below right:\!\scriptsize$z_1$] (4) at (3,0) {};
\node[point,scale=0.6,label=above:\scriptsize$F$,label=below:\scriptsize$z_0$] (5) at (4,0) {};
\node[point,scale=0.6,label=right:\!\scriptsize$T$] (6) at (3,-1) {};
\node[point,scale=0.6,label=above:\scriptsize$T$] (7) at (2,-1) {};
\draw[->,right] (1) to node[below] {\scriptsize $S$}  (2);
\draw[->,right] (2) to node[below] {\scriptsize $R$}  (3);
\draw[->,right] (3) to node[below] {\scriptsize $S$}  (4);
\draw[->,right] (4) to node[below] {\scriptsize $R$} (5);
\draw[->,right] (6) to node[right] {\scriptsize $R$}  (4);
\draw[->,right] (7) to node[above] {\scriptsize $S$}  (6);
\draw[thin,dashed,rounded corners=10] (4.3,.3) -- (4.3,-.3) -- (1.7,-.3) -- (1.7,.3) -- cycle;
\node[] at (3.7,.45) {$\mathfrak s_0$}; 
\draw[thin,dashed,rounded corners=10] (2.3,.4) -- (2.3,-.4) -- (-.3,-.4) -- (-.3,.4) -- cycle;
\node[] at (0,-.6) {$\mathfrak s_2$}; 
\draw[thin,dashed,rounded corners=10] (3.3,.4) -- (3.3,-1.3) -- (1.8,-1.3) -- (1.8,-.6) -- (2.7,-.6) -- (2.7,.4) -- cycle;
\node[] at (3.5,-1) {$\mathfrak s_1$}; 
\end{tikzpicture}}
\end{example}

%
In the remainder of this section, we establish a connection between boundedness of $(\Pi_\q,\G)$ and $(\Sigma_\q,P)$, for a 1-CQ $\q$, which requires a few  definitions.
Every cactus $\C \in \mathfrak K_\q$ has exactly one $F$-node. We call it the \emph{\rfoc} of $\C$ and denote it by $r$. By replacing the $F$-label of $r$ in $\C$ with $A$, we obtain a digraph $\C^\circ$; the set of all such $\C^\circ$, for $\C \in \mathfrak K_\q$, is denoted by $\mathfrak K_\q^\circ$. 
The following proposition is proved by a standard induction on the derivation length:
\begin{proposition}\label{cactuses}
For any data instance $\A$ and any $a \in \ind(\A)$,
\begin{itemize}
\item[--] $\G \in \Pi_\q(\A)$ iff there is a homomorphism from some cactus $\C \in \mathfrak K_\q$ to $\A$\textup{;}

\item[--] $P(a) \in \Sigma_\q(\A)$ iff either $T(a) \in \A$ or there is a homomorphism $h$ from some $\C^\circ \in \mathfrak K_\q^\circ$ to $\A$ such that $h(r) = a$.
\end{itemize}
\end{proposition}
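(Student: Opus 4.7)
The plan is to prove the second bullet first and then reduce the first to it by one extra application of rule~\eqref{one}. For each bullet I split the biconditional into a ``cactus $\Rightarrow$ derivation'' direction, which I prove by induction on the depth of $\C$ (equivalently, on the number of \textbf{(bud)} applications used to build it from $\C_\G$), and a ``derivation $\Rightarrow$ cactus'' direction, which I prove by induction on the length of the $\Sigma_\q$-derivation of $P(a)$. This is the ``standard induction on derivation length'' the authors hint at.

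For the ``if'' part of the second bullet, the base case is $\C=\C_\G$: given a homomorphism $h \colon \C^\circ \to \A$ with $h(r)=a$, we have $A(a),\q^-[h] \subseteq \A$ and $T(h(y_i)) \in \A$ for every solitary $y_i$, so rule~\eqref{two} derives $P(h(y_i))$ and then rule~\eqref{three} instantiated by $h$ derives $P(a)$. In the inductive step I decompose $\C^\circ$ into a root segment and, at each budded solitary $T(y_j)$, a strictly shallower sub-cactus $\C_j^\circ$ whose root image is $h(y_j)$. The inductive hypothesis yields $P(h(y_j))$ for each such $j$, rule~\eqref{two} yields $P(h(y_i))$ for each unbudded $i$, and rule~\eqref{three} applied at the root segment yields $P(a)$.

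For the ``only if'' part of the second bullet, if $P(a)$ is derived in one step it is by rule~\eqref{two} and $T(a)\in\A$, matching the first alternative. Otherwise the last rule fired is~\eqref{three} instantiated by some $h$ with $A(a)$ and $\q^-[h]$ in $\A$ and with each $P(h(y_i)) \in \Sigma_\q(\A)$ derived strictly earlier. Applying the inductive hypothesis to each $P(h(y_i))$ yields, per index, either $T(h(y_i))\in\A$ (in which case I leave $T(y_i)$ unbudded in the cactus I am building) or a sub-cactus $\C_i^\circ$ with a homomorphism sending its root to $h(y_i)$ (in which case I bud at $y_i$ with $\C_i$). Glueing these choices onto a fresh root segment produces a cactus in $\mathfrak K_\q$, and the union of $h$ with the various $h_i$'s is a homomorphism sending $r$ to $a$. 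The first bullet then follows immediately: since no rule of $\Pi_\q$ has $\G$ in its body, $\G\in\Pi_\q(\A)$ holds iff rule~\eqref{one} fires under some valuation $h$, and the premises $P(h(y_i))\in\Sigma_\q(\A)$ translate via the second bullet into exactly the assembly data of a cactus in $\mathfrak K_\q$ homomorphic to $\A$.

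The proof contains no deep obstacle, but the single book-keeping point I would verify carefully is that glueing the $h_i$'s yields a well-defined homomorphism from $\C^\circ$ to $\A$. The crucial feature of \textbf{(bud)} is that it renames all newly introduced variables to \emph{fresh} ones, so the copies of $\q$ attached at different solitary $T$-nodes share only the single budded variable $y_j$, which is precisely the variable being identified during glueing. Tracking this through the skeleton ditree $\C^s$ makes the argument go through without clash, and completes the routine induction.
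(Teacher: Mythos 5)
Your proof is correct and is exactly what the paper means when it says Proposition~\ref{cactuses} ``is proved by a standard induction on the derivation length'' and leaves the details out: mutual induction (bud-depth for ``cactus $\Rightarrow$ derivation'', derivation length for the converse), with the first bullet obtained from the second via a single application of rule~\eqref{one}. The only wrinkle worth noting is that the case split in the ``only if'' half is more naturally phrased by the last rule fired rather than by step count (when $n=0$, rule~\eqref{three} can also fire in one step), but your ``otherwise'' branch absorbs that case correctly, so the argument stands.
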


A 1-CQ $\q$ is called \emph{\foc ed} if the following condition holds:
\begin{itemize}
\item[\bf(foc)] for any cactuses $\C,\C'\in \mathfrak K_\q$, if there is a homomorphism $h \colon \C \to \C'$,  then $h(r) = r$.   
\end{itemize}
%
The significance of this notion is shown by Example~\ref{ex:one more} below, and by the following characterisation of boundedness; cf.~\cite{DBLP:conf/pods/Naughton86}:

\begin{proposition}\label{thmequi}
For every \foc{}ed 1-CQ $\q$ with solitary $F(x)$, $T(y_1),\dots,T(y_n)$, the following conditions are equivalent:
\begin{itemize}
\item[$(a)$] $(\Sigma_\q,P)$ is bounded;

\item[$(b)$] $(\Pi_\q,\G)$ is bounded; 

\item[$(c)$] there exists $d < \omega$ such that, for every $\mathcal{C} \in \mathfrak K_\q$, there is a homomorphism $h\colon \mathcal{C}' \to \mathcal{C}$, for some $\mathcal{C}' \in \mathfrak K_\q$ of depth $\le d$.
\end{itemize}
Conditions $(b)$ and $(c)$ are equivalent for every \textup{(}not necessarily focused\textup{)} 1-CQ $\q$, in which case $(a)$ is equivalent to $(c)$ with an additional requirement that $h(r) = r$.
\end{proposition}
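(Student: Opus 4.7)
The plan is to exploit Proposition~\ref{cactuses} together with the standard fact (recalled in the paper) that a bounded monadic datalog query is rewritable as a union of conjunctive queries, and with Naughton's characterisation of FO-rewritability via $Q$-expansions -- here, cactuses.

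I would first dispose of $(c)\Rightarrow(b)$: since every budding step adds exactly $n$ children (where $n$ is the number of solitary $T$-atoms of $\q$), the set of cactuses in $\mathfrak K_\q$ of depth $\le d$ is finite up to renaming of variables. Letting $\Phi$ be the UCQ whose disjuncts are the existential closures of these cactuses, Proposition~\ref{cactuses} together with $(c)$ immediately gives $\A\models\Phi$ iff $\G\in\Pi_\q(\A)$. For $(b)\Rightarrow(c)$, I would start from a UCQ rewriting $\bigvee_i\psi_i$ of $(\Pi_\q,\G)$. Freezing the variables of each disjunct $\psi_i$ to fresh constants produces a data instance $\A_i$ satisfying $\psi_i$, hence $\G\in\Pi_\q(\A_i)$, and so Proposition~\ref{cactuses} supplies a cactus $\C_i$ with a homomorphism $\C_i\to\psi_i$. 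Let $d$ be the maximum depth among these finitely many $\C_i$. For an arbitrary cactus $\C$, viewing it as data gives $\G\in\Pi_\q(\C)$ trivially, so some $\psi_i$ maps to $\C$; composing with $\C_i\to\psi_i$ yields $\C_i\to\C$ with $\C_i$ of depth $\le d$, which is $(c)$.

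For $(a)\Leftrightarrow(c')$, where $(c')$ is $(c)$ strengthened by $h(r)=r$, I would run the analogous argument for $(\Sigma_\q,P)$. From $(c')$ one forms the rewriting $\Phi(x)=T(x)\lor\bigvee \psi_{\C^\circ}(x)$, where $\C^\circ$ ranges over $\mathfrak K_\q^\circ$ of depth $\le d$ and the answer variable of each $\psi_{\C^\circ}$ is placed at the root-focus $r$. Conversely, from a UCQ rewriting $T(x)\lor\bigvee_i\psi_i(x)$ of $(\Sigma_\q,P)$ and an arbitrary $\C^\circ\in\mathfrak K_\q^\circ$, the identity homomorphism witnesses $P(r)\in\Sigma_\q(\C^\circ)$; since $r$ carries label $A$ rather than $T$ in $\C^\circ$, some $\psi_i$ maps to $\C^\circ$ sending its answer variable to $r$, and freezing $\psi_i$ with its answer variable as a fresh constant (not $T$-labelled, again by disregarding trivial $T(x)$-disjuncts) lets Proposition~\ref{cactuses} supply $\C_i^\circ\to\psi_i$ with roots aligned; composition then yields $\C_i^\circ\to\C^\circ$ with $r\mapsto r$. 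Finally, for focused $\q$, clause \textbf{(foc)} already forces every homomorphism between cactuses to map root to root, so $(c)$ and $(c')$ coincide, giving $(a)\Leftrightarrow(b)\Leftrightarrow(c)$ in that case.

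The main obstacle lies in the $(b)\Rightarrow(c)$ and $(a)\Rightarrow(c')$ directions: extracting from each disjunct $\psi_i$ of a UCQ rewriting a cactus witness (with the correct root in the unary case) of uniformly bounded depth. This rests on the freezing-to-data trick together with a careful normalisation of the rewriting that absorbs degenerate disjuncts (notably the $T(x)$-only disjuncts in the unary case) and appeals to the soundness of each $\psi_i$ against its own frozen instance. The remainder of the argument is routine homomorphism-composition bookkeeping.
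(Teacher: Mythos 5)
Your proof is correct, and the machinery is the same as the paper's (Proposition~\ref{cactuses}, UCQ rewritability of bounded queries, the freezing-to-data trick, and composition of homomorphisms), but the decomposition is genuinely different. The paper proves a three-way cycle $(a)\Rightarrow(b)\Rightarrow(c)\Rightarrow(a)$, where $(a)\Rightarrow(b)$ is a slick substitution step (plug the rewriting $\Phi(y_i)$ of $(\Sigma_\q,P)$ in for each $T(y_i)$ to get a rewriting of $(\Pi_\q,\G)$), and focusedness is invoked exactly once inside $(c)\Rightarrow(a)$. You instead prove $(b)\Leftrightarrow(c)$ and $(a)\Leftrightarrow(c')$ as independent modules and then collapse $(c)$ and $(c')$ via \textbf{(foc)}. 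Your route has the advantage of making the final sentence of the proposition (the claim for unfocused $\q$) fall out immediately, whereas the paper only sketches it; the price is that you need the extra bookkeeping of running the freeze-and-compose argument a second time for $(\Sigma_\q,P)$ with the root tracked, and you must normalise away degenerate $T(x)$-disjuncts, which you do correctly. Both proofs are sound; the paper's $(a)\Rightarrow(b)$ substitution is a shortcut you bypass.
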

\begin{proof}
$(a) \Rightarrow (b)$ If $\Phi(x)$ is an FO-rewriting of $(\Sigma_\q,P)$, then 
$$
\exists x,y_1,\dots,y_n,\avec{z} \, \big( F(x) \land \q' \land \Phi(y_1) \land \dots \land \Phi(y_n) \big)
$$
is an FO-rewriting of $(\Pi_\q,\G)$, where $\avec{z}$ comprises the variables in $\q'$ different from $x,y_1,\dots,y_n$.

$(b) \Rightarrow (c)$ Let $\exists \avec{y}\, (\q_1 \lor \dots \lor \q_m)$ be a UCQ-rewriting of $(\Pi_\q,\G)$, where the $\q_i$ are CQs and $\avec{y}$ comprises their variables. Treating the $\q_i$ as data instances, we obviously have $\G \in \Pi_\q(\q_i)$, and so, for every $i$, $1 \le i \le m$, there is a homomorphism from some $\C_i \in \mathfrak K_\q$ to $\q_i$. Let $d$ be the maximum depth of the $\C_i$, $i=1,\dots,m$. Consider any $\C \in \mathfrak K_\q$. Then there are homomorphisms $\C_i \to \q_i \to \C$, for some $i$, $1 \le i \le m$, the composition of which is the required $h$. 

$(c) \Rightarrow (a)$ By Prop.~\ref{cactuses} and (c), the sentence $\exists r,\avec{y}\,(\C_1 \lor \dots \lor \C_m)$, where the $\C_i$ are all of the cactuses of depth $\le d$ with root-\foc{} $r$ and the remaining variables $\avec{y}$, is an FO-rewriting of $(\Pi_\q,\G)$. We show that the formula
$
\Phi(r) = T(r) \lor \exists \avec{y}\, \big(\C^\circ_1 \lor \dots \lor \C^\circ_m\big)
$
is an FO-rewriting of $(\Sigma_\q,P)$. Let $P(a) \in \Sigma_\q(\A)$, for some $\A$ and $a \in \ind(\A)$. By Prop.~\ref{cactuses}, either $T(a) \in \A$, in which case $\A \models \Phi[a]$, or there is a homomorphism $h$ from some $\C^\circ \in \mathfrak K_\q$ to $\A$ such that $h(r) = a$. By (c), there is a homomorphism $g \colon \C_i \to \C$, for some $i \le m$. As $\q$ is focused, $g(r) = r$, and so we can regard $g$ as a $\C^\circ_i \to \C^\circ$ homomorphism. But then we obtain a homomorphism $hg \colon \C_i^\circ \to \D$ with $hg(r) = a$, from which $\D \models \exists \avec{y}\,  \C_i^\circ[a]$. That $\D \models \Phi[a]$ implies $P(a) \in \Sigma_\q(\A)$ is trivial.
\end{proof}

The next example illustrates the difference between focused and unfocused 1-CQs $\q$ as far as boundedness of $(\Pi_\q,\G)$ and $(\Sigma_\q,P)$ is concerned.

\begin{example}\label{ex:one more}\em
Consider the 1-CQ $\q_5$ from Example~\ref{ex-comple}. Let $\C_k$ be the cactus obtained by applying {\bf (bud)} $k$-times to $\C_0 = \q_5$. There are homomorphisms $h\colon \C_1 \to \C_k$, for $k \ge 2$, and so both $(\Pi_{\q_5},\G)$ and $(\Delta_{\q_5},\G)$ are rewritable to the UCQ $\C_0 \lor \C_1$. For each such $h$, we have $h(r) = r$, so $\q_5$ is focused and the sirup $(\Sigma_{\q_5},P)$ is bounded.

Now, consider the 1-CQ $\q_6$ below, where all of the arrows are labelled by $R$. 
It is not hard to see that, for every $\C' \in \mathfrak K_{\q_6}$ of depth $\ge 2$, there exist $\C\in \mathfrak K_{\q_6}$ of depth $\le 1$ and a homomorphism $h \colon \C \to \C'$,
so $(\Pi_{\q_6},\G)$ and $(\Delta_{\q_6},\G)$ are FO-rewritable. However, every such $h$ maps the root-focus $F$-node 
$r$ to an $FT$-node, and so $\q_6$ is not focused. In the picture below, $\C$ is obtained by budding at $t_0$,
and $\C'$ by budding first at $t_1$ and then at $t_0$.
Using Prop.~\ref{thmequi}, one can show that $(\Sigma_{\q_6},P)$ is not bounded.\\
\centerline{
\begin{tikzpicture}[>=latex,line width=0.8pt,rounded corners, scale = 0.6]
\node[] at (-3,0) {$\q_6$};
\node[point,scale = 0.6] (0) at (-1.5,0) {};
\node[point,scale = 0.6] (1) at (0,0) {};
\node[point,scale = 0.6] (m) at (1.5,0) {};
\node[point,scale = 0.6,label=above:{\scriptsize $F$}] (2) at (3,0) {};
\node[point,scale = 0.6] (3) at (4.5,0) {};
\node[point,scale = 0.6,label=above:{\scriptsize $FT$}] (4) at (6,0) {};
\node[point,scale = 0.6,label=above:{\scriptsize $T$},label=below:{\scriptsize $t_0$}] (5) at (7.5,0) {};
\node[point,scale = 0.6,label=above:{\scriptsize $T$},label=below:{\scriptsize $t_1$}] (6) at (9,0) {};
\draw[<-,right] (0) to node[below] {}  (1);
\draw[<-,right] (1) to node[below] {}  (m);
\draw[<-,right] (m) to node[below] {} (2);
\draw[<-,right] (2) to node[below] {} (3);
\draw[->,right] (3) to node[below] {} (4);
\draw[->,right] (4) to node[below] {} (5);
\draw[->,right] (5) to node[below] {} (6);
\end{tikzpicture}}

\centerline{
\begin{tikzpicture}[>=latex,line width=0.8pt,rounded corners, xscale = 0.7,yscale = 0.6]
\node[point,scale = 0.6,label=above left:{\scriptsize $F$}\!,label=below left:{\scriptsize $r$}\!] (aa8) at (4,3) {};
\node[point,scale = 0.6] (bb3) at (3,3) {};
\node[point,scale = 0.6] (bb2) at (2,3) {};
\node[point,scale = 0.6] (bb1) at (1,3) {};
\node[point,scale = 0.6] (bb4) at (5,4) {};
\node[point,scale = 0.6,label=above:{\scriptsize $FT$}] (bb5) at (6,3) {};
\node[point,scale = 0.6,label=above:{\scriptsize $A$}] (bb6) at (7,2) {};
\node[point,scale = 0.6,label=right:\!{\scriptsize $T$}] (bb7) at (8,2) {};
\node[point,scale = 0.6] (cc3) at (6,2) {};
\node[point,scale = 0.6] (cc2) at (5,2) {};
\node[point,scale = 0.6] (cc1) at (4,2) {};
\node[point,scale = 0.6] (cc4) at (8,3) {};
\node[point,scale = 0.6,label=above:{\scriptsize $FT$}] (cc5) at (9,3) {};
\node[point,scale = 0.6,label=above:{\scriptsize $T$}] (cc6) at (10,3) {};
\node[point,scale = 0.6,label=above:{\scriptsize $T$}] (cc7) at (11,3) {};
\draw[->] (bb4) to (aa8);
\draw[->] (aa8) to (bb3);
\draw[->] (bb3) to (bb2);
\draw[->] (bb2) to (bb1);
\draw[->] (bb4) to (bb5);
\draw[->] (bb5) to (bb6);
\draw[->] (bb6) to (bb7);
\draw[thin,dashed,rounded corners=10] (.7,2.35) -- (4,2.35) -- (5,3.5) -- (6,2.5) -- (7,1.5) -- (8.5,1.5) -- (8.5,2.7) -- (7,2.7) -- (5,4.7) -- (3.8,3.5) -- (.7,3.5) -- cycle;
\draw[->] (cc4) to (bb6);
\draw[->] (bb6) to (cc3);
\draw[->] (cc3) to (cc2);
\draw[->] (cc2) to (cc1);
\draw[->] (cc4) to (cc5);
\draw[->] (cc5) to (cc6);
\draw[->] (cc6) to (cc7);
\draw[dotted,rounded corners=6] (3.7,2.2) -- (6.3,2.2) -- (7.6,3.7) -- (11.4,3.7) -- (11.4,2.6) -- (8,2.6) -- (7,1.6) -- (3.7,1.6) -- cycle;

\node[point,scale = 0.6] (a1) at (0,0) {};
\node[point,scale = 0.6] (a2) at (1,0) {};
\node[point,scale = 0.6] (a3) at (2,0) {};
\node[point,scale = 0.6,label=above:{\scriptsize $F$},label=below:{\scriptsize $r$}] (a4) at (3,0) {};
\node[point,scale = 0.6] (a5) at (4,0) {};
\node[point,scale = 0.6,label=left:{\scriptsize $FT$}] (a6) at (4,-1) {};
\node[point,scale = 0.6,label=left:{\scriptsize $T$}] (a7) at (4,-2) {};
\node[point,scale = 0.6,label=above left:{\scriptsize $A$}\!] (a8) at (4,-3) {};
\node[point,scale = 0.6] (b3) at (3,-3) {};
\node[point,scale = 0.6] (b2) at (2,-3) {};
\node[point,scale = 0.6] (b1) at (1,-3) {};
\node[point,scale = 0.6] (b4) at (5,-2) {};
\node[point,scale = 0.6,label=above:{\scriptsize $FT$}] (b5) at (6,-3) {};
\node[point,scale = 0.6,label=above:{\scriptsize $A$}] (b6) at (7,-4) {};
\node[point,scale = 0.6,label=right:\!{\scriptsize $T$}] (b7) at (8,-4) {};
\node[point,scale = 0.6] (c3) at (6,-4) {};
\node[point,scale = 0.6] (c2) at (5,-4) {};
\node[point,scale = 0.6] (c1) at (4,-4) {};
\node[point,scale = 0.6] (c4) at (8,-3) {};
\node[point,scale = 0.6,label=above:{\scriptsize $FT$}] (c5) at (9,-3) {};
\node[point,scale = 0.6,label=above:{\scriptsize $T$}] (c6) at (10,-3) {};
\node[point,scale = 0.6,label=above:{\scriptsize $T$}] (c7) at (11,-3) {};
\draw[->] (a5) to (a4);
\draw[->] (a4) to (a3);
\draw[->] (a3) to (a2);
\draw[->] (a2) to (a1);
\draw[->] (a5) to (a6);
\draw[->] (a6) to (a7);
\draw[->] (a7) to (a8);
\draw[dotted,rounded corners=10] (-.3,.6) -- (4.3,.6) -- (4.3,-3.7) -- (3.3,-3.7) -- (3.3,-.6) -- (-.3,-.6) -- cycle;
\draw[->] (b4) to (a8);
\draw[->] (a8) to (b3);
\draw[->] (b3) to (b2);
\draw[->] (b2) to (b1);
\draw[->] (b4) to (b5);
\draw[->] (b5) to (b6);
\draw[->] (b6) to (b7);
\draw[thin,dashed,rounded corners=10] (.7,-3.5) -- (4,-3.5) -- (5,-2.5) -- (6,-3.5) -- (7,-4.5) -- (8.5,-4.5) -- (8.5,-3.3) -- (7,-3.3) -- (5,-1.3) -- (3.8,-2.5) -- (.7,-2.5) -- cycle;
\draw[->] (c4) to (b6);
\draw[->] (b6) to (c3);
\draw[->] (c3) to (c2);
\draw[->] (c2) to (c1);
\draw[->] (c4) to (c5);
\draw[->] (c5) to (c6);
\draw[->] (c6) to (c7);
\draw[dotted,rounded corners=6] (3.7,-3.8) -- (6.3,-3.8) -- (7.6,-2.3) -- (11.4,-2.3) -- (11.4,-3.4) -- (8,-3.4) -- (7,-4.4) -- (3.7,-4.4) -- cycle;
\draw[->,gray,very thick] (4.1,2.9) to [out=-55,in=90] (6,-2.5);
\draw[->,gray,very thick] (9,2.8) to [out=-80,in=80] (9,-2.5);
\draw[->,gray,very thick] (11,2.8) to [out=-80,in=80] (11,-2.5);
\node[gray] at (10,0) {$h$};
\node[] at (0,1) {$\C'$};
\node[] at (.2,3) {$\C$};
\end{tikzpicture}}
\end{example}


%
%



\section{Deciding boundedness of sirups}\label{sec:hardness}

In this section, we prove the following:
\begin{theorem}\label{thm:2-exp}
The problems of deciding boundedness of monadic sirups $(\Sigma_\q,P)$ and monadic d-sirups $(\Delta_\q,\G)$ are both 2\Exp-hard. 
\end{theorem}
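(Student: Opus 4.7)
The plan is to reduce from the acceptance problem for alternating Turing machines (ATMs) running in exponential space, which is $2\Exp$-complete by the Chandra--Kozen--Stockmeyer identity $\AExpSpace = 2\Exp$. Given an ATM $M$ and input $\w$, I would construct in polynomial time a focused 1-CQ $\q = \q_{M,\w}$ with one solitary $F$ and exactly two solitary $T$s such that $(\Sigma_\q,P)$ is bounded iff $M$ does not accept $\w$. Since $\Pi_\q$ and $\Delta_\q$ return the same certain answers on every data instance (Section~\ref{sec:prelims}), and since Proposition~\ref{thmequi} for a focused $\q$ makes $(\Sigma_\q,P)$, $(\Pi_\q,\G)$, and $(\Delta_\q,\G)$ all bounded together, the same $\q$ immediately yields hardness for $(\Delta_\q,\G)$ as well.

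By Proposition~\ref{thmequi}, boundedness is characterised by the existence of a depth bound $d$ such that every $\C \in \mathfrak K_\q$ admits a root-focus preserving homomorphism from some cactus of depth at most $d$. The two solitary $T$s make each application of \textbf{(bud)} create up to two children in the cactus skeleton, so cactus skeletons are binary trees into which the alternating computation tree of $M$ -- binary after standardisation -- can be embedded. Along any root-to-leaf branch of the skeleton the segments encode a sequence of configurations of $M$. Because each configuration has $2^{p(|\w|)}$ tape cells, dedicating one segment per cell would make $\q$ exponential; instead, each segment carries only the binary address (of length $p(|\w|)$) of the cell it represents, threaded through the segment via a chain of $\ct$/$\cf$ gadgets, together with the current tape symbol and, if relevant, the head/state information.

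The main technical innovation, and the hardest step, is encoding the transition relation of $M$, the binary-increment on cell addresses, and the copying of non-head cells between successive configurations inside the body of a \emph{single} CQ of polynomial size. Each such consistency condition is realised as a small Boolean circuit whose gates become atoms of $\q$ and whose wires become binary atoms connecting them; the circuits are arranged so that they open up a short-cut homomorphism -- collapsing the host cactus onto a bounded template -- \emph{precisely when} the corresponding local constraint is violated. Unlike classical hardness proofs for monadic datalog boundedness~\cite{DBLP:conf/lics/BenediktCCB15}, where each local check sits in its own rule or in its own disjunct of a UCQ rewriting, here all checks must coexist within one CQ body, which is the crux of the construction.

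The main obstacle I expect is the two-way analysis of the encoding. For the easy direction, if $M$ rejects $\w$, every cactus in $\mathfrak K_\q$ must collapse to bounded depth: one has to argue that any deviation from a valid accepting computation tree is caught by at least one embedded circuit, and that each such detection uniformly triggers a short-cut homomorphism onto a fixed small cactus. For the converse, if $M$ accepts $\w$, the cactus obtained by faithfully unravelling an accepting computation tree of $M$ on $\w$ has doubly exponential depth and, thanks to the circuits being consistent with it, admits no short-cut, so $(\Sigma_\q,P)$ is unbounded by Proposition~\ref{thmequi}. Verifying that the resulting $\q$ satisfies \textbf{(foc)} -- so that the root-focus preservation clause of Proposition~\ref{thmequi}(c) does no harm -- and that the transformation $(M,\w)\mapsto \q_{M,\w}$ is computable in polynomial time, completes the reduction and yields the theorem.
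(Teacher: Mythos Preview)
Your high-level plan matches the paper's: reduce from $\AExpSpace$ acceptance, build a focused 1-CQ with one solitary $F$ and two solitary $T$s, invoke Proposition~\ref{thmequi} to tie $(\Sigma_\q,P)$ and $(\Delta_\q,\G)$ together, and realise all local consistency checks as polynomial-size Boolean circuits living inside the single CQ body. That is exactly the paper's strategy, and your remark that this is the crux---packing into one rule what previous hardness proofs spread over many---is on point.

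Two places need tightening. First, your encoding sketch (``each segment carries the binary address of the cell it represents, threaded through the segment via a chain of $\ct$/$\cf$ gadgets'') is not what the paper does and, as stated, is problematic: every segment is just a copy of $\q$ with some $T$'s relabelled $A$, so the only variable information available at a segment is the \emph{skeleton} around it, namely the pattern of $0/1$ edges (which of $t_0,t_1$ was budded) on short up- and down-paths. The paper exploits this by representing each configuration not as a block of segments along a branch but as a full binary subtree of depth $\approx 4\dpoly$ of the skeleton, the $2^{\dpoly}$ leaves encoding the bits of the configuration; successive configurations then sit as sibling subtrees below a common main node, so that corresponding cells are reachable by polynomial-length down-paths from that node. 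This is what lets a single polynomial-size formula $\fdelta$ compare cell $i$ in $c$ with cell $i$ in $c_0,c_1$. Your linear ``address + symbol per segment'' layout gives no polynomial-size way to compare the same cell across successive configurations, since those segments would be exponentially far apart along a branch.

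Second, in the ``$M$ accepts'' direction you produce one cactus of doubly exponential depth with no short-cut; that does not by itself refute boundedness, which requires such cactuses of \emph{every} depth. The paper handles this by making halting configurations loop to themselves (so $\beta_\T^+$ is infinite) and then stacking copies of $\beta_{\T_{\textit{accept}}}^+$ to form an infinite ideal tree whose $(K{+}1)$-cut, for each $K$, is the desired counterexample. Finally, in the ``$M$ rejects'' direction the circuits must fire not only on \emph{incorrect} encodings but also on \emph{correct} encodings that reach $\qreject$; the paper's {\bf (leaf)} condition has exactly this disjunction, and the $(\Rightarrow)$ argument of Lemma~\ref{l:dagq} uses both alternatives.
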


Before diving into technical details, we put this theorem into the context of related work.


\subsection{Related results}

That deciding program boundedness of arbitrary monadic datalog queries can be done in 2\Exp{} was shown in 1988 using an automata-theoretic technique~\cite{DBLP:conf/stoc/CosmadakisGKV88}. A matching lower bound for monadic queries with multiple recursive rules was finally settled in 2015~\cite{DBLP:conf/lics/BenediktCCB15} using a construction from~\cite{DBLP:conf/icalp/BenediktBS12}, which is based on the encoding 
  of Turing machine computations from~\cite{DBLP:conf/mfcs/BjorklundMS08,DBLP:journals/acta/BjorklundMS18}. 
For monadic sirups, the \NP{} lower bound for the linear case~\cite{DBLP:conf/pods/Vardi88} has remained so far the best known result (though, in view of Prop.~\ref{thmequi} and the proof of~\cite[Theorem~9]{DBLP:conf/kr/GerasimovaKKPZ20}, it can be raised to \PSpace).

Establishing a higher lower bound for monadic sirups is difficult for two obvious reasons: monadicity  and singularity. The impact of arity and the number of recursive rules on deciding boundedness of datalog programs has been studied in great detail; see~\cite{DBLP:conf/pods/HillebrandKMV91,DBLP:journals/siamcomp/Marcinkowski99} and further references therein. For example, boundedness was shown to be undecidable first for linear datalog programs of arity 4~\cite{DBLP:conf/lics/GaifmanMSV87}, then for those of arity 2 with multiple recursive rules~\cite{DBLP:conf/pods/Vardi88}, which were encoded in a single rule at the expense of higher arity~\cite{DBLP:journals/ipl/Abiteboul89}; finally, boundedness was proved to be undecidable already for linear sirups of arity 3~\cite{DBLP:journals/siamcomp/Marcinkowski99}. 

Intuitively, the proofs of the lower bounds mentioned above use different rules in a datalog program in order to detect and exclude different `defects' in possible computations of a Turing machine. Our task in the proof of Theorem~\ref{thm:2-exp} will be to design such an encoding of computations that can be verified by a single CQ.


\subsection{Proof idea}\label{pidea}

To achieve this, similarly to~\cite{DBLP:conf/mfcs/BjorklundMS08, DBLP:journals/acta/BjorklundMS18, DBLP:conf/lics/BenediktCCB15,DBLP:conf/icalp/BenediktBS12, DBLP:journals/tocl/BenediktBGS20}, we represent computations of a Turing machine by means of annotated binary trees. The design of the tree-representation of computations is such that its structure can be connected with expansions (cactuses) of a given sirup via a series of small Boolean circuits, which is the main innovation of our construction.

%

More precisely, we use the criterion of Prop.~\ref{thmequi} for testing boundedness.
%
Our aim is, given any alternating Turing machine (ATM) $\atm$ deciding a language in $\AExpSpace=2\Exp$ and an input $\w$, to construct a (dag-shaped) \foc{}ed 1-CQ $\q$ of polynomial size such that the following holds:

\begin{lemma}\label{l:dagq}
$\atm$ rejects $\w$ iff there is $K<\omega$ such that every cactus $\C\in \mathfrak K_{\omqmx}$ contains a homomorphic image of some $\C^-\in \mathfrak K_{\omqmx}$ of depth at most $K$.
\end{lemma}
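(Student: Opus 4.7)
The plan is to design a dag-shaped focused 1-CQ $\q$ of polynomial size in $|\atm|+|\w|$ whose cactuses $\mathfrak K_\q$ serve as putative annotated binary computation trees of $\atm$ on $\w$. The budding mechanism at solitary $T$-nodes of $\q$ will mimic the existential/universal branching of $\atm$, so that the skeleton of a cactus corresponds to a fragment of a computation tree, with segments playing the role of nodes that carry a state, a head position, and a constant number of tape-cell bits. Since $\atm$ works in $\AExpSpace = 2\Exp$, a single configuration is doubly-exponential in $|\w|$, so I would encode tape addresses in binary along sub-branches of the skeleton, sharing variables via binary predicates to keep $|\q|$ polynomial. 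Small Boolean circuits, built into $\q$ out of chains of binary atoms and shared variables, will check locally that (i)~the root segment matches the initial configuration on $\w$, (ii)~each parent-child segment pair respects the transition relation of $\atm$, with tape contents correctly propagated across the sub-branch encoding of addresses, and (iii)~leaves land in accepting configurations. Because $\q$ will be focused, Prop.~\ref{thmequi} identifies the right-hand side of Lemma~\ref{l:dagq} with boundedness of $(\Sigma_\q, P)$ and of $(\Pi_\q, \G)$ simultaneously.

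The core invariant to establish is: a cactus $\C\in\mathfrak K_\q$ admits no homomorphism from any strictly shallower $\C'\in\mathfrak K_\q$ (call such a $\C$ \emph{rigid}) iff the segment structure of $\C$ faithfully encodes a correct accepting sub-computation of $\atm$ on $\w$. Granted this invariant, the backward direction of the lemma (the $\Leftarrow$ contrapositive: unbounded implies $\atm$ accepts $\w$) is immediate, and directly: if $\atm$ accepts $\w$, I would truncate a fixed accepting computation at depth $K$ to obtain, for each $K$, a rigid $\C_K\in\mathfrak K_\q$ blocking every homomorphism from a cactus of depth below $K$, so that no uniform bound $K$ can work.

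For the forward direction ($\atm$ rejects $\w$ $\Rightarrow$ bounded), I would argue that since no accepting computation exists, every cactus $\C\in\mathfrak K_\q$ must carry at least one \emph{defect}: a wrong transition between adjacent segments, a missing universal child at some segment, or a leaf at a non-accepting configuration. The Boolean circuits in $\q$ will be tuned so that each kind of defect lets certain segments of $\C$ be consistently identified with segments of a \emph{small} cactus $\C^-$, producing a homomorphism $\C^-\to\C$ that collapses $\C$ onto the defect. Since there are only finitely many defect types, a uniform bound $K$ on $\mathrm{depth}(\C^-)$ exists, giving the right-hand side of the lemma.

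The main obstacle, and the real technical work of the section, is the simultaneous satisfaction of three tight constraints on $\q$: polynomial size despite doubly-exponential configuration addresses, which is overcome by propagating address bits through the tree depth via shared variables rather than within a single segment; focusedness, which is needed for Prop.~\ref{thmequi} to transfer boundedness from $(\Pi_\q,\G)$ to the sirup $(\Sigma_\q,P)$; and the rigidity-versus-defect dichotomy, which must both collapse every defective cactus onto some bounded-depth witness and forbid spurious folds of any genuine-run cactus. Avoiding such spurious folds while still detecting every local defect with a single polynomial-size CQ is precisely what distinguishes this construction from prior multi-rule or UCQ-based encodings where different defect categories are allocated to different rules or disjuncts.
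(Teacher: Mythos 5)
Your plan matches the paper's overall strategy (cactus skeletons read as $01$-trees encoding ATM runs, defect-detecting circuits embedded in $\q$, focusedness plus Prop.~\ref{thmequi}), but two steps of the sketch do not yet work. First, ``finitely many defect types, therefore a uniform $K$'' is not an argument: what bounds $K$ at $\dexp + O(\dpoly)$ in the paper is a \emph{depth} claim, namely that any branch of a cactus either fails a local pattern test within the first $O(\dpoly)$ segments or, being locally consistent so far, establishes that it sits inside a configuration-tree encoding and must within the next $\dexp$ segments traverse an entire rejecting run through a $\qreject$-configuration. You need that quantitative statement, not just finiteness of the defect taxonomy. Second, the acceptance detection is framed backwards. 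You propose to check that ``leaves land in accepting configurations'' and to treat a non-accepting leaf as a defect, but cactus leaf segments are unbudded copies of $\qmxfa$ carrying no configuration data, and cactuses are cut at arbitrary depths rather than at halting configurations of $\atm$; the construction instead repeats halting configurations forever in the ideal trees and detects a $\qreject$-configuration appearing at an \emph{interior} segment. Dually, for the $(\Leftarrow)$ contrapositive you do not truncate a computation at depth $K$ --- you take a bounded cut, rooted at a main node, of the ideal tree built from $\beta_{\T_{\textit{accept}}}^+$, so that Claim~\ref{c:syc} certifies every interior segment.

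There is also a subtler missing piece: when you cut a long branch of $\C$ at a detected defect, the cut segment changes its segment type --- one budded $T$-node becomes unbudded, so $\q^-_{AA}$ turns into $\q^-_{AT}$ or $\q^-_{TA}$ --- and the homomorphism from the truncated cactus back into $\C$ therefore cannot be the identity on that segment. The paper's property {\bf (\locality)} forces the gadget detecting the missing child to pin the surviving $T$-node in place, which is exactly what lets the partial identity on the untouched segments extend to the modified ones. Your sketch has no such mechanism, yet the $(\Rightarrow)$ direction of the lemma requires it: you can delete a subtree, but you still have to reattach the stump homomorphically.
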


We represent both the computation space of $\atm$ on $\w$ and $\q$-cactuses by $01$-\emph{trees\/}:
binary ditrees whose edges are labelled by $0$ or $1$, with siblings having different labels.
On the one hand, we encode the computation space of $\atm$ in such a way that checking whether an arbitrary $01$-tree represents a rejecting computation-tree on $\w$ can be done by means of polynomially-many polynomial-size Boolean circuits (in fact, formulas). On the other hand, the 1-CQ $\q$ we associate with $\atm$ and $\w$ has two solitary $T$-nodes, $\tleft$ and $\tright$. Thus, we can regard the skeleton $\C^s$ of any cactus $\C \in \mathfrak K_{\omqmx}$ as a $01$-tree, indicating which of  $\tleft$ or $\tright$ were budded. The 1-CQ $\q$ is assembled from gadgets implementing the Boolean circuits used for checking the above properties of computations. 


%



\subsection{Connecting computations and cactuses}\label{strees}


\subsubsection{Encoding computations by $01$-trees}\label{comptrees}

We assume that we are given an ATM $\atm = (Q ,\Gamma ,\delta ,\qinit,\qaccept,\qreject ,g)$ with states $Q$ including $\qinit$, $\qaccept$, $\qreject$, tape alphabet $\Gamma$, transition function $\delta$, and $g\colon Q\to \{\land,\lor\}$.
For any input $\w\in\Gamma^\ast$,
a \emph{configuration} of $\atm$ is a triple containing information about the current state, the current position of the head, and the current content of the $2^{\ppoly(|\w|)}=2^{\ppoly}$ tape-cells, for some polynomial  $\ppoly$.
If its current state is $q$, then we call the configuration a $q$-\emph{configuration}. 
The \emph{full computation space} $\T_{\atm,\w}$ is a finite tree whose nodes are (labelled by) configurations, with its
root being the initial configuration $\cinit$ (in state $\qinit$ reading the leftmost symbol of $\w$),
the descendants generated by $\delta$,
and each leaf being either a $\qaccept$- or  a $\qreject$-configuration (a \emph{halting configuration}).
We assume that the depth of $\T_{\atm,\w}$ is $2^{\ppoly(|\w|)}$,
$\qinit,\qaccept,\qreject$ are $\lor$-states, every non-leaf has branching $2$, and $\land$- and $\lor$-configurations alternate on each branch.
A \emph{\otree{}} (\emph{of $\atm$ on $\w$}) is a substructure $\T$ of $\T_{\atm,\w}$, which is a tree with root $\cinit$ such that every non-leaf $\land$-node ($\lor$-node) in $\T$ has both   (respectively, exactly one) of its children from $\T_{\atm,\w}$ in $\T$.
The tree $\T$ is \emph{rejecting} if it has a $\qreject$-leaf and \emph{accepting} otherwise. 
$\atm$ \emph{rejects} $\w$ iff all \otree{}s of $\atm$ on $\w$ are rejecting, and \emph{accepts} $\w$ otherwise.

%

We encode a \otree{} $\T$ by an \emph{infinite} $01$-tree $\beta_\T^+$ via a series of steps as follows. 
First, by our assumption on binary branching, $\T$ can be considered as a (finite) $01$-tree $\beta_\T^0$
(with its nodes still labelled by configurations).
Next, we take the full binary `substructure' $\beta_\T^1$ of the $\lor$-configurations in $\beta_\T^0$ as shown below:\\
%
\centerline{
\setlength{\unitlength}{.075cm}
\begin{picture}(80,53)(-6,0)
\thicklines
\multiput(0,10)(10,0){4}{\circle*{.5}}
\multiput(5,20)(20,0){2}{\circle*{.5}}
\multiput(5,30)(20,0){2}{\circle*{.5}}
\put(15,40){\circle*{.5}}
\put(15,50){\circle*{.5}}
\put(15,41){\line(0,1){8}}
\put(6,31){\line(1,1){8}}
\put(24,31){\line(-1,1){8}}
\multiput(5,21)(20,0){2}{\line(0,1){8}}
\multiput(.5,11)(20,0){2}{\line(1,2){4}}
\multiput(9.5,11)(20,0){2}{\line(-1,2){4}}
\put(-1,13){$0$}
\put(9,13){$1$}
\put(19,13){$0$}
\put(29,13){$1$}
\put(6,34){$0$}
\put(22,34){$1$}
\put(12,44){$0$}
\put(6,24){$1$}
\put(22,24){$0$}
\put(-9,9){$\lor$}
\put(-9,19){$\land$}
\put(-9,29){$\lor$}
\put(-9,39){$\land$}
\put(-9,49){$\lor$}
\put(17,50){$\cinit$}
\put(17,40){$c_1$}
\put(-.5,30){$c_2$}
\put(27,30){$c_3$}
\put(-.5,20){$c_4$}
\put(27,20){$c_5$}
\put(-2,6.5){$c_6$}
\put(8,6.5){$c_7$}
\put(18,6.5){$c_8$}
\put(28,6.5){$c_9$}
\put(12,2){$\beta_\T^0$}
\put(38,25){\Large $\leadsto$}
\multiput(50,15)(10,0){4}{\circle*{.5}}
\multiput(55,25)(20,0){2}{\circle*{.5}}
\put(65,35){\circle*{.5}}
\put(56,26){\line(1,1){8}}
\put(74,26){\line(-1,1){8}}
\multiput(50.5,16)(20,0){2}{\line(1,2){4}}
\multiput(59.5,16)(20,0){2}{\line(-1,2){4}}
\put(49,18){$0$}
\put(59,18){$1$}
\put(69,18){$0$}
\put(79,18){$1$}
\put(56,29){$0$}
\put(72,29){$1$}
\put(66,37){$\cinit$}
\put(49.5,25){$c_2$}
\put(77,25){$c_3$}
\put(48,11.5){$c_6$}
\put(58,11.5){$c_7$}
\put(68,11.5){$c_8$}
\put(78,11.5){$c_9$}
\put(62,6){$\beta_\T^1$}
\end{picture}
}\\ 
(So the depth  of $\beta_\T^1$ is $2^{\ppoly-1}$.) 
%
The information about which child of each $\lor$-configuration is taken in $\beta_\T^0$ is provided in the encoding of
the subsequent $\lor$-configuration. To achieve this,
we fine-tune the `configurations-as-binary-tree-leaves' representation 
of~\cite{DBLP:conf/mfcs/BjorklundMS08,DBLP:journals/acta/BjorklundMS18} for our purpose. 
Let $\dpoly=\dpoly(|\w|)>\ppoly(|\w|)=\ppoly$ be a polynomial in $|\w|$ such that configurations can be encoded by a $01$-sequence of length $2^{\dpoly}$. We represent each $\lor$-configuration $c$ by the $01$-sequence\\
\centerline{
\begin{tikzpicture}[nd/.style={draw,thick,circle,inner sep=0pt,minimum size=1.5mm,fill=white},xscale=1.15,>=latex]
\node (1) at (0.5,0.15) {\scriptsize state $q$};
\node (1') at (0.5,-0.2) {\scriptsize $\log |Q|$};
\node (2) at (1.8,0.15) {\scriptsize cell content $t_1$};
\node (2b) at (1.07,0.15) {\scriptsize \textbf{0}};
\node (2') at (1.8,-0.2) {\scriptsize $\log |\Gamma|$};
\node (3) at (3.25,0.15) {\scriptsize cell content $t_2$};
\node (3b) at (2.55,0.15) {\scriptsize \textbf{0}};
\node (3') at (3.2,-0.2) {\scriptsize $\log |\Gamma|$};
\node (1) at (4.3,0.15) {\dots};
\node (4) at (5.4,0.15) {\scriptsize active cell $t_k$};
\node (4b) at (4.75,0.15) {\scriptsize \textbf{1}};
\node (2') at (5.4,-0.2) {\scriptsize $\log |\Gamma|$};
\node (1) at (6.4,0.15) {\dots};
\node (last) at (6.9,0.15) {\scriptsize \textbf{0}/\textbf{1}};
\node (0) at (0.5,0.35) {\ };
\draw[thick,gray,-] (0,0) -- (7.1,0);
\slin{0,0}{}{};
\slin{0.95,0}{}{};
\slin{2.42,0}{}{};
\slin{3.9,0}{}{};
\slin{4.65,0}{}{};
\slin{6,0}{}{};
\slin{6.7,0}{}{};
\slin{7.1,0}{}{};
\end{tikzpicture}}\\
where the last bit 
is 0 (1) iff $c$'s parent $\land$-configuration is a $0$-child ($1$-child) of its parent. (By imposing some restrictions on $Q$ and $\Gamma$, one can  ensure that, given a $2^{\dpoly}$-long $01$-sequence, it is `easy' to locate the 
first bit of each `cell-representation' in it.) 
We encode the digits of this sequence as the leaves of a $01$-tree $\gamma_c^0$ of depth $\dpoly+1$ by taking first a full binary
tree of depth $d$, and for each of its $2^{\dpoly}$ leaves, taking a $\ast$-child whenever the corresponding digit in
the sequence is $\ast$. 
(Throughout, we use $\ast$ in $01$-sequences as a wildcard for $0$ or $1$.)
Finally, we turn $\gamma_c^0$ to a $01$-tree $\gamma_c$ of depth $4\dpoly+4$
by adding an incoming edge-pattern $111$ above each node:\\
%
%
\centerline{
\setlength{\unitlength}{.05cm}
\begin{picture}(165,125)(5,-2)
\thicklines
\put(5,50){$c=0110\quad\leadsto$}
\multiput(55,35)(10,00){4}{\circle*{.5}}
\multiput(55,45)(10,00){4}{\circle*{.5}}
\multiput(60,55)(20,00){2}{\circle*{.5}}
\put(70,65){\circle*{.5}}
\multiput(55,36)(10,0){4}{\line(0,1){8}}
\multiput(55.5,46)(20,0){2}{\line(1,2){4}}
\multiput(64.5,46)(20,0){2}{\line(-1,2){4}}
\put(61,56){\line(1,1){8}}
\put(79,56){\line(-1,1){8}}
\put(50.5,38){$0$}
\put(60.5,38){$1$}
\put(70.5,38){$1$}
\put(80.5,38){$0$}
\put(52.5,48){$0$}
\put(64,48){$1$}
\put(72.5,48){$0$}
\put(84,48){$1$}
\put(59,59){$0$}
\put(78,59){$1$}
\put(67,22){$\gamma_c^0$}
\put(100,50){$\leadsto$}
\put(115,15){$\gamma_c$}
\multiput(120,90)(0,10){4}{\circle*{.5}}
\multiput(130,80)(5,-10){8}{\circle*{.5}}
\multiput(110,80)(5,-10){8}{\circle*{.5}}
\multiput(120,40)(5,-10){4}{\circle*{.5}}
\multiput(140,40)(5,-10){4}{\circle*{.5}}
\multiput(135,0)(10,0){4}{\circle*{.5}}
\multiput(135,1)(10,0){4}{\line(0,1){8}}
\multiput(134.5,11)(10,0){4}{\line(-1,2){4}}
\multiput(129.5,21)(10,0){4}{\line(-1,2){4}}
\multiput(124.5,31)(10,0){4}{\line(-1,2){4}}
\multiput(120.5,41)(20,0){2}{\line(1,2){4}}
\multiput(129.5,41)(20,0){2}{\line(-1,2){4}}
\multiput(124.5,51)(20,0){2}{\line(-1,2){4}}
\multiput(119.5,61)(20,0){2}{\line(-1,2){4}}
\multiput(114.5,71)(20,0){2}{\line(-1,2){4}}
\put(111,81){\line(1,1){8}}
\put(129,81){\line(-1,1){8}}
\multiput(120,91)(0,10){3}{\line(0,1){8}}
\put(121,113){$1$}
\put(121,103){$1$}
\put(121,93){$1$}
\put(127,84){$1$}
\put(134,74){$1$}
\put(139,64){$1$}
\put(144,54){$1$}
\put(149,44){$1$}
\put(154,34){$1$}
\put(159,24){$1$}
\put(164,14){$1$}
\put(160.5,4){$0$}
\put(114,74){$1$}
\put(119,64){$1$}
\put(124,54){$1$}
\put(129,44){$1$}
\put(134,34){$1$}
\put(139,24){$1$}
\put(144,14){$1$}
\put(140.5,4){$1$}
\put(110,84){$0$}
\put(118,44){$0$}
\put(138,44){$0$}
\put(124,34){$1$}
\put(129,24){$1$}
\put(134,14){$1$}
\put(130.5,4){$0$}
\put(144,34){$1$}
\put(149,24){$1$}
\put(154,14){$1$}
\put(150.5,4){$1$}
\end{picture}}
%
%
We call $\gamma_c$ a $c$-\emph{tree} (or, a \emph{configuration-tree\/}, in general).
%

Next, we take the full binary $01$-tree $\beta_\T^1$ above (whose nodes are labelled by $\lor$-configurations),
and turn it to a $01$-tree $\beta_\T$ (now without node labels) as follows. 
We add an incoming edge-pattern $0010$ above the root, stick the root of a $c$-tree to each node labelled by some $c$,
and add an outgoing edge-pattern $001$ below each node before branching; see Fig.~\ref{fig:comptree}.
Note that $\beta_\T$ is of depth $\dexp=\dexp(|\w|)$, for some exponential function $\dexp$. 
For any configuration $c$, if the $c$-tree $\gamma_c$ is a substructure of $\beta_{\T}$,
then we call the root node of $\gamma_c$ a \emph{\mnode{}} (\emph{of $c$}) and say that it \emph{represents $c$ in} $\beta_{\T}$; see $\bullet$-nodes in Fig.~\ref{fig:comptree}.
%

\begin{figure}[ht]
\centering
\setlength{\unitlength}{.045cm}
\begin{picture}(135,140)
\multiput(5,50)(10,0){4}{\circle*{.5}}
\multiput(10,65)(20,0){2}{\circle*{.5}}
\put(20,80){\circle*{.5}}
\put(22,80){$\cinit$}
\put(2,65){$c_2$}
\put(32,65){$c_3$}
\put(3,44){$c_6$}
\put(13,44){$c_7$}
\put(23,44){$c_8$}
\put(33,44){$c_9$}
\put(15,30){$\beta_\T^1$}
\put(9,70){$0$}
\put(28,70){$1$}
\put(3,55){$0$}
\put(14,55){$1$}
\put(22,55){$0$}
\put(34,55){$1$}
\put(55,65){\Large $\leadsto$}
\multiput(70,20)(15,0){2}{\circle*{3}}
\multiput(105,20)(15,0){2}{\circle*{3}}
\multiput(80,60)(30,0){2}{\circle*{3}}
\multiput(80,30)(0,10){3}{\circle*{.5}}
\multiput(110,30)(0,10){3}{\circle*{.5}}
\multiput(95,70)(0,10){8}{\circle*{.5}}
\put(95,100){\circle*{3}}
\put(100,80){\line(1,0){35}}
\put(100,80){\line(-1,4){5}}
\put(135,80){\line(-2,1){40}}
\put(101,85){$\gamma_{\cinit}$}
\multiput(60,40)(55,0){2}{\line(1,0){15}}
\put(60,40){\line(1,1){20}}
\put(75,40){\line(1,4){5}}
\put(115,40){\line(-1,4){5}}
\put(130,40){\line(-1,1){20}}
\put(67,44.5){$\gamma_{c_2}$}
\put(115,44.5){$\gamma_{c_3}$}
\multiput(70,0)(50,0){2}{\line(0,1){20}}
\multiput(55,0)(20,0){2}{\line(1,0){15}}
\multiput(100,0)(20,0){2}{\line(1,0){15}}
\put(55,0){\line(3,4){15}}
\put(75,0){\line(1,2){10}}
\put(90,0){\line(-1,4){5}}
\put(100,0){\line(1,4){5}}
\put(115,0){\line(-1,2){10}}
\put(135,0){\line(-3,4){15}}
\put(62,5){$\gamma_{c_6}$}
\put(80,5){$\gamma_{c_7}$}
\put(103,5){$\gamma_{c_8}$}
\put(121,5){$\gamma_{c_9}$}
\put(90,134){$0$}
\put(90,124){$0$}
\put(90,114){$1$}
\put(90,104){$0$}
\put(90,94){$0$}
\put(90,84){$0$}
\put(90,74){$1$}
\put(82,64){$0$}
\put(105,64){$1$}
\put(82,53){$0$}
\put(82,43){$0$}
\put(82,33){$1$}
\put(105,53){$0$}
\put(105,43){$0$}
\put(105,33){$1$}
\put(69,24){$0$}
\put(84,24){$1$}
\put(102,24){$0$}
\put(117,24){$1$}
\thicklines
\multiput(5.3,51)(20,0){2}{\line(1,3){4}}
\multiput(14.7,51)(20,0){2}{\line(-1,3){4}}
\put(10.5,65.75){\line(2,3){9}}
\put(29.5,65.75){\line(-2,3){9}}
\multiput(95,71)(0,10){7}{\line(0,1){8}}
\multiput(80,31)(0,10){3}{\line(0,1){8}}
\multiput(110,31)(0,10){3}{\line(0,1){8}}
\put(80,60){\line(3,2){14}}
\put(110,60){\line(-3,2){14}}
\put(70,20){\line(1,1){9}}
\put(85,20){\line(-1,2){4.5}}
\put(105,20){\line(1,2){4.5}}
\put(120,20){\line(-1,1){9}}
\put(135,25){$\beta_\T$}
\end{picture}
\caption{The $01$-tree $\beta_\T$.}\label{fig:comptree}
\end{figure}
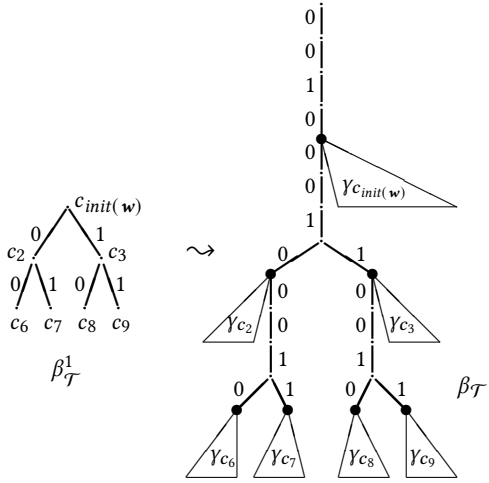

%

We also consider an infinite `version' of $\beta_\T$. 
We obtain the infinite $01$-tree $\beta_\T^+$ from $\beta_\T$ by repeatedly sticking the following pattern to the \mnode{} of each halting configuration $c$:\\
%
%
%
\centerline{
\setlength{\unitlength}{.045cm}
\begin{picture}(40,63)
\multiput(15,20)(10,0){2}{\circle*{3}}
\put(20,60){\circle*{3}}
\multiput(20,30)(0,10){3}{\circle*{.5}}
\multiput(15,0)(10,0){2}{\line(0,1){20}}
\multiput(0,0)(25,0){2}{\line(1,0){15}}
\put(0,0){\line(3,4){15}}
\put(40,0){\line(-3,4){15}}
\put(7,5){$\gamma_c$}
\put(27,5){$\gamma_c$}
\put(15,53){$0$}
\put(15,43){$0$}
\put(15,33){$1$}
\put(12,23){$0$}
\put(24,23){$1$}
\thicklines
\multiput(20,31)(0,10){3}{\line(0,1){8}}
\put(15,20){\line(1,2){4.5}}
\put(25,20){\line(-1,2){4.5}}
\end{picture}}
%
In other words, we assume $\delta$ to be such that after reaching a halting configuration $c$, $c$ is repeated forever on every branch of $\T_{\atm,\w}$.

An infinite $01$-tree $\beta$ is \emph{ideal} if it can be constructed by starting with $\beta_{\T_0}^+$, for some \otree{} $\T_0$, and then by repeatedly attaching to each of the leaves (that must be leaves of some configuration-tree) the root of some $\beta_{\T}^+$, where each $\T$ can be any computation-tree.
%

Observe that each branch of an ideal tree is infinite.
We are interested in finite `middle-bits' of ideal trees.
We call a subtree of an ideal tree having a \mnode{} (of not necessarily $\cinit$) as root a \emph{\whatisit{} tree\/}. 
Given some $M<\omega$ and a $01$-tree $\beta$, by an $M$-\emph{cut} of $\beta$ we mean
the $01$-tree obtained by cutting all longer than $M$ branches in $\beta$ at depth $M$.
%
The pretty baroque design above ensures that there is a polynomial list of polynomially detectable properties that identify 
\whatisit{} trees
up to isomorphism (see Claim~\ref{c:syc} below). In the next subsection, we discuss these properties.


\subsubsection{Characterising exponential computations polynomially}\label{ss:check}

We investigate certain polynomial neighbourhoods of nodes in $01$-trees, and collect a polynomial list of their properties that fully characterise those situations that can occur in a \whatisit{} tree.
For each of the properties $P$ below, in Sec.~\ref{s:bf}
we describe in detail how to give a small Boolean circuit $\varphi_P$ having specific \emph{\itype} such that, when evaluated at a node $\nodea$ of some $01$-tree $\beta$,
$P$ fails at $\nodea$ iff there is some $01$-sequence $\avec{b}$ such that $\avec{b}$ is gathered from the neighbourhood of $\nodea$ in $\beta$ according to the \itype{} of $\varphi_P$ and $\varphi_P[\avec{b}]=1$.

Given $n<\omega$ and a node $\nodea$ of depth $\leq n$ in a $01$-tree $\beta$, for any $k\leq n$, we denote by 
$\upp_\nodea^k$ 
the $k$-long suffix of the path ending at $\nodea$ in $\beta$.
To begin with,
observe that every path  in a \whatisit{} tree that is longer than $4\dpoly+6$ must contain a \mnode, and \mnode s can be identified by the 
property `the path leading to the node ends with a $001{\ast}$-pattern'\!. 
So,
given a node $\nodea$ in a $01$-tree $\beta$, we say that $\nodea$ \emph{is \good{} in} $\beta$,
if either the depth of $\nodea$ in $\beta$ is $<4\dpoly+11$, or 
$\upp_\nodea^{4\dpoly+11}$ contains a $001{\ast}$-pattern;
see Sec.~\ref{sgood}.

Next, we describe proper branching-patterns in a \whatisit{} tree. It is easy to see that if
the path leading to a node $\nodea$ does contain a $001{\ast}$-pattern, then 
there exist unique $k$, $\ell$ and $w$ such that $4\le k\le 4\dpoly+11$, $\upp_\nodea^k=001{\ast}(111{\ast})^\ell w$, and
either $\ell\le\dpoly$ and $w$ is a prefix of $001$, or $\ell<\dpoly$ and $w$ is a prefix of $111$.
Moreover, $\ell$ and $w$ characterise the children of $\nodea$.
We call $\nodea$ \emph{\pbr{} in} $\beta$ if the following conditions {\bf (pb1)}--{\bf (pb4)}
hold:
\begin{enumerate}
\item[{\bf (pb1)}]
if either $w$ is empty and $\ell=0$, 
or $w=001$, 
or  $w=111$ and $\ell<\dpoly-1$, then $\nodea$ has two children;

\item[{\bf (pb2)}]
if either $w$ is empty and $0<\ell<\dpoly$, 
or $w=1$, or $w=11$, or $w=00$, 
then $\nodea$ has no $0$-child;

\item[{\bf (pb3)}]
if either $w$ is empty and $\ell=\dpoly$, 
or $w=0$, 
then $\nodea$ has no $1$-child;

\item[{\bf (pb4)}]
if $w=111$ and $\ell=\dpoly-1$, 
then $\nodea$ has only one child;
\end{enumerate}
see Sec.~\ref{sbr}.
Note that leaves are never \pbr.

Next, we ensure that the `building-block' computation-trees in an ideal tree are properly represented in a $01$-tree $\beta$
(provided that all of its nodes are properly branching).
First, after each leaf of a configuration-tree, the representation of a proper computation-tree from $\cinit$ should start.
In such a case, the \mnode{} $\nodea$ of 
$\cinit$ can be identified by an incoming path ending with a $111{\ast} 001{\ast}$-pattern.
Then detecting whether the $c$-tree with root $\nodea$  is not a $\cinit$-tree requires checking polynomial information.
We call $\nodea$ \emph{\pinit{} in} $\beta$ if whenever the depth of $\nodea$ in $\beta$ is $\geq 8$,
$P_\nodea^8$ is of the form $111{\ast} 001{\ast}$, and $\nodea$ is the root of a $c$-tree, then
$c=\cinit$;
see Sec.~\ref{sinit}.
%
%
Second, the computation steps described by $\delta$ should be properly represented.
We call $\nodea$ \emph{\pcomp{} in} $\beta$ if, whenever
the following pattern is present at $\nodea$ in $\beta$\\
%
%
%
\centerline{
\setlength{\unitlength}{.045cm}
\begin{picture}(40,70)
\put(13,62){$\nodea$}
\put(25,40){\line(1,0){15}}
\put(25,40){\line(-1,4){5}}
\put(40,40){\line(-1,1){20}}
\put(25,45){$\gamma_{c}$}
\multiput(15,20)(10,0){2}{\circle*{3}}
\put(20,60){\circle*{3}}
\multiput(20,30)(0,10){3}{\circle*{.5}}
\multiput(15,0)(10,0){2}{\line(0,1){20}}
\multiput(0,0)(25,0){2}{\line(1,0){15}}
\put(0,0){\line(3,4){15}}
\put(40,0){\line(-3,4){15}}
\put(6.5,5){$\gamma_{c_0}$}
\put(27,5){$\gamma_{c_1}$}
\put(15,53){$0$}
\put(15,43){$0$}
\put(15,33){$1$}
\put(12,23){$0$}
\put(24,23){$1$}
\thicklines
\multiput(20,31)(0,10){3}{\line(0,1){8}}
\put(15,20){\line(1,2){4.5}}
\put(25,20){\line(-1,2){4.5}}
\end{picture}}
%
then the triple $(c,c_0,c_1)$ of $\lor$-configurations `matches' the transition function $\delta$ of $\atm$.
In order to detect that this is not the case, one needs to check polynomial information `around' $\nodea$ in $\beta$;
see Sec.~\ref{sdelta}.

We call $\nodea$ \emph{\syc{} in} $\beta$ if $\nodea$ is \good{}, \pbr, \pinit{} and \pcomp{} in $\beta$.
Otherwise, $\nodea$ is called \emph{\syic{} in} $\beta$.
%
Now it is straightforward to show that the collected properties of $\nodea$-neighbourhoods characterise \whatisit{} trees:

\begin{lclaim}\label{c:syc}
For any $M<\omega$, any $01$-tree $\beta$ and any node $\nodea$ with $\upp_\nodea^4=001{\ast}$,
the $M$-cut $\beta_\nodea^M$ of the subtree of $\beta$ with root $\nodea$ is isomorphic to the $M$-cut of a \whatisit{} tree iff every node of depth $<M$ in $\beta_\nodea^M$ is \syc{} in $\beta_\nodea^M$. 
\end{lclaim} 

We also need to detect the presence of nodes representing $\qreject$-configurations in computation-trees;
see Sec.~\ref{sreject}.
%
%


\subsubsection{Cactus homomorphisms vs rejecting computations}\label{s:connect}

As our \mbox{1-CQ} $\q$ will have one solitary $F$-node and two solitary $T$-nodes $\tleft$ and $\tright$, 
there are four possible kinds of non-root segments in any cactus $\C\in \mathfrak K_{\omqmx}$ denoted $\qmxfa$, $\q^-_{AT}$,  $\q^-_{TA}$ and $\q^-_{AA}$. For example, $\qmxfa$ is obtained by replacing the $F$-label of the solitary $F$-node in $\qmx$ by $A$; in cactuses different from $\q$, leaf segments  are of this form. The segment $\q^-_{TA}$ is obtained by replacing both the $F$-label of the solitary $F$-node and the $T$-label of the $\tright$-node by $A$. As $\q$ itself does not contain $A$,
if $h \colon \C \to \C'$ is a homomorphism, for some $\C,\C'\in \mathfrak K_{\omqmx}$, then the \foc{} of every
non-root
segment $\mathfrak s$ in $\C$ (labelled by $A$) is mapped by $h$ to the \foc{} of 
some non-root segment $\mathfrak s'$ in $\C'$.
Our $\q$ will also satisfy {\bf (foc)}: for every homomorphism
$h \colon \C \to \C'$ between cactuses $\C,\C'\in \mathfrak K_{\omqmx}$, 
$h$ maps the only solitary $F$-node in $\C$ (the \foc{} of its root segment) to the only solitary $F$-node in $\C'$.
So we say that \emph{$h$ maps} a segment $\mathfrak s$ into a segment $\mathfrak s'$ if $h$ maps the \foc{} of $\mathfrak s$ to the \foc{} of $\mathfrak s'$.

Now the proof of Theorem~\ref{thm:2-exp} can be completed as follows: Using Claim~\ref{c:syc}, we prove in Appendix~\ref{a:lproof}  that  to obtain Lemma~\ref{l:dagq} it suffices to construct a 1-CQ $\q$ such that {\bf (foc)} holds  and, for every $\C\in \mathfrak K_{\omqmx}$,
\begin{description}
\item[(leaf)] there is a homomorphism $h \colon \qmxfa \to \C$ mapping $\qmxfa$ into some non-leaf segment $\mathfrak s$ of $\C$ iff either 
$\mathfrak s$ is \syic{} or $\mathfrak s$ 
represents a $\qreject$-configuration in the skeleton $\C^s$ of $\C$;

\item[(\locality)] if $h$ maps $\qmxfa$ into a non-leaf segment $\mathfrak s$ that is not \pbr{} in $\C^s$ due to violating 
{\bf (pb1)},
but $\mathfrak s$ is 
\syc{} in $\C^s$ according to the other properties, then
\begin{itemize}
\item[--]
$h(\tleft)=\tleft$ and $h(\tright) \ne \tleft$, if $\mathfrak s=\q^-_{TA}$;
\item[--]
$h(\tright)=\tright$ and $h(\tleft) \ne \tright$, if $\mathfrak s=\q^-_{AT}$.
\end{itemize}
\end{description}
%
After defining the focused 1-CQ $\q$ in Secs.~\ref{qstructure}--\ref{iblocks}, we show in Sec.~\ref{s:qok} that, for every $\C\in \mathfrak K_{\omqmx}$, {\bf (leaf)} and {\bf (\locality)} are satisfied, completing the proof of Lemma~\ref{l:dagq}.


\subsection{Boolean formulas}\label{s:bf}

We describe polynomially-many polynomial-size Boolean circuits (in fact, Boolean formulas) that
test the (failure) of the properties of a node $\nodea$ in a $01$-tree $\beta$, given in Sec.~\ref{ss:check}.
For each such formula, we also define some \emph{\itype\/}, describing where around the tested node $\nodea$ the input $01$-sequence for the formula should be `gathered' from. In defining the \itype{} we use the following terminology:
for $n<\omega$, the $n$-\emph{long uppath} (\emph{of $\nodea$ in $\beta$}) is the reverse of the $n$-long suffix of the path ending at $\nodea$ in $\beta$; 
while an $n$-\emph{long downpath} is the $n$-long prefix of some path starting at $\nodea$ in $\beta$.

\subsubsection{Checking \good ness}\label{sgood}

One can clearly define a Boolean formula $\fgood(x_1,\dots,x_{4\dpoly+11})$ such that, for any $4\dpoly+11$-long $01$-sequence $\avec{b}$, $\fgood[\avec{b}]=1$ iff $\avec{b}$ does not contain the reverse of a $001{\ast}$-pattern.
The input should be gathered from the $4\dpoly+11$-long uppath.


\subsubsection{Checking proper branching-patterns}\label{sbr}

For each of conditions 
{\bf (pb1)}--{\bf (pb4)} in Sec.~\ref{ss:check}, we have a different family of formulas.

{\bf (pb1)}
For every $k$ with $4\le k\le 4\dpoly+11$,
we define a Boolean formula 
$\fbr^k(x_1,\dots,x_k)$ such that, 
for any $k$-long $01$-sequence $\avec{b}=(b_1,\dots,b_k)$, 
$\varphi^k[\avec{b}]=1$ iff 
$\avec{b}$ is the reverse of a sequence of the form $001{\ast}(111{\ast})^\ell w$, where
either $w$ is empty and $\ell=0$, or $w=001$, or  $w=111$ and $\ell<\dpoly-1$.
For example, 
if $k=4$ then
we have
\[
\fbr^{4}[\avec{b}]=1\quad  \mbox{iff}\quad \avec{b}\mbox{ is the reverse of }001{\ast}.
\]
The input should be gathered from the $k$-long uppath.

{\bf (pb2)}
For every $k$ with $4\le k\le 4\dpoly+11$, we define a Boolean formula 
$\fnobr^k_0(x_1,\dots,x_{k+1})$ such that, 
for any $k+1$-long $01$-sequence $\avec{b}=(b_1,\dots,b_{k+1})$,
$\fnobr^k_0[\avec{b}]=1$ iff $b_{k+1}=0$ and 
$(b_1,\dots,b_k)$ is the reverse of a sequence of the form $001{\ast}(111{\ast})^\ell w$, where 
either $w$ is empty and $0<\ell<\dpoly$, or $w=1$, or $w=11$, or $w=00$.
The input for $(x_1,\dots,x_k)$ should be gathered from the $k$-long uppath, and for $x_{k+1}$ from a $1$-long downpath.

{\bf (pb3)}
For every $k$ with $4\le k\le 4\dpoly+11$, we define a polynomial size Boolean formula 
$\fnobr^k_1(x_1,\dots,x_{k+1})$ such that, 
for any $k+1$-long $01$-sequence $\avec{b}=(b_1,\dots,b_{k+1})$,
$\fnobr^k_1[\avec{b}]=1$ iff $b_{k+1}=1$ and 
$(b_1,\dots,b_k)$ is the reverse of a sequence of the form $001{\ast}(111{\ast})^\ell w$, where
either $w$ is empty and $\ell=\dpoly$, or $w=0$.
The input for $(x_1,\dots,x_k)$ should be gathered from the $k$-long uppath, and for $x_{k+1}$ from a $1$-long downpath.

{\bf (pb4)}
For every $k$ with $4\le k\le 4\dpoly+11$, we define 
a polynomial size Boolean formula $\fnobr^k(x_1,\dots,x_{k+2})$ such that, 
for any $k+2$-long $01$-sequence $\avec{b}=(b_1,\dots,b_{k+2})$, 
$\fnobr^k[\avec{b}]=1$ iff $b_{k+1}\ne b_{k+2}$ and 
$(b_1,\dots,b_k)$ is the reverse of a sequence of the form $001{\ast}(111{\ast})^\ell w$, where
$w=111$ and $\ell=\dpoly-1$.
The input for $(x_1,\dots,x_k)$ should be gathered from the $k$-long uppath, and for each of $x_{k+1}$ and $x_{k+2}$ from a $1$-long downpath.


\subsubsection{Checking proper computation steps}\label{sdelta}


This is an adaptation of the technique 
of ~\cite{DBLP:conf/mfcs/BjorklundMS08,DBLP:journals/acta/BjorklundMS18} to our representation. Suppose that
\begin{itemize}
\item[--]
$n_Q$ is such that $n_Q$-long $01$-sequences are in one-to-one correspondence with the states in $Q$,
\item[--]
$n_\Gamma$ is such that $(n_\Gamma-1)$-long $01$-sequences are in one-to-one correspondence with the symbols in $\Gamma$,
\item[--]
$n_Q+2^{\ppoly}\cdot n_\Gamma+1=2^{\dpoly}$
\end{itemize}
(see the picture in Sec.~\ref{comptrees} on representing configurations with $2^{\dpoly}$-long $01$-sequences).

First, we define a Boolean formula $\fhead(x_1,\dots,x_{4(\dpoly+1)})$
such that, for any $4(\dpoly+1)$-long $01$-sequence $\avec{b}$, 
$\fhead[\avec{b}]=1$ iff $\avec{b}$ describes a path in a \whatisit{} tree starting at a \mnode{} with $1$,
and ending at the first bit of the representation of some cell-content of some configuration $c$
(that is, it is the $(n_Q+i\cdot n_\Gamma+1)$th bit of the $01$-sequence representing $c$, for some $i<2^{\ppoly}$).
Similarly, for ${\ast}=0,1$, we define a Boolean formula $\fhead^\ast(x_1,\dots,x_{4(\dpoly+1)+4})$
such that, for any $4(\dpoly+1)+4$-long $01$-sequence $\avec{b}$, 
$\fhead^\ast[\avec{b}]=1$ iff $\avec{b}$ describes a path in a \whatisit{} tree starting at a \mnode{} with $001{\ast}1$,
and ending at the first bit of the representation of some cell-content of some configuration.

Next,  we define a Boolean formula 
\[
\fscell(x_1,\dots,x_{4(\dpoly+1)},y_1,\dots,y_{4(\dpoly+1)+4},z_1,\dots,z_{4(\dpoly+1)+4})
\]
such that, for any $4(\dpoly+1)$-long $01$-sequence $\avec{b}$ and $4(\dpoly+1)+4$-long $01$-sequences $\avec{b}^0$, 
$\avec{b}^1$, we have 
$\fscell[\avec{b},\avec{b}^0,\avec{b}^1]=1$ iff $\fhead[\avec{b}]=1$, $\fhead^\ast[\avec{b}^\ast]=1$, and the three paths $\avec{b},\avec{b}^0,\avec{b}^1$ end at the same cell of a configuration and its two children-configurations. (The number $i$ of this cell, for some $i<2^{\ppoly}$, can be identified from $\avec{b}$.)
 
Next,  we define a Boolean formula $\fstate(x_1,\dots,x_{4(\dpoly+1)\cdot n_Q})$ such that, for any $4(\dpoly+1)$-long $01$-sequences $\avec{b}_1,\dots,\avec{b}_{n_Q}$, 
\[
\fstate[\avec{b}_1,\dots,\avec{b}_{n_Q}]=1
\]
 iff for every $j\le n_Q$, $\avec{b}_j$ describes a $4(\dpoly+1)$-long path in a \whatisit{} tree starting at a \mnode{} with $1$,
and ending at the $j$th bit of the representation of some configuration $c$.
(So, whenever $\avec{b}_j=(b_j^1,\dots,b_j^{4(\dpoly+1)})$ for each $j\le n_Q$, then 
$(b_1^{4(\dpoly+1)},\dots,b_{n_Q}^{4(\dpoly+1)})$ encodes a state in $Q$.)
Similarly, for ${\ast}=0,1$, we define a Boolean formula $\fstate^\ast(x_1,\dots,x_{(4(\dpoly+1)+4)\cdot n_Q})$
such that, for any $4(\dpoly+1)+4$-long $01$-sequences $\avec{b}_1\dots,\avec{b}_{n_Q}$, 
$\fstate^\ast[\avec{b}_1\dots,\avec{b}_{n_Q}]=1$ iff for every $j\le n_Q$, $\avec{b}_j$ describes a path in a \whatisit{} tree starting at a \mnode{} with $001{\ast}1$,
and ending at the $j$th bit of the representation of some configuration.

Next,  we define a Boolean formula $\fcell(x_1,\dots,x_{4(\dpoly+1)\cdot n_\Gamma})$
such that, for any $4(\dpoly+1)$-long $01$-sequences $\avec{b}_1\dots,\avec{b}_{n_\Gamma}$, 
\[
\fcell[\avec{b}_1\dots,\avec{b}_{n_\Gamma}]=1
\]
 iff there is $i<2^{\ppoly}$ such that, for every $j\le n_\Gamma$, $\avec{b}_j$ describes a path in a \whatisit{} tree starting at a \mnode{} with $1$,
and ending at the $j$th bit of the representation of the $i$th cell's content in some configuration.
(So, $\fhead[\avec{b}_1]=1$, and
if $\avec{b}_j=(b_j^1,\dots,b_j^{4(\dpoly+1)})$ for each $j\le n_\Gamma$, then 
$(b_2^{4(\dpoly+1)},\dots,b_{n_\Gamma}^{4(\dpoly+1)})$ encodes a symbol in $\Gamma$.)
Similarly, for ${\ast}=0,1$, we define a polynomial size Boolean formula 
$\fcell^\ast(x_1,\dots,x_{(4(\dpoly+1)+4)\cdot n_\Gamma})$
such that, for any $4(\dpoly+1)+4$-long $01$-sequences $\avec{b}_1\dots,\avec{b}_{n_\Gamma}$, 
$\fcell^\ast[\avec{b}_1\dots,\avec{b}_{n_\Gamma}]=1$ iff there is $i<2^{\ppoly}$ such that,  for every $j\le n_\Gamma$, $\avec{b}_j$ describes a path in a \whatisit{} tree starting at a \mnode{} with $001{\ast}1$,
and ending at the $j$th bit of the representation of the $i$th cell's content in some configuration
(In particular, $\fhead^\ast[\avec{b}_1]=1$.)

Next, for $z\in\{0,1\}$, we take the following tuples of variables:
\begin{itemize}
\item[--]
$\avec{s}=(s_1,\dots,s_{4(\dpoly+1)\cdot n_Q})$, which is to be gathered from
$n_Q$-many \mbox{$4(\dpoly+1)$}-long downpaths (representing the $\lor$-state in $c$);

\item[--]
$\avec{v}=(v_1,\dots,v_{4(\dpoly+1)\cdot n_\Gamma})$, which needs to be gathered from $n_\Gamma$-many $4(\dpoly+1)$-long downpaths (representing the active cell's content in $c$);

\item[--]
$\avec{s}^0=(s_1^0,\dots,s_{(4(\dpoly+1)+4)\cdot n_Q}^0)$, $\avec{s}^1=(s_1^1,\dots,s_{(4(\dpoly+1)+4)\cdot n_Q}^1)$, each of which needs to be gathered from $n_Q$-many $4(\dpoly+1)+4$-long downpaths (representing the $\lor$-states in $c_0,c_1$);

\item[--]
$\avec{t}^z=(t_1^z,\dots,t_{4(\dpoly+1)\cdot n_\Gamma}^z)$, $\avec{t}^{z0}=(t_1^{z0},\dots,t_{(4(\dpoly+1)+4)\cdot n_\Gamma}^{z0})$, and $\avec{t}^{z1}=(t_1^{z1},\dots,t_{(4(\dpoly+1+4))\cdot n_\Gamma}^{z1})$, for $z\in\{0,1\}$, where
$\avec{t}^z$ is to be gathered from $n_\Gamma$-many $4(\dpoly+1)$-long downpaths, and each of $\avec{t}^{z0}$ and $\avec{t}^{z1}$
is to be gathered from $n_\Gamma$-many $4(\dpoly+1)+4$-long downpaths
($\avec{t}^z$, $\avec{t}^{z0}$, $\avec{t}^{z1}$ represent the $i$th cell's contents in $c,c_0,c_1$, for some $i<2^{\ppoly}$, when the $z$ $\land$-child of $c$ is taken in $\T_{\atm,x}$);

\item[--]
$\avec{z}^0=(z_1^0,\dots,z_{4(\dpoly+1)+4}^0)$ and $\avec{z}^1=(z_1^1,\dots,z_{4(\dpoly+1)+4}^1)$, each of which needs to be gathered from
a $4(\dpoly+1)+4$-long downpath (representing the respective bits identifying the parent $\land$-configuration of $c_0$ and $c_1$).
\end{itemize}
For $z\in\{0,1\}$,
we can define a Boolean formula $\fdelta^z$ such that,
 for any $01$-sequence
$\avec{b}=(\avec{s},\avec{v},\avec{s}^0,\avec{s}^1,\avec{t},\avec{t}^{0},\avec{t}^{1},\avec{z}^0,\avec{z}^1)$,
$\fdelta^z[\avec{b}]=1$ iff 
$\avec{z}^0=001011\dots 1z$, $\avec{z}^1=001111\dots 1z$,  
$\fstate[\avec{s}]=1$, 
$\fcell[\avec{v}]=1$, 
$\fstate^0[\avec{s}^0]=1$, 
$\fstate^1[\avec{s}^1]=1$, 
$\fcell[\avec{t}]=1$, 
$\fcell^0[\avec{t}^{0}]=1$, 
$\fcell^1[\avec{t}^{1}]=1$, and
\[
\fscell[t_1,\dots,t_{4(\dpoly+1)},t_1^{0},\dots,t_{4(\dpoly+1)+4}^{0},t_1^{1},\dots,t_{4(\dpoly+1)+4}^{1}]=1,
\]
%
%
but the information provided by $\avec{b}$ is inconsistent with the transition function $\delta$ in the sense that 
when the $z$ $\land$-child of $c$ is chosen  as the common parent of $c_0$ and $c_1$ in the computation-tree $\T$, the content-triple of the $i$th cells of $c,c_0,c_1$ is wrong, where $i$ is identified from the input in 
$(t_1,\dots,t_{4(\dpoly+1)},t_1^{0},\dots,t_{4(\dpoly+1)+4}^{0},t_1^{1},\dots,t_{4(\dpoly+1)+4}^{1})$.

Finally, we define
$\fdelta(\avec{s},\avec{v},\avec{s}^0,\avec{s}^1,\avec{t}^0,\avec{t}^{00},\avec{t}^{01},\avec{t}^1,\avec{t}^{10},\avec{t}^{11},\avec{z}^0,\avec{z}^1)$
as the disjunction 
\begin{multline*}
\fdelta^0(\avec{s},\avec{v},\avec{s}^0,\avec{s}^1,\avec{t}^0,\avec{t}^{00},\avec{t}^{01},\avec{z}^0,\avec{z}^1)\;\lor\\
\fdelta^1(\avec{s},\avec{v},\avec{s}^0,\avec{s}^1,\avec{t}^1,\avec{t}^{10},\avec{t}^{11},\avec{z}^0,\avec{z}^1).
\end{multline*}
It is not hard to see that there is $\avec{b}$ with $\fdelta[\avec{b}]=1$ iff the information about the configuration-triple $(c,c_0,c_1)$ encoded in $\avec{b}$ is inconsistent with $\delta$.


\subsubsection{Checking proper initialisation}\label{sinit}

We take the following tuples of variables:
\begin{itemize}
\item[--]
$\avec{y}=(y_1,\dots,y_8)$, which is to be gathered from the $8$-long uppath (representing the last $8$-bits of the path leading to the \mnode{} of a configuration $c$);

\item[--]
$\avec{s}=(s_1,\dots,s_{4(\dpoly+1)\cdot n_Q})$, to be gathered from $n_Q$-many $4(\dpoly+1)$-long downpaths (representing the state in $c$);
\item[--] 
$\avec{w}^j=(w_1^j,\dots,w_{4(\dpoly+1)\cdot n_\Gamma}^j)$, for $j<|\w|$,  each of which needs to be gathered from $n_\Gamma$-many $4(\dpoly+1)$-long downpaths
(representing the contents of the first $|\w|$-many cells in $c$);

\item[--]
$\avec{t}=(t_1,\dots,t_{4(\dpoly+1)\cdot n_\Gamma})$, which needs to be gathered from $n_\Gamma$-many $4(\dpoly+1)$-long downpaths (representing the contents of some cell in $c$).
\end{itemize}
%
Then we can define  a Boolean formula $\finit$ such that,
 for any $01$-sequence $\avec{b}=(\avec{y},\avec{s},\avec{w}^1,\dots,\avec{w}^{|\w|},\avec{t})$,
 $\finit[\avec{b}]=1$ iff 
 $\avec{y}$ is the reverse of some pattern $111{\ast} 001{\ast}$,
 $\fstate[\avec{s}]=1$, 
  for all $1\le j\le |x|$, 
 $\fcell[\avec{w}^j]=1$ and $(w_1^j,\dots,w_{4(\dpoly+1)}^j)$ ends at the $(n_Q+(j-1)\cdot n_\Gamma+1)$th bit of the $01$-sequence representing configurations, 
$\fcell[\avec{t}]=1$,
 but the information provided by $\avec{b}$ is inconsistent with $c$ being $\cinit$ (at the cell identified by the prefix 
 $(t_1,\dots t_{4(\dpoly+1)})$ of $\avec{t}$). 


\subsubsection{Representing $\qreject$-configurations}\label{sreject}

This can clearly be done by a formula $\freject(s_1,\dots,s_{4(\dpoly+1)\cdot n_Q})$ for which 
$\freject[\avec{s}]=1$ iff $\fstate[\avec{s}]=1$ and the sequence 
\[
(s_{4(\dpoly+1)},s_{4(\dpoly+1)\cdot 2},...,s_{4(\dpoly+1)\cdot n_Q})
\]
encodes $\qreject$.
The input should be gathered from $n_Q$-many $4(\dpoly+1)$-long downpaths.


\subsection{Query design}\label{qdesign}

The dag-shaped $1$-CQ $\q$ having one solitary $F$-node, two solitary $T$-nodes and many $FT$-twins will be such that 
properties {\bf (foc)}, {\bf (leaf)} and {\bf (\locality)} given in Sec.~\ref{s:connect} hold, for all $\C,\C'\in \mathfrak K_{\omqmx}$.


\subsubsection{Overall query structure}\label{qstructure}

\begin{figure*}[ht]
   \centering
%
%
%
\begin{tikzpicture}[decoration={brace,mirror,amplitude=7},line width=1pt,scale =.54]
\node[point,scale=0.6,label = left:$F$] (1) at (0,0) {};
\node[point,scale=0.6,label = left:$\xi$] (2) at (0,-2) {};
\node[point,scale=0.6,label = above left:$\alpha$] (3) at (0,-4) {};
\node[point,scale=0.6] (4) at (-.5,-6) {};
\node[point,scale=0.6,label = left:$\tleft$,label = below:$T$] (5) at (-1,-8) {};
\node[point,scale=0.6] (6) at (.5,-6) {};
\node[point,scale=0.6,label = right:$\tright$,label = below:$T$] (7) at (1,-8) {};
\node[point,scale=0.6,label = above:$W$] (8) at (2,0) {};
\node[point,scale=0.6,label = above:$\ \ \xi'$] (9) at (4,-2) {};
\draw[->,line width=.4pt] (1) to (2);
\draw[->,line width=.4pt] (2) to (3);
\draw[->,line width=.4pt] (3) to (4);
\draw[->,line width=.4pt,bend left=10] (4) to (5);
\draw[->,line width=.4pt,bend right=20] (4) to node[left] {$S$}  (5);
\draw[->,line width=.4pt,bend left=20] (3) to node[right] {$S$}  (6);
\draw[->,line width=.4pt,bend right=10] (3) to (6);
\draw[->,line width=.4pt] (6) to (7);
\draw[->,line width=.4pt] (1) to (8);
\draw[->,line width=.4pt] (2) to (8);
\draw[->,line width=.4pt,bend right=10] (9) to (8);
\draw[->,line width=.4pt] (1) to (9);
\node (m1) at (1.5,-3) {};
\draw[->] (m1) to node[right] {$R_\ga$}  (3);
\draw[ultra thick,rounded corners=9] (2) -- (1.5,-1.3) -- (2.5,-2) -- (2,-3) -- (m1) -- (.3,-2.6) -- (2); 
\node[point,fill=white,scale=0.6,label = right:$\ \rnode$] (m1) at (1.5,-3) {};
\node[]  at (1.3,-2.3) {$\fga$};
\node (m2) at (5.5,-3) {};
\draw[ultra thick,rounded corners=9] (9) -- (5.5,-1.3) -- (6.5,-2) -- (6,-3) -- (m2) -- (4.3,-2.6) -- (9); 
\node[point,fill=white,scale=0.6,label = right:$\ \rnode'$] (m1) at (5.5,-3) {};
\node[]  at (5.3,-2.3) {$\fga'$};
\node[point,scale=0.6,label = above left:$\fnode$] (g3) at (4,-4) {};
\draw[->] (m2) to node[right] {$R_\ga$}  (g3);
\draw[->] (9) to (g3);
\node[point,scale=0.6] (g4) at (2.5,-6) {};
\node[point,scale=0.6] (g6) at (4.5,-6) {};
\node[point,scale=0.6,label = below:$FT$] (g7) at (5,-8) {};
\node[point,scale=0.6,label = below:$\Unode$] (i4) at (2,-4.5) {};
\draw[->] (3) to (i4);
\draw[->] (g3) to (i4);
\draw[->,bend right=10] (g3) to (g4);
\draw[->,bend left=20] (g3) to node[right] {$S$}  (g4);
\draw[->] (g4) to (7);
\draw[->] (g3) to (g6);
\draw[->,bend right=10] (g6) to (g7);
\draw[->,bend left=20] (g6) to node[right] {$S$}  (g7);
\node[point,scale=0.6,label = below:$\Unode$] (i3) at (-1.5,-4.5) {};
\node[point,scale=0.6,label = below left:$\inode\!\!$] (i2) at (-3,-4) {};
\node (i1) at (-3,-2) {};
\draw[->] (i1) to node[right] {$\!R_\ga$}  (i2);
\draw[->] (i2) to (i3);
\draw[->] (3) to (i3);
\draw [ultra thick] (-3,-1.2) circle [radius=.8];
\node[]  at (-3,-1.2) {$\iga$};
\node[point,fill=white,scale=0.6,label = below left:$\pnode\!\!$] (i1) at (-3,-2) {};
\node[]  at (-3,-1.2) {$\iga$};
\draw [thin,dashed] (-4,-10.2) rectangle (6.6,1.8);
\node[]  at (-2.5,1) {type $AT$};
\end{tikzpicture}
\begin{tikzpicture}[decoration={brace,mirror,amplitude=7},line width=1pt,scale =.54]
\node[point,scale=0.6,label = left:$F$] (1) at (0,0) {};
\node[point,scale=0.6,label = left:$\xi$] (2) at (0,-2) {};
\node[point,scale=0.6,label = above left:$\alpha$] (3) at (0,-4) {};
\node[point,scale=0.6] (4) at (-.5,-6) {};
\node[point,scale=0.6,label = left:$\tleft$,label = right:$\!T$] (5) at (-1,-8) {};
\node[point,scale=0.6] (6) at (.5,-6) {};
\node[point,scale=0.6,label = right:$\tright$,label = left:$T\!$] (7) at (1,-8) {};
\node[point,scale=0.6,label = above:$W$] (8) at (2,0) {};
\node[point,scale=0.6,label = above:$\ \ \xi'$] (9) at (4,-2) {};
\draw[->,line width=.4pt] (1) to (2);
\draw[->,line width=.4pt] (2) to (3);
\draw[->,line width=.4pt] (3) to (4);
\draw[->,line width=.4pt,bend left=10] (4) to (5);
\draw[->,line width=.4pt,bend right=20] (4) to node[left] {$S$}  (5);
\draw[->,line width=.4pt,bend left=20] (3) to node[right] {$S$}  (6);
\draw[->,line width=.4pt,bend right=10] (3) to (6);
\draw[->,line width=.4pt] (6) to (7);
\draw[->,line width=.4pt] (1) to (8);
\draw[->,line width=.4pt] (2) to (8);
\draw[->,line width=.4pt,bend right=10] (9) to (8);
\draw[->,line width=.4pt] (1) to (9);
\node (m1) at (1.5,-3) {};
\draw[->] (m1) to node[right] {$R_\ga$}  (3);
\draw[ultra thick,rounded corners=9] (2) -- (1.5,-1.3) -- (2.5,-2) -- (2,-3) -- (m1) -- (.3,-2.6) -- (2); 
\node[point,fill=white,scale=0.6,label = right:$\ \rnode$] (m1) at (1.5,-3) {};
\node[]  at (1.3,-2.3) {$\fga$};
\node (m2) at (5.5,-3) {};
\draw[ultra thick,rounded corners=9] (9) -- (5.5,-1.3) -- (6.5,-2) -- (6,-3) -- (m2) -- (4.3,-2.6) -- (9); 
\node[point,fill=white,scale=0.6,label = right:$\ \rnode'$] (m1) at (5.5,-3) {};
\node[]  at (5.3,-2.3) {$\fga'$};
\node[point,scale=0.6,label = above left:$\fnode$] (g3) at (4,-4) {};
\draw[->] (m2) to node[right] {$R_\ga$}  (g3);
\draw[->] (9) to (g3);
\node[point,scale=0.6] (g4) at (3,-8) {};
\node[point,scale=0.6] (g6) at (4.5,-6) {};
\node[point,scale=0.6,label = below:$FT$] (g7) at (5,-8) {};
\node[point,scale=0.6,label = below:$\Unode$] (i4) at (2,-4.5) {};
\draw[->] (3) to (i4);
\draw[->] (g3) to (i4);
\draw[->] (g3) to (g4);
\draw[->,bend right=10] (g3) to (g6);
\draw[->,bend left=20] (g3) to node[right] {$S$} (g6);
\draw[->] (g6) to (g7);
\node[point,scale=0.6,label = below:$\Unode$] (i3) at (-1.5,-4.5) {};
\node[point,scale=0.6,label = below left:$\inode\!\!$] (i2) at (-3,-4) {};
\node (i1) at (-3,-2) {};
\draw[->] (i1) to node[right] {$\!R_\ga$}  (i2);
\draw[->] (i2) to (i3);
\draw[->] (3) to (i3);
\draw [ultra thick] (-3,-1.2) circle [radius=.8];
\node[]  at (-3,-1.2) {$\iga$};
\node[point,fill=white,scale=0.6,label = below left:$\pnode\!\!$] (i1) at (-3,-2) {};
\draw[->,bend left=50] (g4) to (5);
\draw[->,bend left=90] (g4) to node[below] {$S$} (5);
\draw [thin,dashed] (-4,-10.2) rectangle (6.6,1.8);
\node[]  at (-2.5,1) {type $TA$};
\end{tikzpicture}
\begin{tikzpicture}[decoration={brace,mirror,amplitude=7},line width=1pt,scale =.55]
\node[point,scale=0.6,label = left:$F$] (1) at (0,0) {};
\node[point,scale=0.6,label = left:$\xi$] (2) at (0,-2) {};
\node[point,scale=0.6,label = above left:$\alpha$] (3) at (0,-4) {};
\node[point,scale=0.6] (4) at (-.5,-6) {};
\node[point,scale=0.6,label = left:$\tleft$,label = below:$T$] (5) at (-1,-8) {};
\node[point,scale=0.6] (6) at (.5,-6) {};
\node[point,scale=0.6,label = right:$\tright$,label = below:$T$] (7) at (1,-8) {};
\node[point,scale=0.6,label = above:$W$] (8) at (2,0) {};
\node[point,scale=0.6,label = above:$\ \ \xi'$] (9) at (4,-2) {};
\draw[->,line width=.4pt] (1) to (2);
\draw[->,line width=.4pt] (2) to (3);
\draw[->,line width=.4pt] (3) to (4);
\draw[->,line width=.4pt,bend left=10] (4) to (5);
\draw[->,line width=.4pt,bend right=20] (4) to node[left] {$S$}  (5);
\draw[->,line width=.4pt,bend left=20] (3) to node[right] {$S$}  (6);
\draw[->,line width=.4pt,bend right=10] (3) to (6);
\draw[->,line width=.4pt] (6) to (7);
\draw[->,line width=.4pt] (1) to (8);
\draw[->,line width=.4pt] (2) to (8);
\draw[->,line width=.4pt,bend right=10] (9) to (8);
\draw[->,line width=.4pt] (1) to (9);
\node (m1) at (1.5,-3) {};
\draw[->] (m1) to node[right] {$R_\ga$}  (3);
\draw[ultra thick,rounded corners=9] (2) -- (1.5,-1.3) -- (2.5,-2) -- (2,-3) -- (m1) -- (.3,-2.6) -- (2); 
\node[point,fill=white,scale=0.6,label = right:$\ \rnode$] (m1) at (1.5,-3) {};
\node[]  at (1.3,-2.3) {$\fga$};
\node (m2) at (5.5,-3) {};
\draw[ultra thick,rounded corners=9] (9) -- (5.5,-1.3) -- (6.5,-2) -- (6,-3) -- (m2) -- (4.3,-2.6) -- (9); 
\node[point,fill=white,scale=0.6,label = right:$\ \rnode'$] (m1) at (5.5,-3) {};
\node[]  at (5.3,-2.3) {$\fga'$};
\node[point,scale=0.6,label = above left:$\fnode$] (g3) at (4,-4) {};
\draw[->] (m2) to node[right] {$R_\ga$}  (g3);
\draw[->] (9) to (g3);
\node[point,scale=0.6] (g6) at (4.5,-6) {};
\node[point,scale=0.6,label = below:$FT$] (g7) at (5,-8) {};
\node[point,scale=0.6,label = below:$\Unode$] (i4) at (2,-4.5) {};
\draw[->] (3) to (i4);
\draw[->] (g3) to (i4);
\draw[->,bend right=10] (g3) to (g6);
\draw[->,bend left=20] (g3) to node[right] {$S$}  (g6);
\draw[->,bend right=10] (g6) to (g7);
\draw[->,bend left=20] (g6) to node[right] {$S$}  (g7);
\node[point,scale=0.6,label = below:$\Unode$] (i3) at (-1.5,-4.5) {};
\node[point,scale=0.6,label = below left:$\inode\!\!$] (i2) at (-3,-4) {};
\node (i1) at (-3,-2) {};
\draw[->] (i1) to node[right] {$\!R_\ga$}  (i2);
\draw[->] (i2) to (i3);
\draw[->] (3) to (i3);
\draw [ultra thick] (-3,-1.2) circle [radius=.8];
\node[]  at (-3,-1.2) {$\iga$};
\node[point,fill=white,scale=0.6,label = below left:$\pnode\!\!$] (i1) at (-3,-2) {};
\draw [thin,dashed] (-4,-10.2) rectangle (6.6,1.6);
\node[]  at (-2.5,1) {type $AA$};
\end{tikzpicture}
   \caption{Frames of type $AT$, $TA$ and $AA$.}
   \label{fig:gadgets}
\end{figure*}
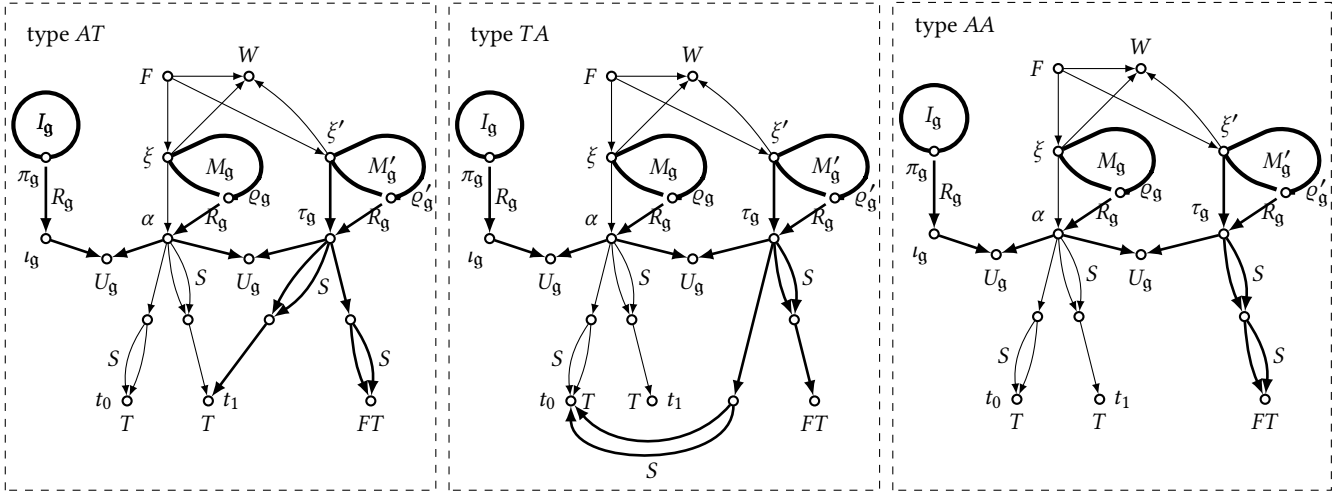

To simplify notation, in our pictures we omit the $R$-labels from $R$-arrows, and use extra labels (different from $F$, $T$) on nodes, say $B$ on $a$, as a shorthand for a $B$-arrow $(a,a')$ to a \emph{fresh} node $a'$. Letters other than upper case italics (greek, lower case italics and bold) are used as pointers and are not part of $\qmx$.

The 1-CQ $\qmx$ has the following simple \emph{base block} containing all of the solitary $F$- and $T$-nodes of $\qmx$:\\
%
%
\centerline{
\begin{tikzpicture}[decoration={brace,mirror,amplitude=7},line width=1pt,scale =.57]
\node[point,scale=0.6,label = left:$F$] (1) at (0,0) {};
\node[point,scale=0.6,label = left:$\xi$] (2) at (0,-2) {};
\node[point,scale=0.6,label = left:$\alpha$] (3) at (0,-4) {};
\node[point,scale=0.6] (4) at (-.5,-6) {};
\node[point,scale=0.6,label = left:$\tleft$,label = below:$T$] (5) at (-1,-8) {};
\node[point,scale=0.6] (6) at (.5,-6) {};
\node[point,scale=0.6,label = right:$\tright$,label = below:$T$] (7) at (1,-8) {};
\node[point,scale=0.6,label = above:$W$] (8) at (2,0) {};
\node[point,scale=0.6,label = right:$\xi'$] (9) at (4,-2) {};
\draw[->] (1) to (2);
\draw[->] (2) to (3);
\draw[->] (3) to (4);
\draw[->,bend left=10] (4) to (5);
\draw[->,bend right=20] (4) to node[left] {$S$}  (5);
\draw[->,bend left=20] (3) to node[right] {$S$}  (6);
\draw[->,bend right=10] (3) to (6);
\draw[->] (6) to (7);
\draw[->] (1) to (8);
\draw[->] (2) to (8);
\draw[->,bend right=10] (9) to (8);
\draw[->] (1) to (9);
\end{tikzpicture}}
%
Wired to the base in $\qmx$ are \emph{gadgets} $\ga$ that  \emph{implement} the Boolean formulas $\foga$ defined in Sec.~\ref{s:bf}. 
Each gadget $\ga$ has four components: two isomorphic copies of its \emph{\formulablock} $\fga$ and $\fga'$, an \emph{input block} $\iga$ and a \emph{frame}. The frame wires the gadget to the base and can be of one of the three \emph{types} $AT$, $TA$ and $AA$, which are shown in Fig.~\ref{fig:gadgets} 
(with the base block being indicated in each case by thin lines). We say that $\ga$ is of type $Z$ if its frame is of type $Z$.
The frame of $\ga$ has a few distinguished nodes:
$\pnode$ and $\inode$ via which $\iga$ is $R_\ga$-wired to the base block,
$\rnode$ via which $\fga$ is $R_\ga$-wired to the base block,
$\rnode'$ and $\fnode$ via which $\fga'$ is $R_\ga$-wired to the base block 
(where the edge-labelling $R_\ga$ is also unique for gadget $\ga$),
the single $FT$-twin of $\ga$ (none of $\iga$, $\fga$ and $\fga'$ contains any $FT$-twins), 
and two nodes labelled by $\Unode$.

It is easy to see that $\q$ satisfies {\bf (foc)}: its
$F$-node has successors, while none of the $FT$-nodes does. Further, we
observe that if $h \colon \qmxfa \to \C$  is a homomorphism mapping $\qmxfa$ into some non-leaf segment $\mathfrak s$, then there must be a gadget $\ga$ such that $h(\midnode)=\fnode$, for the $\midnode$-node in $\qmxfa$ and
the $\fnode$-node in $\mathfrak s$. Then, because of the $U_\ga$-nodes, 
$h(\inode)=\alpha$ must hold, for the $\inode$-node in $\qmxfa$ and the
$\alpha$-node in $\mathfrak s$. Therefore, $h(\pnode)=\rnode$ and the $\iga$-block of $\qmxfa$ must be mapped by $h$ to the $\fga$-block of $\mathfrak s$, forcing the input to interact with the formula.

Given a gadget $\ga$ and a homomorphism $h \colon \qmxfa \to \C$ mapping $\qmxfa$ into a non-leaf segment $\mathfrak s$ of some cactus $\C$,
we say that $\ga$ \emph{is triggered by $h$ at $\mathfrak s$} if $h(\inode)=\midnode$,  for the $\inode$-node in $\qmxfa$ and
the $\midnode$-node in $\mathfrak s$. 
We say that $\ga$ \emph{is triggered at $\mathfrak s$} if there is a homomorphism $h$ triggering $\ga$ at $\mathfrak s$. 
%
%
%
%
Observe that 
if $\mathfrak s$ is of the form $\q^-_Z$, for some $Z\in\{AT,TA,AA\}$, and $\ga$ is triggered at $\mathfrak s$, then $\ga$ is either of type $AA$ or of type $Z$.

Each gadget $\ga$ in $\q$ `implements' some Boolean formula $\foga(\avec{y})$ checking some property of
\whatisit{} trees at node $\mathfrak s$ in the skeleton $01$-tree $\C^s$ of the cactus $\C$. 
So the input values for the variables in $\avec{y}$ are `collected' from an environment of $\mathfrak s$
in $\C^s$. This collection process is regulated by the 
\itype{} of each $\foga$.
We have the following gadgets in $\qmx$,
each implementing some formula described in Sec.~\ref{s:bf}:
\begin{enumerate}
\item[{\bf (g1)}]
a type $AA$ gadget implementing $\fgood$;
\item[{\bf (g2)}]
for every $k$ with $4\le k\le 4\dpoly+11$ and every $Z\in\{AT,TA\}$, a type $Z$ gadget implementing $\fbr^k$;

\item[{\bf (g3)}]
for every $k$ with $4\le k\le 4\dpoly+11$ and every ${\ast}\in\{0,1\}$, a type $AA$-gadget implementing $\fnobr^k_\ast$;
\item[{\bf (g4)}]
for every $k$ with $4\le k\le 4\dpoly+11$, a type $AA$-gadget implementing $\fnobr^k$;

\item[{\bf (g5)}]
a type $AA$ gadget implementing $\fdelta$;

\item[{\bf (g6)}]
a type $AA$ gadget implementing $\finit$;

\item[{\bf (g7)}]
a type $AA$ gadget implementing $\freject$.
\end{enumerate}
%
%

Now suppose $\ga_1,\dots,\ga_m$ are all of the gadgets in $\qmx$.
We want to ensure that when a gadget $\ga_j$  is triggered by some $h$, then the other gadgets are not triggered (that is, the $\inodei$-node for every $i\ne j$ can be mapped by $h$ to itself). So, in addition to the above,
for every $j$, we connect the $\inodej$-node via an $\Unodej$-labelled node to the $\fnodei$-node, for all $i\ne j$.
%
We also want to ensure that when $\ga_j$  is triggered by some $h$, then 
the $\fgai$-block can be $h$-mapped to the $\fgai'$-block for every $i$ (not just for $j$).
So we not only $R_{\ga_j}$-connect $\rnodej'$ with $\fnodej$, but also add $R_{\ga_j}$-arrows connecting $\rnodej'$ with all of the $\fnodei$:

\centerline{
   \begin{tikzpicture}[decoration={brace,mirror,amplitude=7},line width=1pt,scale =.55]
   \node[point,scale=0.6,label = left:$\inodej$] (1) at (1,0) {};
   \node[point,scale=0.6,label = above:$\fnodeone\quad$] (2) at (4,0) {};
   \node[point,scale=0.6,label = below:$\fnodei$] (3) at (7,0) {};
   \node[point,scale=0.6,label = below:$\fnodej$] (33) at (10,0) {};
   \node[point,scale=0.6,label = below:$\fnodem$] (4) at (13,0) {};
   \node[point,scale=0.6,label = below:$\Unodej$] (5) at (2.5,-1) {};
   \node[point,scale=0.6,label = left:$\xi'$] (6) at (6,4) {}; 
   \node[point,scale=0.6,label = above:$\rnodej'$] (7) at (11.5,2.5) {};
   \draw[->] (2) to (5);
   \draw[->,bend left=10] (3) to (5);
   \draw[->,bend left=25] (4) to (5);
   \draw[->] (7) to node[left] {$R_{\ga_j}\quad\ \ \ $}  (2);
   \draw[->] (7) to node[right] {$\ R_{\ga_j}$}  (3);
   \draw[->] (7) to node[right] {$R_{\ga_j}$}  (33);
   \draw[->] (7) to node[right] {$\!R_{\ga_j}$}  (4);
   \draw[->,line width=.4pt] (6) to (2);
   \draw[->,line width=.4pt] (6) to (3);
   \draw[->,line width=.4pt] (6) to (33);
   \draw[->,line width=.4pt] (6) to (4);
    \draw[->,line width=.4pt] (1) to (5);
   \node[point,scale=0.6,draw=white] (c1) at (5.5,0) {\Large $\dots$};  
   \node[point,scale=0.6,draw=white] (c2) at (8.5,0) {\Large $\dots$};  
   \node[point,scale=0.6,draw=white] (c3) at (11.5,0) {\Large $\dots$};  
    \end{tikzpicture}}

The proof of the following claim is provided in Appendix~\ref{a:triggerproof}:

\begin{lclaim}\label{c:trigger}
A gadget $\ga$ in $\q$ is triggered at $\mathfrak s$ iff there is $\inputgs$ such that
$\inputgs$ is gathered from `around' $\mathfrak s$ in $\C^s$ according to the
\itype{} for $\foga$ and $\foga[\inputgs]=1$.
\end{lclaim}

Claim~\ref{c:trigger} will be used in Sec.~\ref{s:qok} to show
that every $\C\in \mathfrak K_{\omqmx}$ satisfies properties 
{\bf (leaf)} and {\bf (\locality)} given in Sec.~\ref{s:connect}.






\subsubsection{Main blocks in gadgets}\label{fblocks}

Here we give a uniform description of the \formulablock{} $\fga$ of each gadget $\ga$ in $\q$.
Apart from the label $W$, which is uniform through the gadgets, for each particular $\ga$, there are a few additional labels on some nodes in $\fga$, $\fga'$ and $\iga$. These are always specific to $\ga$, but we omit indicating this to simplify notation. 

A Boolean formula $\foga(\avec{y})$ is regarded as a ditree whose vertices are called \emph{gates\/}. Leaf gates are labelled by the variables from the list $\avec{y}=(y_1,\dots,y_n)$, with each $y_i$ labelling $k_i$-many leaves of $\foga(\avec{y})$.  Each non-leaf $g$ is either an AND-gate (having $2$ children) or a NOT-gate (having $1$ child), with the outgoing edge(s) leading to the \emph{input}(\emph{s}) \emph{of} $g$. Given an assignment $\avec{b}$ of $0$ or $1$ to the \emph{input-variables} $y_i$, we compute the value of each gate in $\foga$ under $\avec{b}$ as usual in Boolean logic.

We encode the gate-structure of $\foga(\avec{y})$ by the $\fga$-block (and also by its copy $\fga'$) as follows.
With each non-leaf gate $g$ in $\foga$ we associate a fresh copy of its gadget shown below (where $D$ in brackets means that $D$ is only present when the gate in question is the root gate of $\foga$):\\
%
\begin{tikzpicture}[decoration={brace,mirror,amplitude=7},line width=0.8pt,scale =.5]
			\node[] at (5,10) {NOT-gate gadget};
			\node[point,scale=0.6,label=above:{$\inp$}] (1) at (0,10) {};
			\node[point,scale=0.6] (2) at (-1.2,9.35) {};
			\node[point,scale=0.6] (3) at (-1.5,8) {};
			\node[point,scale=0.6] (4) at (1.5,8) {};
			\node[point,scale=0.6,label=below left:{\hspace*{2mm}\scriptsize $(D)$}] (5) at (-1.2,6.65) {};
			\node[point,scale=0.6,label=below:{$\out$}] (6) at (0,6) {};
			%
			\draw[->,bend right =15] (1) to node[above left] {} (2);
			\draw[->,bend right =15] (2) to node[above left] {\scriptsize $S$} (3);
			\draw[->, bend left=40] (1) to node[above right] {\scriptsize $S$} (4);
			\draw[->,bend right=15] (3) to node[above left] {} (5);
			\draw[->,bend right=15] (5) to node[above left] {} (6);
			\draw[->, bend left=40] (4) to node[above left] {} (6);
			\end{tikzpicture}\\[-1.5cm]
			%
			%
\hspace*{1.5cm}
\begin{tikzpicture}[decoration={brace,mirror,amplitude=7},line width=0.8pt,scale =.65]
			\node[]  at (3,-4.5) {AND-gate gadget};
			\node[point,scale=0.6] (01) at (-2,2) {};
			\node[point,scale=0.6] (02) at (2,2) {};
			\node[point,scale=0.6,label=above:{$\ \ \inp_1$}] (11) at (-2.5,1) {};
			\node[point,scale=0.6,label=above:{$\inp_2\ \ $}] (12) at (2.5,1) {};
			\node[point,scale=0.6,label=above:{\scriptsize $b$}] (1) at (0,0) {};
			\node[point,scale=0.6,label=left:{\scriptsize $(D)$}] (2) at (0,-2.5) {};
			\node[point,scale=0.6,label=below:{$\out$}] (3) at (0,-4.1) {};
			\node[point,scale=0.6,label=right:{\scriptsize $c_3$}] (4) at (1.25,-2.25) {};
			\node[point,scale=0.6,label=left:{\scriptsize $c_1$}] (21) at (-4,-1) {};
			\node[point,scale=0.6] (22) at (-2,-1) {};
			\node[point,scale=0.6] (23) at (2,-1) {};
			\node[point,scale=0.6,label=right:{\scriptsize $c_2$}] (24) at (4,-1) {};
			\node[point,scale=0.6,label=right:{\scriptsize $E$}] (e1) at (.5,-1.25) {};
			\node[point,scale=0.6,label=right:{\scriptsize $E$}] (e4) at (1.75,-3) {};
			\node[point,scale=0.6,label=left:{\scriptsize $E$}] (e21) at (-4.5,-2) {};
			\node[point,scale=0.6,label=right:{\scriptsize $E$}] (e24) at (4.5,-2) {};
			\draw[->,right] (1) to node[above left] {} (e1);
			\draw[->,right] (4) to node[above left] {} (e4);
			\draw[->,right] (21) to node[above left] {} (e21);
			\draw[->,right] (24) to node[above left] {} (e24);
			\draw[->,right] (11) to node[above] {\scriptsize $S$} (1);
			\draw[->,right] (12) to node[above] {\scriptsize $S$} (1);
			\draw[->,right] (12) to node[above left] {} (01);
			\draw[->,right] (11) to node[above left] {} (02);
			\draw[->,right] (11) to node[below right = -1.6mm] {\scriptsize $S$} (21);
			\draw[->,right] (11) to node[above left] {} (22);
			\draw[->,right] (12) to node[above left] {} (23);
			\draw[->,right] (12) to node[below left = -1.6mm] {\scriptsize $S$} (24);
			\draw[->,bend right] (01) to node[above left = -1.5mm] {\scriptsize $S$} (21);
			\draw[->,bend left] (02) to node[above right = -1.5mm] {\scriptsize $S$} (24);
			\draw[->, bend right] (21) to node[above left] {} (3);
			\draw[->, bend left] (24) to node[above left] {} (3);
			\draw[->] (1) to node[above left] {} (2);
			\draw[->] (2) to node[above left] {} (3);
			\draw[->] (23) to node[right ] {\scriptsize $S$} (4);
			\draw[->] (4) to node[above left] {} (3);
			\draw[->] (22) to node[left = 2.5mm] {\scriptsize $S$} (4);
			\end{tikzpicture}

\noindent
Each branch of $\foga$ is characterised by a pair $(i,j)$ such that the leaf node of the branch is labelled by the $j$th copy $y_i^j$ of the variable $y_i$, for some $i,j$ with $1\le i\le n$ and $1\le j\le k_i$. For each pair $(i,j)$, we introduce a label $B_{ij}$.
 Suppose that $g_1$ and $g_2$ are the inputs of an AND-gate $g$. Then, for each $m = 1, 2$, if $g_m$ is a non-leaf gate, then we merge node $\out$ of the $g_m$-gadget with node $\inp_m$ of the $g$-gadget; and if $g_m$ is labelled by $y_i^j$, we merge node $\inp_m$ of the $g$-gadget with the lower $B_{ij}$ -node in $\fga$. We proceed similarly with NOT-gates. The picture below shows how $\fga$ (and its copy $\fga'$) looks like (where, apart from the $B_{ij}$, we also label some nodes with $B_i$, for $1\le i\le n$):\\

\centerline{
 \begin{tikzpicture}[decoration={brace,mirror,amplitude=7},line width=1pt,scale =.52]
\node[point,scale=0.6,label = left:$\alpha$ or $\fnode$] (1) at (-1,0) {};
\node[point,scale=0.6,label = left:$\xi$ or $\xi'$] (2) at (-1,5) {};
\node[point,scale=0.6,label = above right:${\!\!\!\!B_1,\dots,B_n}$,label = below:$\beta^F$] (3) at (7,5) {};
\node[point,scale=0.6,label = below:$\rnode$ or $\rnode'$] (b4) at (3,0) {};
\draw [thin,dashed] (7,1) -- (12,1) -- (9.5,-1) -- (7,1);
\node[]  at (9.5,.5) {{\small gate gadgets}};
\node[]  at (9.5,-.1) {{\small of $\foga$}};
\node[point,scale=0.6,label = below:$B_n$] (b3) at (3,2.5) {};
\node[point,scale=0.6,label = above left:$B_i$,label = above right:$\beta^T_i$] (b2) at (3,5) {};
\node[point,scale=0.6,label = above:$B_1$] (b1) at (3,7.5) {};
\node[point,scale=0.6,label = left:$B_{11}$] (d1) at (8,3) {};
\node[point,fill=white,scale=0.6,label = above left:$B_{11}$] (e1) at (8,1) {};
\node[point,scale=0.6,label = right:$B_{nk_n}$] (d2) at (11,3) {};
\node[point,fill=white,scale=0.6,label = above right:$B_{nk_n}$] (e2) at (11,1) {};
\node[point,scale=0.6] (4) at (5.5,2) {};
\draw[->,line width=.4pt] (b4) to node[above] {$R_\ga$} (1);
\draw[->,line width=.4pt] (2) to (1);
\draw[->] (2) to (b1);
\draw[->] (2) to (b2);
\draw[->] (2) to (b3);
\draw[->] (b1) to (3);
\draw[->] (b2) to (3);
\draw[->] (b3) to (3);
\draw[->] (d1) to (e1);
\draw[->] (d2) to (e2);
\draw[->] (3) to (b4);
\draw[->] (3) to (4);
\draw[->] (4) to (b4);
\draw[->,bend left=20] (3) to (d1);
\draw[->,bend left=30] (3) to (d2);
 \node[point,scale=0.6,draw=white] (c1) at (9.5,3) {\Large $\dots$};  
  \node[point,scale=0.6,draw=white] (c1) at (9.5,1.5) {\Large $\dots$};  
  \node[]  at (3,4) {$\vdots$};
  \node[]  at (3,6.5) {$\vdots$};
\draw [thin,dashed] (0,-1.2) -- (0,8.5);  
\node[]  at (9.5,8) {$\fga$ or $\fga'$};
 \end{tikzpicture}}


\subsubsection{Input blocks in gadgets}\label{iblocks}

Given a Boolean formula $\foga(\avec{y})$ with $\avec{y}=(y_1,\dots,y_n)$,
its input block $\iga$ consists of a uniformly describable part (depending on $\foga$ and $n$) and a \emph{gathering block} $\gga^i$, for each $i$ with $1\le i\le n$ (depending on the \itype{} of $\foga(\avec{y})$).
For each branch of $\foga$ characterised by $(i,j)$, let $g_{ij}^{1}, \dots, g_{ij}^{d_{ij}}$ be the sequence of non-leaf gates from leaf to root on the branch with leaf $y_i^j$. The structure of the input block $\iga$ is shown below:
%

\begin{center}			
			\begin{tikzpicture}[decoration={brace,mirror,amplitude=7},line width=0.8pt,scale =.68]
			\node[point,scale=1,draw=white,label = left:$\iga$] (a1) at (6.8,-3.5) {};
			\draw[dashed,thin] (a1) to (6.8,-9) {};	
			%
			%
			%
			%
			\node[point,scale=0.6,label = right:$\eta_i$] (65) at (0,-3.5) {};
			\node[point,scale=0.6,draw=white] (vv) at (0,-4.2) {\Large $\vdots$};
			\node[point,scale=0.6,label = right:$\gamma_i$] (75) at (0,-5) {};
			\draw [decorate] ([xshift = -1mm, yshift = -1mm]65.west) --node[left=2mm]{$\gga^i$} ([xshift = -3mm]75.east);
			\node[point,scale=0.6] (85) at (0,-6) {};
			\node[point,scale=0.6,label = above right:$B_i$] (95) at (0,-6.75) {};
			\node[point,scale=0.6,label = below right:$B_n$,label = below:$\vdots$] (r95) at (2,-6.75) {};
			\node[point,scale=0.6] (rr95) at (3,-6.38) {};
			\node[point,scale=0.6] (r105) at (2,-6) {};
			\node[point,scale=0.6] (r115) at (2,-5.25) {};
			\node[point,scale=0.6,label = left:$B_1$,label = below:$\vdots$] (l95) at (-2,-6.75) {};
			\node[point,scale=0.6,label = above:$\pnode$] (pg) at (4,-6) {};
			\node[point,scale=0.6,label = below:$\inode$] (ig) at (8,-6) {};
			\draw[->,right] (r95) to node[above left] {} (rr95);	
			\draw[->,right] (rr95) to node[above left] {} (pg);	
			\draw[->,right] (r105) to node[above left] {} (pg);	
			\draw[->,right] (r115) to node[above left] {} (pg);	
			\draw[->,right] (95) to node[above left] {} (r105);	
			\draw[->,bend left=23] (l95) to node[above left] {} (r115);
			\draw[->,right] (pg) to node[below left] {$R_\ga$} (ig);
			%
			%
			%
			%
			\draw[->,right] (75) to node[above left] {} (85);
			\draw[->,right] (85) to node[above left] {} (95);
			
			
			\node[point,scale=0.6] (01) at (-1.5,-7.75) {};
			\node[point,scale=0.6] (03) at (0,-7.75) {};
			\node[point,scale=0.6] (05) at (1.5,-7.75) {};
			\node[point,scale=0.6, label = left:$B_{i1}$] (11) at (-1.75,-8.5) {};
			\node[point,scale=0.6, label = right:$B_{ij}$] (13) at (0,-8.5) {};
			\node[point,scale=0.6, label = right:$B_{ik_i}$] (015) at (1.75,-8.5) {};
			\node[point,scale=0.6] (111) at (0,-9.25) {};
			\node[point,scale=0.6] (112) at (0,-10) {};
			\node[point,scale=0.6,label = right:$p_{ij}^1$] (113) at (0,-10.75) {};
			\node[label = below:{ $\vdots$}] (120) at (0,-10.75) {};
			\node[point,scale=0.6] (121) at (0,-12.25) {};
			\node[point,scale=0.6] (122) at (0,-13) {};
			\node[point,scale=0.6] (123) at (0,-13.75) {};
			\node[point,scale=0.6,label = right:$p_{ij}^\ell$] (124) at (0,-14.5) {};
			\node[label = below:{ $\vdots$}] (130) at (-1.75,-9) {};
			\node[label = below:{ $\vdots$}] (131) at (1.75,-9) {};
			\node[point,scale=0.6,label = above:$E$] (140) at (1.5,-13.75) {};
			\node[point,scale=0.6] (151) at (3,-12.25) {};
			\node[point,scale=0.6] (152) at (3,-13) {};
			\node[point,scale=0.6] (153) at (3,-13.75) {};
			\node[point,scale=0.6] (154) at (3,-14.5) {};
			\node[label = below:{ $\vdots$}] (160) at (0,-14.5) {};
			\node[point,scale=0.6] (161) at (0,-16) {};
               		\node[point,scale=0.6] (162) at (0,-16.75) {};
			\node[point,scale=0.6] (163) at (0,-17.5) {};
			\node[point,scale=0.6,label=right:$D$] (164) at (0,-18.25) {};

			%
			\draw[->,bend right] (95) to node[above left] {} (01);
			\draw[->, right] (95) to node[above left] {} (03);
			\draw[->,bend left] (95) to node[above left] {} (05);
			\draw[->,right] (01) to node[above left] {} (11);
			\draw[->,right] (03) to node[above left] {} (13);
			\draw[->,left] (05) to node[above left] {} (015);
			\draw[->,right] (13) to node[above left] {} (111);
			\draw[->,right] (111) to node[right] {\scriptsize $S$} (112);
			\draw[->,right] (112) to node[above left] {} (113);
			\draw [decorate] ([xshift = -1mm, yshift = -1mm]13.west) --node[left=2mm]{$g_{ij}^1$} ([xshift = -3mm]113.east);
			\draw[->,right] (121) to node[above left] {} (122);
			\draw[->,right] (122) to node[right] {\scriptsize $S$} (123);
			\draw[->,right] (123) to node[above left] {} (124);
			\draw [decorate] ([xshift = -1mm, yshift = -1mm]121.west) --node[left=2mm]{$g_{ij}^{\ell}$} ([xshift = -3mm]124.east);
			\draw[->,right] (161) to node[above left] {} (162);
			\draw[->,right] (162) to node[right] {\scriptsize $S$} (163);
			\draw[->,right] (163) to node[above left] {} (164);
			\draw [decorate] ([xshift = -1mm, yshift = -1mm]161.west) --node[left=2mm]{$g_{ij}^{d_{ij}}$} ([xshift = -3mm]164.east);
			\draw[->,right] (151) to node[above left] {} (152);
			\draw[->,right] (152) to node[left] {\scriptsize $S$} (153);
			\draw[->,right] (153) to node[above left] {} (154);
			\draw [decorate] ([xshift = 2.5mm, yshift = 1mm]154.west) --node[right=2mm]{$g_{i'j'}^{\ell'}\quad$\parbox[c]{3.5cm}{if $g_{ij}^\ell$ and $g_{i'j'}^{\ell'}$ are the\\[1pt] same AND-gate $g$}} ([xshift = 1mm]151.east);
			\draw[->,right] (123) to node {} (140);
			\draw[->,right] (153) to node {} (140);
			\node[point,scale=0.6,draw=white] (c1) at (-1,-8.5) {\Large $\dots$};
			\node[point,scale=0.6,draw=white] (c2) at (1.2,-8.5) {\Large $\ldots$};
			\node[point,scale=0.6,draw=white] (c3) at (-1.2,-6.8) {\Large $\dots$};
			\node[point,scale=0.6,draw=white] (c4) at (1.2,-6.8) {\Large $\ldots$};
			\end{tikzpicture}
\end{center}

Finally, we describe the gathering blocks $\gga^i$ in $\iga$. The Boolean formula in $\ga$ takes the form 
$\foga(\avec{y})=\foga(\avec{x}^1,\dots,\avec{x}^m)$ where each tuple $\avec{x}^j=(x^j_1,\dots,x^j_{n_j})$ of variables can be of two \itype:
\begin{description}
\item[(up)]
 either $\avec{x}^j$ is gathered from the 
 (unique) $n_j$-\emph{long uppath} (the reverse of the suffix of the path ending at $\mathfrak s$ in $\C^s$);
 \item[(down)]
 or
 $\avec{x}^j$ is gathered from an  $n_j$-\emph{long downpath} (the prefix of a path starting at $\mathfrak s$ in $\C^s$).
 \end{description}
 So suppose $y_{k+1},\dots,y_{k+n_j}$ are among the variables of $\foga$ such that 
 $(y_{k+1},\dots,y_{k+n_j})=\avec{x}^j$ for some $j$.
 Then, for each $i$ with $1\leq i\leq n_j$, $\gga^{k+i}$ is shown below:\\
 %
%
\centerline{   
\begin{tikzpicture}[decoration={brace,mirror,amplitude=7},line width=0.8pt,scale =.6]
\node[point,scale=0.6,label = left:$\eta_{k+i}$,label =above:if $\avec{x}^j$ is {\bf (up)}] (1) at (0,0) {};
\node[point,scale=0.6] (2) at (0,-1) {};
\node[point,scale=0.6] (3) at (0,-2) {};
\node[point,scale=0.6] (4) at (0,-3) {};
\node[point,scale=0.6] (5) at (0,-4) {};
\node[point,scale=0.6] (6) at (0,-5) {};
\node[point,scale=0.6] (7) at (0,-6) {};
\node[point,scale=0.6] (8) at (0,-7) {};
\node[point,scale=0.6] (9) at (0,-8) {};
\node[point,scale=0.6] (10) at (0,-9) {};
\node[point,scale=0.6] (11) at (0,-10) {};
\node[point,scale=0.6] (12) at (0,-11) {};
\node[point,scale=0.6,label = left:$\gamma_{k+i}$] (13) at (0,-12) {};
\draw [decorate] ([xshift = 3.5mm, yshift = 1mm]5.west) --node[right=8mm]{$n_j-i$ times} ([xshift = 1mm]1.east);
\draw [decorate] ([xshift = 3.5mm, yshift = 1mm]9.west) --node[right=13mm]{once} ([xshift = 1mm]5.east);
\draw [decorate] ([xshift = 3.5mm, yshift = 1mm]13.west) --node[right=10mm]{$i-1$ times} ([xshift = 1mm]9.east);
\draw[->] (1) to (2);
\draw[->] (2) to (3);
\draw[->] (3) to (4);
\draw[->] (4) to (5);
\draw[->] (5) to (6);
\draw[->] (6) to (7);
\draw[->,line width=.6mm] (7) to node[left] {$S$} (8);
\draw[->] (8) to (9);
\draw[->] (9) to (10);
\draw[->] (10) to (11);
\draw[->] (11) to (12);
\draw[->] (12) to (13);

\node[point,scale=0.6,label = right:$\eta_{k+i}$,label =above:if $\avec{x}^j$ is {\bf (down)}] (r1) at (6,0) {};
\node[point,scale=0.6] (r2) at (6,-1) {};
\node[point,scale=0.6] (r3) at (6,-2) {};
\node[point,scale=0.6] (r4) at (6,-3) {};
\node[point,scale=0.6] (r5) at (6,-4) {};
\node[point,scale=0.6] (r6) at (6,-5) {};
\node[point,scale=0.6] (r7) at (6,-6) {};
\node[point,scale=0.6] (r8) at (6,-7) {};
\node[point,scale=0.6] (r9) at (6,-8) {};
\node[point,scale=0.6] (r10) at (6,-9) {};
\node[point,scale=0.6] (r11) at (6,-10) {};
\node[point,scale=0.6] (r12) at (6,-11) {};
\node[point,scale=0.6,label = right:$\gamma_{k+i}$] (r13) at (6,-12) {};
\draw [decorate] ([xshift = -1mm, yshift = -1mm]r1.west) --node[left=2mm]{} ([xshift = -4mm]r5.east);
\draw [decorate] ([xshift = -1mm, yshift = -1mm]r5.west) --node[left=2mm]{} ([xshift = -4mm]r9.east);
\draw [decorate] ([xshift = -1mm, yshift = -1mm]r9.west) --node[left=2mm]{} ([xshift = -4mm]r13.east);
\draw[->] (r13) to (r12);
\draw[->] (r12) to (r11);
\draw[->] (r11) to (r10);
\draw[->] (r10) to (r9);
\draw[->] (r9) to (r8);
\draw[->] (r8) to (r7);
\draw[->,line width=.6mm] (r7) to node[right] {$S$} (r6);
\draw[->] (r6) to (r5);
\draw[->] (r5) to (r4);
\draw[->] (r4) to (r3);
\draw[->] (r3) to (r2);
\draw[->] (r2) to (r1);

\node[point,scale=0.6,label = right:$W$] (w) at (9,-1) {};
\draw[->,bend right=23] (r1) to (w);
\end{tikzpicture}}
%
In case $\avec{x}^j$ is like in {\bf (down)}, the $W$-node (of the base block) is a common successor of the $\eta_{k+i}$-nodes, for every $i=1,\dots,n_j$,
which ensures that the input bits for $y_{k+1},\dots,y_{k+n_j}$ are all gathered from the same $n_j$-long downpath;
see Appendix~\ref{a:triggerproof} for an example. 


\subsubsection{Proving that $\q$ satisfies {\bf (leaf)} and {\bf (\locality)}}\label{s:qok}

Suppose $\C$ is a  
cactus in $\mathfrak K_{\omqmx}$ and $\mathfrak s$ is a non-leaf segment in the skeleton $\C^s$ of $\C$. Then $\mathfrak s$ is
of the form $\q^-_{Z_\mathfrak s}$ for some $Z_{\mathfrak s}\in\{AT,TA,AA\}$.

{\bf (leaf)} $(\Rightarrow)$ 
Suppose $h \colon \qmxfa \to \C$ is  a homomorphism mapping $\qmxfa$ into $\mathfrak s$. 
Then there is a gadget $\ga$ 
that is triggered by $h$ at $\mathfrak s$ (that is,
the $h(\inode)=\midnode$ for the $\inode$-node in $\qmxfa$ and the $\midnode$-node in $\mathfrak s$). 
By Claim~\ref{c:trigger}, there is $\inputgs$ such that
$\inputgs$ is gathered from `around' $\mathfrak s$ in $\C^s$ according to the
\itype{} for $\foga$ and $\foga[\inputgs]=1$.
Now we have a case distinction, depending on the gadget $\ga$, as listed in Sec.~\ref{qstructure}.
Each gadget implements a formula whose \itype{} and behaviour are described in 
Sec.~\ref{s:bf}:
%

{\bf (g1)}
$\ga$ is the type $AA$ gadget implementing $\fgood$ (cf.\ Sec.~\ref{sgood}). 
By the \itype{} of $\fgood$, $\inputgs$ is the $4\dpoly+11$-long uppath, and it
does not contain the reverse of a $001{\ast}$-pattern. 
Thus, $\mathfrak s$ is not \good{} in $\C^s$.

{\bf (g2)}
There exist some $k$ with $4\le k\le 4\dpoly+11$ and $Z\in\{AT,TA\}$ such that $\ga$ is the type $Z$ gadget implementing $\fbr^k$
(cf.\ Sec.~\ref{sbr}). 
By the \itype{} of $\fbr^k$, $\inputgs$ is the $k$-long uppath, and it is
 the reverse of a sequence of the form $001{\ast}(111{\ast})^\ell w$, where
 either $w$ is empty and $\ell=0$, or $w=001$, or  $w=111$ and $\ell<\dpoly-1$.
On the other hand,
$Z_{\mathfrak s}=Z$ must hold,
and so $\C^s$ is not branching at $\mathfrak s$.
As branching at $\mathfrak s$ is required in condition 
{\bf (pb1)}
of being \pbr, it follows that $\mathfrak s$ is not \pbr{}, and so it is \syic{} in $\C^s$.

{\bf (g3)}
There exist some $k$ with $4\le k\le 4\dpoly+11$ and ${\ast}\in\{0,1\}$ such that $\ga$ is the type $AA$ gadget implementing $\fnobr^k_\ast$ (cf.\ Sec.~\ref{sbr}). 
Suppose, say, that ${\ast}=0$ (the case when ${\ast}=1$ is similar).
By the \itype{} of $\fnobr^k_\ast$, $\inputgs=(\avec{e}^{\mathfrak s},b^{\mathfrak s})$, where 
$\avec{e}^{\mathfrak s}$ is the $k$-long uppath and $b^{\mathfrak s}$ is a $1$-long downpath.
Also, $b^{\mathfrak s}=0$ and
$\avec{e}^{\mathfrak s}$ is the reverse of a sequence of the form $001{\ast}(111{\ast})^\ell w$, where
either $w$ is empty and $0<\ell<\dpoly$,
or $w=1$, or $w=11$, or $w=00$.
As having a $0$-child is forbidden in condition 
{\bf (pb2)} of being \pbr,
it follows that $\mathfrak s$ is not \pbr{}, and so it is \syic{} in $\C^s$.

{\bf (g4)}
There exists some $k$ with $4\le k\le 4\dpoly+11$ such that $\ga$ is the type $AA$ gadget implementing $\fnobr^k$ (cf.\ Sec.~\ref{sbr}). 
By the \itype{} of $\fnobr^k$, $\inputgs=(\avec{e}^{\mathfrak s},b^{\mathfrak s}_1,b^{\mathfrak s}_2)$, where 
$\avec{e}^{\mathfrak s}$ is the $k$-long uppath and each of $b^{\mathfrak s}_1$ and $b^{\mathfrak s}_2$ is a $1$-long downpath.
Also, $b^{\mathfrak s}_1\ne b^{\mathfrak s}_2$, and
$\avec{e}^{\mathfrak s}$ is the reverse of a sequence of the form $001{\ast}(111{\ast})^\ell w$, where
$w=111$ and $\ell=\dpoly-1$.
As having two children is forbidden in condition 
{\bf (pb4)} of being \pbr,
it follows that $\mathfrak s$ is not \pbr{}, and so $\mathfrak s$ is \syic{} in $\C^s$.

{\bf (g5)}
$\ga$ is the type $AA$ gadget implementing $\fdelta$. 
By the \itype{} of $\fdelta$, $\inputgs$ should have gathered data about some $\lor$-configur\-ation $c$ and
its two `subsequent' $\lor$-configurations $c_0,c_1$.
As explained in Sec.~\ref{sdelta}, $(c,c_0,c_1)$ is inconsistent with $\delta$,
and so $\mathfrak s$ is not \pcomp{} in $\C^s$. Thus, it is  \syic{} in $\C^s$.

{\bf (g6)}
$\ga$ is the type $AA$ gadget implementing $\finit$. 
By the \itype{} of $\finit$, $\inputgs$ should have gathered data about the $8$-long uppath and some $\lor$-configuration $c$.
As explained in Sec.~\ref{sinit},
the part of $\inputgs$ gathered from the $8$-long uppath 
is the reverse of some pattern $111{\ast} 001{\ast}$, but $c\ne\cinit$.
Thus, $\mathfrak s$ is not \pinit{} in $\C^s$, and so it is \syic{} in $\C^s$.

{\bf (g7)}
$\ga$ is the type $AA$ gadget implementing $\freject$. 
As explained in Sec.~\ref{sreject},
by the \itype{} of $\freject$, $\inputgs$ should have gathered data about some
state $q$, and $q=\qreject$ must hold.
Therefore,
$\mathfrak s$ represents a $\qreject$-configuration in $\C^s$,
%
as required.


{\bf (leaf)} $(\Leftarrow)$ 
Again, we have cases {\bf (g1)}--{\bf (g7)}. In each case, we have a formula $\foga$ for which some input
$\inputgs$ can be gathered from `around' $\mathfrak s$ in $\C^s$ according to its
\itype{} and for which $\foga[\inputgs]=1$. So by Claim~\ref{c:trigger}, there is 
a homomorphism $h \colon \qmxfa \to \C$ mapping $\qmxfa$ into $\mathfrak s$ and triggering $\ga$ at $\mathfrak s$.


{\bf (\locality)} 
Suppose there is a homomorphism $h \colon \qmxfa \to \C$ mapping $\qmxfa$ into some non-leaf segment $\mathfrak s$.
Then some gadget $\ga$ is triggered by $h$ at $\mathfrak s$. 
By our assumption on $\mathfrak s$ and by the proof of the $(\Rightarrow)$ direction of 
{\bf (leaf)} above,
it follows that $\ga$ can only be the $Z$ type gadget
implementing $\fbr^k$, where $Z\in\{AT,TA\}$ is such that $\mathfrak s=\q^-_Z$. 
An inspection of Fig.~\ref{fig:gadgets} shows that 
$h(\tleft)=\tleft$ and $h(\tright) \ne \tleft$ whenever $Z=TA$,
and $h(\tright)=\tright$ and $h(\tleft) \ne \tright$ whenever $Z=AT$.
Therefore, {\bf (\locality)} holds.

This completes the proof that $\q$ satisfies {\bf (leaf)} and {\bf (\locality)}.


\subsection{OMQs with Schema.org and DL-Lite$_\textsl{bool}$}

Schema.org, founded by Google, Microsoft, Yahoo and Yandex and developed by an open community process, comprises a set of rules $P(\avec{x}) \leftarrow Q(\avec{x})$, for unary or binary predicates $P$ and $Q$, together with domain and range constraints such as
\begin{align}\label{schemarule0}
T(x) \lor F(x) &\leftarrow S(x,y)\\\label{schemarule}
T(y) \lor F(y) &\leftarrow R(x,y)
\end{align}
For example, according to the Schema.org ontology, the range of the binary relation $\mathsf{musicBy}(x,y)$ is covered by the union of $\mathsf{MusicGroup}(y)$ and $\mathsf{Person}(y)$. 
In the syntax of description logic $\DLb$~\cite{ACKZ09}, rules~\eqref{schemarule0} and~\eqref{schemarule} are written as
$$
\exists S \sqsubseteq T \sqcup F \qquad \text{and} \qquad \exists R^- \sqsubseteq T \sqcup F
$$
Given any d-sirup $(\Delta_\q, \G)$, denote by $\Delta'_\q$ the `Schema.org ontology' obtained by replacing~\eqref{d-sirup1} in $\Delta_\q$ with rule~\eqref{schemarule}, for a fresh $R$. 

\begin{proposition}\label{prop:sch}
A d-sirup $(\Delta_\q, \G)$ is FO-rewritable iff $(\Delta'_\q, \G)$ is FO-rewritable.
\end{proposition}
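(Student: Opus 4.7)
The plan is to set up a correspondence between data instances of the two programs that preserves certain answers to $\G$, and then to transfer FO-rewritings along it in both directions. Given an instance $\A$ for $\Delta_\q$, let $\A^+$ be obtained by deleting every $A(b) \in \A$ and adding $R(b^\ast, b)$, where each $b^\ast$ is a fresh constant; given an instance $\A'$ for $\Delta'_\q$, let $(\A')^-$ be obtained by deleting every $R$-atom from $\A'$ and adding $A(b)$ whenever $\exists a\, R(a,b) \in \A'$. First I would verify the two facts $\G \in \Delta_\q(\A) \iff \G \in \Delta'_\q(\A^+)$ and $\G \in \Delta'_\q(\A') \iff \G \in \Delta_\q((\A')^-)$ by routine model-construction arguments: from a model $\I$ of one side, one obtains a model of the other by adding or dropping the appropriate $R$- and $A$-atoms, noting that the disjunctive rule is witnessed identically in both directions and that the body of $\G \leftarrow \q$ mentions neither $A$ nor $R$.

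For the forward direction, given an FO-rewriting $\Phi$ of $(\Delta_\q, \G)$, define $\Phi'$ by replacing every atom $A(x)$ in $\Phi$ with the formula $\exists y\, R(y, x)$. Since $A$ in $(\A')^-$ coincides with $\exists y\, R(y, \cdot)$ in $\A'$ and all other predicates are common to the two signatures, $\A' \models \Phi'$ iff $(\A')^- \models \Phi$; combining with the two facts above gives $\A' \models \Phi'$ iff $\G$ is a certain answer to $(\Delta'_\q, \G)$ over $\A'$.

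For the converse direction, I would invoke the result recalled in Section~\ref{sec:prelims} that FO-rewritability implies UCQ-rewritability, so we may assume $\Phi' = \bigvee_i \exists \bar y_i\, \q'_i$. The key observation is that, by construction of $\A^+$, every $R$-atom in $\A^+$ has the form $R(b^\ast, b)$ with $b^\ast$ a fresh constant occurring in no other atom of $\A^+$. Consequently, in any homomorphism $h \colon \q'_i \to \A^+$, each variable $y$ occurring as the first argument of an $R$-atom of $\q'_i$ must be mapped to some such $b^\ast$; hence $y$ can occur in $\q'_i$ only as the first argument of $R$-atoms, and whenever $R(y, z_1), \dots, R(y, z_k) \in \q'_i$ we must have $h(z_1) = \dots = h(z_k)$, namely the unique $R$-successor of $b^\ast$. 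This motivates the transformation $\q'_i \mapsto \q_i$: discard $\q'_i$ entirely if some such $y$ occurs outside of its $R$-atoms, and otherwise replace each cluster $\{R(y, z_1), \dots, R(y, z_k)\}$ by the single atom $A(z_1)$ after substituting $z_2, \dots, z_k$ by $z_1$ throughout. A straightforward check shows $\A \models \exists \bar y_i\, \q_i$ iff $\A^+ \models \exists \bar y_i\, \q'_i$, so $\Phi = \bigvee_i \exists \bar y_i\, \q_i$ is the required FO-rewriting of $(\Delta_\q, \G)$.

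The main obstacle is this converse direction, where one must argue both soundness and completeness of the CQ-level transformation in the face of the structural constraints that freshness of $R$ forces on homomorphisms into $\A^+$; in particular, one has to handle correctly those variables of $\q'_i$ that lie in multiple $R$-atoms (requiring variable identifications) or in non-$R$-atoms (rendering the CQ unsatisfiable on translated instances).
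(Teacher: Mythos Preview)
Your proposal is correct and follows the same overall architecture as the paper: set up instance translations in both directions and push UCQ-rewritings across them by rewriting atoms. The forward direction (replacing $A(x)$ by $\exists y\,R(y,x)$) is identical to the paper's.

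For the converse direction, the paper takes a lighter route than you do: it simply replaces every atom $R(x,y)$ in $\Phi'$ by $A(y)$, with no variable identification and no discarding of disjuncts. Your concern---that a first-argument variable $y$ might occur in several $R$-atoms or in a non-$R$ atom, so that a homomorphism $\q_i \to \A$ need not lift to a homomorphism $\q'_i \to \A^+$---is well-founded at the level of individual disjuncts, but the paper's argument does not actually need that lift. The reason the naive substitution still yields a sound rewriting is that every disjunct $\q'_i$ of $\Phi'$ satisfies $\Delta'_\q, \q'_i \models \G$; since $\q$ mentions neither $A$ nor $R$, one checks directly that $\Delta'_\q, \q'_i \models \G$ iff $\Delta_\q, \q_i \models \G$ for the naive $\q_i$, and soundness of $\Phi$ then follows by homomorphism-monotonicity of certain answers. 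Completeness is the easy direction and goes through the instance translation exactly as in your argument.

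So both approaches work. Your per-disjunct biconditional $\A \models \q_i \Leftrightarrow \A^+ \models \q'_i$ is a tighter statement that buys a purely syntactic verification, at the cost of the clustering/identification bookkeeping; the paper trades that bookkeeping for a one-line semantic observation about soundness.
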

\begin{proof}
$(\Rightarrow)$ Suppose $\Phi$ is a UCQ-rewriting of $(\Delta_\q, \G)$ and $\Phi'$ the result of replacing every $A(y)$ in $\Phi$ with $\exists x \, R(x,y)$. We claim that $\Phi'$ is an FO-rewriting of $(\Delta'_\q, \G)$. Indeed, suppose $\A'$ is any data instance for $(\Delta'_\q, \G)$. Without loss of generality we may assume that it does not contain atoms $A(a)$. Let $\A$ be the result of adding $A(b)$ to $\A'$ whenever $R(a,b) \in \A'$. Then $\Delta_\q,\A \models \G$ iff $\Delta'_\q,\A' \models \G$, and also $\A \models \Phi$ iff $\A' \models \Phi'$, from which $\Delta'_\q,\A' \models \G$ iff $\A' \models \Phi'$.

$(\Leftarrow)$ Suppose $\Phi'$ is a UCQ-rewriting of $(\Delta'_\q, \G)$ and $\Phi$ is the result of replacing every $R(x,y)$ in $\Phi'$ with $A(y)$. Let $\A$ be a data instance for $(\Delta_\q, \G)$. Without loss of generality we may assume that it does not contain atoms of the form $R(a,b)$. Let $\A'$ be the result of adding $R(a,b)$, for a fresh $a$, to $\A$ whenever $A(b) \in \A$. Then $\Delta_\q,\A \models \G$ iff $\Delta'_\q,\A' \models \G$, and also $\A \models \Phi$ iff $\A' \models \Phi'$, from which $\Delta'_\q,\A' \models \G$ iff $\A' \models \Phi'$.
\end{proof}

As a consequence of Theorem~\ref{thm:2-exp} and Proposition~\ref{prop:sch}, we obtain the following theorem, which is an improvement on~\cite[Theorem 11]{DBLP:conf/ijcai/HernichLOW15} showing \PSpace-hardness of deciding FO-rew\-ritability of UCQs mediated by Schema.org ontologies.

\begin{theorem}\label{thm:2-exp-sch}
Deciding FO-rewritability of CQs mediated by a Schema.org or $\DLb$ ontology is 2\Exp-hard.
\end{theorem}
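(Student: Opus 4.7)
The plan is to derive the statement as an almost immediate corollary by chaining Theorem~\ref{thm:2-exp} with Proposition~\ref{prop:sch}. The former already supplies a polynomial reduction from AExpSpace computations to FO-rewritability of monadic d-sirups $(\Delta_\q,\G)$, and the latter gives a logical equivalence between FO-rewritability of $(\Delta_\q,\G)$ and FO-rewritability of $(\Delta'_\q,\G)$, where $\Delta'_\q$ is obtained from $\Delta_\q$ by swapping the covering rule~\eqref{d-sirup1} for the range-covering Schema.org rule~\eqref{schemarule} over a fresh binary symbol $R$. Composing the two reductions yields the Schema.org lower bound: given an instance $(\atm,\w)$ of the rejection problem for an AExpSpace machine, I would first build $(\Delta_\q,\G)$ via the construction of Section~\ref{sec:hardness}, then apply the transformation of Proposition~\ref{prop:sch} to obtain the Schema.org OMQ whose FO-rewritability is equivalent to boundedness of $(\Delta_\q,\G)$.

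To cover the $\DLb$ case uniformly, I would observe that the Schema.org rule~\eqref{schemarule} is, by definition, the $\DLb$ concept inclusion $\exists R^- \sqsubseteq T \sqcup F$, while the simple implications $P(\avec{x}) \leftarrow Q(\avec{x})$ that Schema.org allows are expressible as $\DLb$ role or concept inclusions. Hence the ontology $\Delta'_\q$ manufactured by the reduction belongs to both ontology languages at once, so a single reduction witnesses 2\Exp-hardness in both settings without any extra work.

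There is no serious obstacle: the only thing worth checking is that the composite reduction is polynomial-time, which is immediate since Proposition~\ref{prop:sch} only renames one unary predicate and attaches a single existential, and does not touch the CQ $\q$ beyond the cosmetic $A(y) \leadsto \exists x\, R(x,y)$ substitution used in rewriting. Thus the whole proof collapses to citing the two earlier results, and its only purpose is to transport the hardness already established in Section~\ref{sec:hardness} across the equivalence of Proposition~\ref{prop:sch}, thereby strengthening the \PSpace-hardness bound of~\cite{DBLP:conf/ijcai/HernichLOW15} to 2\Exp.
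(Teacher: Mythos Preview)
Your proposal is correct and matches the paper's own argument, which simply states the theorem as a consequence of Theorem~\ref{thm:2-exp} and Proposition~\ref{prop:sch} without further proof. One tiny cosmetic point: the transformation $\Delta_\q \mapsto \Delta'_\q$ does not touch the CQ $\q$ at all (since $\q$ contains no $A$-atoms), so the substitution $A(y)\leadsto\exists x\,R(x,y)$ you mention occurs only in the UCQ rewriting $\Phi$, not in the reduction itself---but this does not affect correctness.
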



\section{Monadic d-sirups with a ditree CQ}\label{sec:ditrees}

The high lower bound obtained in the previous section can be regarded as a formal  confirmation of the empirical fact that finding transparent syntactic, let alone practical criteria of FO-rewritability for sufficiently general classes of monadic (d-)sirups is a notoriously difficult problem. The only positive results in this direction we know of are the syntactic NC/\PTime{} dichotomy of binary \emph{chain} sirups~\cite{DBLP:journals/jacm/AfratiP93} (see also~\cite{DBLP:journals/tcs/AfratiGT03}) and the complete $\ACz$/\NL/\PTime/\coNP{} tetrachotomy of monadic \emph{path} d-sirups without twins~\cite{DBLP:conf/kr/GerasimovaKKPZ20}. 

The CQs used in the proof of Theorem~\ref{thm:2-exp} were directed acyclic graphs with one solitary $F$-node, two solitary $T$-nodes, and multiple $FT$-twins. The question we try to answer in this section is whether 
the restriction of the set of CQs admitted in d-sirups to those that are   \emph{rooted directed trees} as graphs (\emph{ditree CQs}, for short)
makes deciding FO-rewritability of d-sirups $(\Delta_\q,\G)$ easier, having in mind a complete syntactic classification of such d-sirups as an ultimate (possibly unrealistic) aim. Note for starters that, by Example~\ref{ex-comple}, the data complexity of evaluating d-sirups with a ditree CQ ranges from $\ACz$ to \L, \NL, \PTime, and \coNP. 

A CQ $\q$ is \emph{minimal} if there is no $\q\to\q'$ homomorphism, for any proper subCQ $\q'$ of $\q$. As well-known, checking minimality of tree-shaped CQs can be done in polynomial time; see, e.g.,~\cite{DBLP:journals/tcs/ChekuriR00}. 
We denote the root node of $\q$ by $\rr$ and write $x \preceq_{\q} y$ to say that there is a (directed) path from $x$ to $y$ in $\q$, and $x\prec_{\q} y$ if $x\preceq_{\q} y$ and $x\ne y$. A pair $(x,y)$ is $\prec_{\q}$-\emph{comparable} if either $x \preceq_{\q} y$ or $y \preceq_{\q} x$, 
otherwise $(x,y)$ is $\prec_{\q}$-\emph{incomparable\/}. If $x \preceq_{\q} y$ then $\delta_{\q}(x,y)$ is the number of edges between $x$ and $y$.
The \emph{distance} between any $x$ and $y$ is 
$\partial_{\q}(x,y)=\delta_{\q}\bigl(\inf_{\q}(x,y),x\bigr)+\delta_{\q}\bigl(\inf_{\q}(x,y),y\bigr)$, where  
$\inf_{\q}(x,y)$ is the unique node such that $\inf_{\q}(x,y)\preceq_{\q} x$, $\inf_{\q}(x,y)\preceq_{\q} y$ and 
$z\preceq_{\q}\inf_{\q}(x,y)$ whenever $z\preceq_{\q} x$ and $z\preceq_{\q} y$. The subscript $\q$ in $\preceq_{\q}$,
$\prec_{\q}$, $\delta_{\q}$, $\inf_{\q}$ and $\partial_{\q}$ will be dropped if understood. 

If $\ct$ is a solitary $T$-node and $\cf$ is solitary $F$-node, we call $(\ct,\cf)$ a 
\emph{\solitarypair\/}.
We say that a \solitarypair{} $(\ct,\cf)$ is of \emph{minimal distance}, if $\partial(t,f)\geq\partial(\ct,\cf)$ for any \solitarypair{}
$(t,f)$. A $\prec$-incomparable \solitarypair{} $(\ct,\cf)$ is called \emph{symmetric} if the CQ obtained by removing the labels $F$, $T$ from $\cf$, $\ct$ and cutting the branches below them is symmetric with respect to $\rr$ (see $\q_4$ in Example~\ref{ex-comple}).
A ditree $\q$ is \emph{quasi-symmetric} if it has no $\prec$-comparable \solitarypair s, and every \solitarypair{} $(\ct,\cf)$ of minimal distance is symmetric. 
%

As follows from~\cite{DBLP:conf/kr/GerasimovaKKPZ20} (where $F$ and $T$ are interchangeable), 
\begin{description}
\item[\rm (a)] if $\q$ has no solitary $F$, then $(\Delta_\q,\G)$ is FO-rewritable;

\item[\rm (b)] if $\q$ has one solitary $F$, then $(\Delta_\q,\G)$ is datalog-rewritable (and so in \PTime{} for data complexity);

\item[\rm (c)] if $\q$ has one solitary $F$ and one solitary $T$, then $(\Delta_\q,\G)$ is linear-datalog-rewritable (and so in \NL); \label{reference}

\item[\rm (d)] if $\q$ has one solitary $F$, one solitary $T$ and is quasi-symmetric, then $(\Delta_\q,\G)$ is symmetric-linear-datalog-rewritable (and so in \L).
\end{description}
The following result identifies a large and tractable class of d-sirups with a ditree CQ whose evaluation is \NL-hard:

\begin{theorem}\label{thm:tree1}
Suppose $\q$ is a minimal ditree CQ with at least one solitary $F$, at least one solitary $T$ and such that either 
\begin{itemize}
\item[$(i)$] there is a $\prec$-comparable \solitarypair{} $(\ct,\cf)$ or

\item[$(ii)$] $\q$ is not quasi-symmetric and has no $FT$-twins.
\end{itemize}
Then evaluating the d-sirup $(\Delta_\q,\G)$ is \NL-hard. 
\end{theorem}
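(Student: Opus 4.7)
The plan is to establish \NL-hardness by a logspace reduction from directed graph reachability (\textsc{Reach}): given a digraph $G=(V,E)$ with distinguished vertices $s,t\in V$, decide whether $t$ is reachable from $s$. From $G$ one builds in logspace a data instance $\D_G$, using the CQ $\q$ as a template, so that $t$ is reachable from $s$ in $G$ iff $\Delta_\q,\D_G\models\G$, equivalently (by Prop.~\ref{cactuses}) iff some cactus $\C\in\mathfrak K_\q$ admits a homomorphism into $\D_G$. The general shape of $\D_G$ will consist of: a ``source gadget'' hanging off $s$ that forces the root segment of any cactus $\C$ to send its solitary $F$-node to $s$; for each edge $(u,v)\in E$ an ``edge gadget'' built from a copy of $\q$ in which the solitary $T$-node $\ct$ is relabelled by $A$ and re-used as a connector at $v$; and a ``sink marking'' at $t$ that certifies the end of the propagation.

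For case $(i)$, assume w.l.o.g.\ $\cf\preceq\ct$. Budding repeatedly relabels solitary $T$-nodes as $A$ and attaches fresh copies of $\q$; since $\cf\preceq\ct$, successive copies glue along a directed chain of root-foci inside the cactus. A homomorphism into $\D_G$ therefore traces an $s$-to-$t$ path through the edge gadgets, one gadget per segment. Conversely, any $s$-$t$ path in $G$ allows one to build a cactus of appropriate depth that maps homomorphically into the chain of gadgets. Minimality of $\q$ is used to rule out ``short-cutting'' homomorphisms that would match $\q$ inside a single edge gadget, and to reduce the analysis to maps that respect the intended segment-by-segment alignment; other solitary $T$'s of $\q$ are handled by attaching small local witnesses inside every edge gadget.

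For case $(ii)$, since $\q$ has no $\prec$-comparable solitary pair (otherwise we are in case $(i)$) but is not quasi-symmetric, there is an incomparable solitary pair $(\ct,\cf)$ of minimal distance whose two branches from $\inf(\ct,\cf)$ are non-isomorphic as unlabelled ditrees. This asymmetry gives a canonical ``direction'': in every edge gadget one can distinguish the $u$-end from the $v$-end, because the only way to embed $\q$ consistently is to route the $\ct$-branch into the $v$-side and the $\cf$-branch into the $u$-side. The hypothesis of no $FT$-twins is essential to prevent a node of $\D_G$ from simultaneously absorbing both a solitary $F$ and a solitary $T$ of $\q$, which would let a cactus collapse into $O(1)$ of $\D_G$ independent of $G$. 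With these two ingredients, edge gadgets again concatenate into $s$-$t$ paths, and cactuses correspond to such paths.

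The main obstacle is the ``only if'' direction in case $(ii)$: showing that \emph{every} homomorphism from \emph{every} cactus $\C\in\mathfrak K_\q$ into $\D_G$ must factor through the edge gadgets in the intended forward direction, rather than zig-zagging or folding several segments of $\C$ onto a single piece of $\D_G$. This requires a careful combinatorial analysis of how budding at multiple solitary $T$-nodes interacts with the asymmetry of $\q$ around $\inf(\ct,\cf)$, and a rigidity lemma saying, roughly, that if $h\colon\C\to\D_G$ maps some segment of $\C$ into an edge gadget encoding $(u,v)$, then the parent segment (which contributed the budded solitary $T$) must lie in an edge gadget ending at $u$. Minimality of $\q$, polynomially testable for ditrees, together with the absence of $FT$-twins, will be the two properties invoked repeatedly to exclude unintended maps, so that the cactus witnesses are in bijection with $s$-$t$ paths in $G$.
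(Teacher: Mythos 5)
Your high-level plan — reduce from directed reachability in a dag, encoding each edge by a fresh copy of $\q$ whose designated solitary nodes are glued to the graph vertices — is indeed the paper's strategy, and you correctly identify that the crux is a rigidity analysis of potential homomorphisms that invokes minimality, the choice of solitary pair, and the absence of $FT$-twins. However, there are several substantive deviations and gaps.

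The construction is different from, and less clean than, the paper's. In the paper, $\A_G$ replaces each edge $e=(\unode,\vnode)$ by a fresh copy $\q^e$ in which \emph{both} $\ct^e$ and $\cf^e$ are relabelled $A$ and identified with $\unode$ and $\vnode$ respectively, and then just adds the two atoms $T(\snode)$ and $F(\tnode)$. There is no ``source gadget that forces the cactus's root-focus to $s$'' nor any ``sink marking''; the paper never tracks where a root-focus lands. Your description only relabels $\ct$ and leaves it unclear how $u$ participates in the gadget for $(u,v)$ at all.

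More importantly, framing the equivalence via Prop.~\ref{cactuses} is not available here: that proposition (and $\Pi_\q$, $\Sigma_\q$) is stated only for 1-CQs, i.e.\ a single solitary $F$, whereas Theorem~\ref{thm:tree1} permits any number of solitary $F$-nodes. The paper therefore works directly with the disjunctive semantics. This pays off twice. The $(\Rightarrow)$ direction becomes trivial: if there is a path, then in \emph{every} model of $\Delta_\q$ and $\A_G$ some consecutive pair of path vertices must be labelled $T$ then $F$ (since $\snode\in T^\I$ and $\tnode\in F^\I$), so $\q$ embeds via the identity on that one edge-copy — no cactus of linear depth needs to be built. And the $(\Leftarrow)$ direction only requires exhibiting a \emph{single} model $\I$ (label $A$-nodes $T$ if reachable from $\snode$, $F$ otherwise) and proving there is no homomorphism $\q\to\I$; this is a statement about one copy of $\q$, not about arbitrary cactuses, which is what makes the case analysis (Claim~\ref{c:hinf}) manageable. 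Your plan to show that \emph{no cactus of any depth} maps into $\D_G$ ``in the intended forward direction'' is both the wrong tool (cactuses don't apply for multiple solitary $F$s) and a strictly harder target than necessary.

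Finally, your reading of ``not quasi-symmetric'' as the two branches from $\inf(\ct,\cf)$ being non-isomorphic as \emph{unlabelled} ditrees does not match the paper's definition: symmetry there is symmetry w.r.t.\ the root $\rr$ of the CQ obtained by removing only the $F$/$T$ labels from $\cf,\ct$ and pruning below them, while all other labels (including other solitary nodes) remain. And the sketch of the ``rigidity lemma'' is where the entire mathematical content of the theorem lives: the paper's Claim~\ref{c:hinf} classifies the possible placements of $h(\inf(\ct,\cf))$, $h(\ct)$, $h(\cf)$ into four cases, each of which forces one of $h(\ct),h(\cf)$ to be an $FT$-twin, and then concludes under $(ii)$ that all four are impossible. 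A one-sentence gesture at such a lemma, without the structural analysis of $\wormh$ and the shift-fixpoint argument the paper uses to exclude foldings along contacts, leaves the proof essentially unwritten.
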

\begin{proof}
The proof is by reduction of the \NL-complete reachability problem for dags. Given a dag $G = (V,E)$ with nodes $\snode,\tnode \in V$, we construct a data instance $\A_G$ as follows.
%
We pick a \solitarypair{} $(\ct,\cf)$ such that, 
in case $(i)$, $(\ct,\cf)$ is $\prec$-comparable and there is no solitary $T$- or $F$-node between $\ct$ and $\cf$;
and, in case $(ii)$, $(\ct,\cf)$ is of minimal distance, $\prec$-incomparable, and not symmetric. 
%
%
%
Then, in both cases, we replace each $e = (\unode,\vnode) \in E$ by a fresh copy $\q^e$ of $\q$ in which $\ct^e$ is renamed to $\unode$ with $T(\unode)$ replaced by $A(\unode)$, and $\cf^e$ is renamed to $\vnode$ with $F(\vnode)$ replaced by $A(\vnode)$. The dag $\A_G$ comprises the $\q^e$, for $e \in E$, as well as $T(\snode)$ and $F(\tnode)$.
We show that $\snode \to_G \tnode$ iff the answer to $(\Delta_\q,\G)$ over $\A_G$ is `yes'\!. 
		
$(\Rightarrow)$ If $\snode=\vnode_0, \dots, \vnode_n = \tnode$ is a path in $G$ with $e_i =(\vnode_i,\vnode_{i+1}) \in E$, for $i < n$, then for
		any model $\I$ of $\Delta_\q$ and $\A_G$, there is some $i < n$ such that  $\I \models T(\vnode_i)$ and $\I \models F(\vnode_{i+1})$, and so the identity map from $\q$ to its copy $\q^{e_i}$ is a $\q\to\I$ homomorphism.

$(\Leftarrow)$ If $\snode \not\to_G \tnode$, we define a model $\I$ of $\Delta_\q$ and $\A_G$ by labelling with $T$ the $A$-nodes in $\A_G$ that (as nodes of $G$) are reachable from $\snode$ (via a directed path in $G$) and with $F$ the remaining ones. 
We claim that if one of $(i)$ or $(ii)$ holds, then there is no homomorphism from $\q$ to $\I$, and so the answer to $(\Delta_\q,\G)$ over $\A_G$ is `no'\!. Indeed, take any map $h$ from $\q$ to $\I$, and
consider the substructure $\wormh$ of $\A_G$ comprising those copies $\q^{e_1},\dots,\q^{e_n}$ of $\q$  that have a non-empty intersection with $h(\q)$. To simplify notation, we set $\q^j = \q^{e_j}$. Then $\I$ can be regarded as a model of
$\wormh$.
The (quite arduous case-distinction) proof in Appendix~\ref{a:treeproof}
shows that $h$ cannot be a homomorphism from $\q$ to $\I$. 

%


Here, we only sketch the proof for case $(ii)$ when $\q$ is not quasi-symmetric, and we may also assume that $\q$ has no $\prec$-comparable \solitarypair s. 
Suppose $h \colon \q \to \I$ is a homomorphism, and
let $a$ be such that $h(\rr) \in \q^a$. As $(\ct,\cf)$ is $\prec$-incomparable, $\wormh$ consists of (at most) three copies
$\q^a$, $\q^{a-1}$ and $\q^{a+1}$ of $\q$, and looks as shown in the picture below:\\ 
%
\centerline{   
\begin{tikzpicture}[line width=0.7pt,scale =.67]
\node[]  at (-1.7,4) {$\wormh$};
\node[point,scale=0.6] (a1) at (0,0) {};
\node[point,fill=white,scale=0.6,label =below:$\ct^a$,label =above:\ \ \ $\cf^{a-1}$] (a2) at (3,0) {};
\node[point,scale=0.2] (a3) at (2,1) {};
\node[point,scale=0.6,label =above:$\rr^{a-1}$\  ] (a4) at (2,3) {};
\node[]  at (0,3.4) {$\q^{a-1}$};
\draw[] (a2) -- (1.8,-1) -- (2.4,-1) -- cycle; 
\draw[] (a1) to (a3);
\draw[] (a2) to (a3);
\draw[] (a4) to (a3);
\draw[] (a1) -- (-.8,-1) -- (-.3,-1) -- cycle;
\draw[] (2.4,.6) -- (2,.1) -- (2.6,.1) -- cycle;
\draw[] (.6,.3) -- (.4,-.2) -- (1,-.2)-- cycle;
\draw[] (1.6,.8) -- (1,.2) -- (1.5,.2)-- cycle;
\draw[] (2,1.8) -- (1.3,1.3) -- (1.8,1.3) -- (2,1.8);
\draw[] (2,2.7) -- (2.2,2.1) -- (2.6,2.1) -- (2,2.7);
\draw[thin,dashed,rounded corners] (3,-1.3) -- (3,4) -- (-1,4) -- (-1,-1.3) -- cycle;
\draw[] (a2) -- (3.8,-1) -- (4.4,-1) -- cycle; 
\node[point,scale=0.6] (b1) at (3,0) {};
\node[point,fill=white,scale=0.6,label =below:$\cf^a$,label =above:\ \ \ $\ct^{a+1}$] (b2) at (6,0) {};
\node[point,scale=0.2] (b3) at (5,1) {};
\node[point,scale=0.6,label =above:\ \ $\rr^{a}$] (b4) at (5,3) {};
\node[]  at (3.5,3.4) {$\q^{a}$};
\draw[] (b2) -- (4.8,-1) -- (5.4,-1) -- cycle; 
\draw[] (b1) to (b3);
\draw[] (b2) to (b3);
\draw[] (b4) to (b3);
\draw[] (5.4,.6) -- (5,.1) -- (5.6,.1) -- cycle;
\draw[] (3.6,.3) -- (3.4,-.2) -- (4,-.2)-- cycle;
\draw[] (4.6,.8) -- (4,.2) -- (4.5,.2)-- cycle;
\draw[] (5,1.8) -- (4.3,1.3) -- (4.8,1.3) -- (5,1.8);
\draw[] (5,2.7) -- (5.2,2.1) -- (5.6,2.1) -- (5,2.7);
\draw[] (b2) -- (6.8,-1) -- (7.4,-1) -- cycle; 
\node[point,scale=0.6] (c1) at (6,0) {};
\node[point,fill=white,scale=0.6] (c2) at (9,0) {};
\node[point,scale=0.2] (c3) at (8,1) {};
\node[point,scale=0.6,label =above:\quad $\rr^{a+1}$] (c4) at (8,3) {};
\node[]  at (6.8,3.4) {$\q^{a+1}$};
\draw[] (c2) -- (9.4,-1) -- (9.8,-1) -- cycle; 
\draw[] (c1) to (c3);
\draw[] (c2) to (c3);
\draw[] (c4) to (c3);
\draw[] (8.4,.6) -- (8,.1) -- (8.6,.1) -- cycle;
\draw[] (6.6,.3) -- (6.4,-.2) -- (7,-.2)-- cycle;
\draw[] (7.6,.8) -- (7,.2) -- (7.5,.2)-- cycle;
\draw[] (8,1.8) -- (7.3,1.3) -- (7.8,1.3) -- (8,1.8);
\draw[] (8,2.7) -- (8.2,2.1) -- (8.6,2.1) -- (8,2.7);
\draw[thin,dashed,rounded corners] (6,-1.3) -- (6,4) -- (3,4) -- (3,-1.3) -- cycle;
\draw[thin,dashed,rounded corners] (10,-1.3) -- (10,4) -- (6,4) -- (6,-1.3) -- cycle;
\node[point,fill=white,scale=0.6,label =above:$\ct^{a-1}$] (a1) at (0,0) {};
\node[point,fill=white,scale=0.6,label =below:$\ct^a$,label =above:\ \ \ $\cf^{a-1}$] (a2) at (3,0) {};
\node[point,fill=white,scale=0.6,label =below:$\cf^a$,label =above:\ \ \ $\ct^{a+1}$] (b2) at (6,0) {};
\node[point,fill=white,scale=0.6,label =above right:\!\!\!\!$\cf^{a+1}$] (c2) at (9,0) {};
\end{tikzpicture}}
%
As $(\ct,\cf)$ is not symmetric, $\I$ is such that the `contacts' between the $\q$-copies are either both in $F^\I$ or both in $T^\I$.

%
The following `structural' claim (tracking the possible locations of $h(\cf)$ and $h(\ct)$) is proved in 
Appendix~\ref{a:treeproof}
(it is also used in the proof of Theorem~\ref{t:tree-1-dich} below): 


\begin{lclaim}\label{c:hinf}
Suppose $(\ct,\cf)$ is $\prec$-incomparable and of minimal distance
\textup{(}though $\q$ might contain $FT$-twins\textup{)}.
If $\itf=\inf_{\q}(\ct,\cf)$ then
$h(\itf)$ is in $\q^a$, and one of the following holds\textup{:}
\begin{enumerate}
\item
$\itf^a\prec_{\q^a}h(\itf)\prec_{\q^a}\ct^a$, $h(\ct)$ is in $\q^{a-1}$ with $\cf^{a-1}\prec_{\q^{a-1}}h(\ct)$, and $h(\cf)=\ct^a$\textup{;}

\item
$\itf^a\prec_{\q^a}h(\itf)\prec_{\q^a}\cf^a$, $h(\cf)$ is in $\q^{a+1}$ with $\ct^{a+1}\prec_{\q^{a+1}}h(\cf)$, and $h(\ct)=\cf^a$\textup{;}

\item
$h(\itf)=\itf^a$, $h(\cf)=\cf^a$, and $h(\ct)$ is in $\q^a$ with $h(\ct)\prec_{\q^a}\cf^a$\textup{;}

\item
$h(\itf)=\itf^a$, $h(\ct)=\ct^a$, and $h(\cf)$ is in $\q^a$ with $h(\cf)\prec_{\q^a}\ct^a$. 
\end{enumerate}
\end{lclaim}
\noindent
However, if $\q$ contains neither $FT$-twins nor $\prec$-comparable 
\solitarypair s,
none of (1)--(4) in Claim~\ref{c:hinf} can happen.
%
\end{proof}

Denote by $\Delta^+_\q$ the d-sirup $(\Delta_\q,\G)$ extended by an extra rule $\bot \leftarrow T(x), F(x)$ saying that the predicates $F$ and $T$ are disjoint, and so $\Delta^+_\q$ with $\q$ containing an $FT$-twin is inconsistent.
As shown in~\cite{DBLP:conf/kr/GerasimovaKKPZ20}, $(\Delta_\q,\G)$ is \L-hard when $\q$ has at least one solitary $F$ and at least one solitary $T$ but no $FT$-twins. So we have:

\begin{corollary}
Every d-sirup $(\Delta^+_\q,\G)$ with a ditree $\q$ is either FO-rewritable \textup{(}if $\q$ contains $FT$-twins\textup{)},
or \L-hard  \textup{(}if $\q$ is quasi-symmetric without $FT$-twins\textup{)},
 or \NL-hard  \textup{(}otherwise\textup{)}.
\end{corollary}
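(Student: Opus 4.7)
I would split the argument into the three cases distinguished by the structure of $\q$, proving each separately. For the FO-rewritability claim when $\q$ contains an $FT$-twin, the argument is essentially syntactic: if $z$ is a variable with $T(z), F(z) \in \q$, then any homomorphism $h \colon \q \to \I$ would require $h(z) \in T^\I \cap F^\I$, but the disjointness rule $\bot \leftarrow T(x), F(x)$ forces $T^\I \cap F^\I = \emptyset$ in every model $\I$ of $\Delta^+_\q$. Hence no such $h$ exists for any consistent $\I$, so $\Delta^+_\q \cup \A \models \G$ iff $\Delta^+_\q \cup \A$ is inconsistent. Since the only source of inconsistency is a direct clash in $\A$ (the other rules do not generate new $T$- or $F$-atoms on their own), this is equivalent to $\A \models \exists x\,(T(x) \land F(x))$, which provides the desired FO-rewriting.

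For case~2 (quasi-symmetric $\q$ without $FT$-twins), I would appeal to the result from~\cite{DBLP:conf/kr/GerasimovaKKPZ20} quoted immediately before the corollary, namely that $(\Delta_\q,\G)$ is \L-hard whenever $\q$ has at least one solitary $F$, at least one solitary $T$ and no $FT$-twins. For case~3 ($\q$ not quasi-symmetric, no $FT$-twins), I would apply Theorem~\ref{thm:tree1}: failure of quasi-symmetry means either $\q$ has a $\prec$-comparable \solitarypair{} (falling under its clause~$(i)$) or $\q$ has no such pair but some minimal-distance pair is non-symmetric (clause~$(ii)$), and in both sub-cases the theorem yields \NL-hardness of $(\Delta_\q,\G)$.

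The one piece of actual work is to verify that both hardness reductions lift from $(\Delta_\q,\G)$ to the extended query $(\Delta^+_\q,\G)$, i.e., that the data instance produced by each reduction is consistent with the added disjointness axiom. For Theorem~\ref{thm:tree1} this can be read off its proof: $\A_G$ consists of fresh edge-copies of $\q$ (with the chosen $\ct$, $\cf$ relabelled to $A$) glued only through $A$-labelled vertices of $G$, together with the single atoms $T(\snode)$ and $F(\tnode)$ on distinct vertices. Since $\q$ has no $FT$-twins, no internal variable of a copy carries both $T$ and $F$; since copies share only $A$-vertices of $G$, no sharing causes a clash; and $\snode \ne \tnode$. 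An analogous inspection of the \L-hardness reduction in~\cite{DBLP:conf/kr/GerasimovaKKPZ20} yields the same conclusion. Once consistency of $\Delta^+_\q \cup \A$ is established, it derives $\G$ iff $\Delta_\q \cup \A$ does, so the reductions lift verbatim and give the required \L- and \NL-hardness for $(\Delta^+_\q,\G)$. The main obstacle, therefore, is this consistency check; everything else is a direct citation of earlier results.
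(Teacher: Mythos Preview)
Your proposal is correct and follows essentially the same approach as the paper, which states the corollary as an immediate consequence of the preceding sentence (the citation of the \L-hardness result from~\cite{DBLP:conf/kr/GerasimovaKKPZ20}) together with Theorem~\ref{thm:tree1} and the remark that $\Delta^+_\q$ with an $FT$-twin in $\q$ is inconsistent. You are actually more careful than the paper in two respects: you spell out the explicit FO-rewriting $\exists x\,(T(x)\land F(x))$ for the $FT$-twin case, and you verify that the data instances produced by the hardness reductions contain no $T$/$F$-clash, so that the reductions lift from $(\Delta_\q,\G)$ to $(\Delta^+_\q,\G)$; the paper leaves both of these implicit.
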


The non-quasi-symmetric CQs $\q$ that are outside the scope of Theorem~\ref{thm:tree1} are those that have $FT$-twins and only contain $\prec$-incomparable \solitarypair s. 
That Theorem~\ref{thm:tree1} does not hold for such CQs is demonstrated by $\q_5$ in Example~\ref{ex-comple} (cf.~Claim~\ref{c:hinf}~(3)), $\q_6$ in Example~\ref{ex:one more} (cf.~Claim~\ref{c:hinf}~(4)), and $\q_7$, $\q_8$ below (cf.~Claim~\ref{c:hinf}~(1)), for all of which $(\Delta_\q,\G)$ is FO-rewritable. As before, the omitted labels on the arrows are all $R$.\\
\centerline{
\begin{tikzpicture}[>=latex,line width=0.8pt,rounded corners, scale = 0.64]
\node (0) at (-2.1,0) {$\q_7$};
\node[point,scale = 0.6,label=above:{\scriptsize $T$}] (0) at (-1.5,0) {};
\node[point,scale = 0.6,label=above:{\scriptsize $FT$}] (1) at (0,0) {};
\node[point,scale = 0.6] (m) at (1.5,0) {};
\node[point,scale = 0.6,label=above:{\scriptsize $FT$}] (2) at (3,0) {};
\node[point,scale = 0.6] (3) at (4.5,0) {};
\node[point,scale = 0.6] (4) at (6,0) {};
\node[point,scale = 0.6,label=above:{\scriptsize $F$}] (5) at (7.5,0) {};
\node[point,scale = 0.6,label=above:{\scriptsize $FT$}] (6) at (9,0) {};
\node[point,scale = 0.6,label=above:{\scriptsize $FT$}] (7) at (10.5,0) {};
\draw[<-,right] (0) to node[below] {}  (1);
\draw[<-,right] (1) to node[below] {}  (m);
\draw[<-,right] (m) to node[below] {} (2);
\draw[<-,right] (2) to node[below] {} (3);
\draw[->,right] (3) to node[below] {} (4);
\draw[->,right] (4) to node[below] {} (5);
\draw[->,right] (5) to node[below] {} (6);
\draw[->,right] (6) to node[below] {} (7);
\end{tikzpicture}}\\
%
%
Our next result, used in tandem with Theorem~\ref{thm:tree1}, gives an FO/\L-hardness dichotomy for d-sirups $(\Delta_\q,\G)$ with a ditree 1-CQ $\q$ (having a single solitary $F$). If such a $\q$ has $k$-many solitary $T$-nodes, each of which is $\prec$-incomparable with the $F$-node, we call it a \emph{$\Lambda$-CQ of span $k$}.  

\begin{theorem}\label{thm:dich}
$(i)$ For any $\Lambda$-CQ $\q$, either the d-sirup $(\Delta_\q,\G)$ is FO-rewritable or evaluating it is \LogSpace-hard.

$(ii)$ For $\Lambda$-CQs of span $k$, deciding this FO/\LogSpace-dichotomy can be done in time $p(|\q|)2^{p'(k)}$, for some polynomials $p$ and $p'$. Thus, deciding FO-rewritability of d-sirups with a $\Lambda$-CQ is 
fixed-parameter tractable, if the $\Lambda$-CQ's span is regarded as a parameter. 
%
\end{theorem}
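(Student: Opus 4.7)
The plan is to establish a syntactic characterization of FO-rewritability for $\Lambda$-CQs based on a finite-index condition on cactus segments, and then argue \LogSpace-hardness of evaluation when the condition fails.

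\textbf{Step 1 (Finite-index criterion).} For a $\Lambda$-CQ $\q$ with span $k$, the skeleton $\C^s$ of every cactus $\C \in \mathfrak K_\q$ is a rooted tree of branching at most $k$ whose edges record which solitary $T$-node was used to bud. Using Proposition~\ref{thmequi}, I would translate boundedness into a condition on \emph{segment-signatures}: with each segment $\mathfrak s$ we associate a signature describing, for each subset of the $k$ solitary $T$-nodes of $\mathfrak s$ and each orientation of the $FT$-twins, which partial homomorphisms from $\q^-$ can be extended through the cactus rooted at $\mathfrak s$. The total number of such signatures is bounded by $2^{p'(k)}$ for some polynomial $p'$, because a signature is a subset of a set of size polynomial in $|\q|$ indexed by a $2^{O(k)}$-many configurations of the solitary $T$-nodes. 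The reachable signatures can then be generated by a bottom-up least-fixed-point procedure. The criterion for FO-rewritability will be that every reachable signature is realised at depth bounded by a polynomial in $|\q|$ and $2^k$.

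\textbf{Step 2 (FO-rewritability when the criterion holds).} When the criterion holds, a K\"onig-style argument on cactus depths yields a uniform depth bound $d$ such that every cactus $\C \in \mathfrak K_\q$ admits a homomorphism from a cactus of depth $\le d$. Since $\q$ is focused (no solitary $T$ is $\preceq$-comparable with the solitary $F$, and focus-preservation is forced on all homomorphisms $\C' \to \C$ by the $\Lambda$-shape), Proposition~\ref{thmequi}$(c)$ gives FO-rewritability of $(\Delta_\q,\G)$.

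\textbf{Step 3 (\LogSpace-hardness when the criterion fails).} If the fixed point does not stabilise, I would extract a family of arbitrarily deep cactuses along whose branch the signature evolves through a cycle that cannot be shortcut, and then reduce an \LogSpace-complete problem (e.g.\ reachability in a rooted forest, or ORD on a successor relation) to evaluation of $(\Delta_\q,\G)$. As in the proof of Theorem~\ref{thm:tree1}, the reduction replaces each input edge by a glued copy of $\q$ arranged so that the answer to $(\Delta_\q,\G)$ is `yes' iff the required reachability holds. The crucial point is that the identified signature-cycle certifies the impossibility of folding into a shallow cactus even in the presence of $FT$-twins, so the reduction survives the foldings that prevented \NL-hardness in Theorem~\ref{thm:tree1}.

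\textbf{Step 4 (Complexity bound).} The fixed-point computation of Step 1 manipulates at most $2^{p'(k)}$ signatures, each of size polynomial in $|\q|$; each transition involves checking homomorphism conditions between finitely many segments and takes time polynomial in $|\q|\cdot 2^k$. The overall running time is therefore $p(|\q|)2^{p'(k)}$, which is polynomial in $|\q|$ whenever $k$ is treated as a fixed parameter, giving the claimed fixed-parameter tractability.

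The main obstacle will be Step 3. Because $FT$-twins allow highly non-trivial foldings, a naive path-like reduction of Theorem~\ref{thm:tree1} collapses; the reduction must exploit the specific combinatorial defect witnessed by the non-stabilising fixed point and show that it faithfully encodes an \LogSpace-hard instance in spite of every possible folding. Matching the reduction precisely to the algorithmic criterion of Step 1 — so that FO-rewritability on one side and \LogSpace-hardness on the other are decided by the same invariant — is the conceptual core of the argument.
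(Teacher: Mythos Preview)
Your outline has the right architecture---decidable criterion, FO-rewritability on one side, \LogSpace-hardness on the other---but there are genuine gaps in the key technical steps, and the approach diverges substantially from the paper's.

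\textbf{The signature count in Step~1 does not work.} You say a signature is ``a subset of a set of size polynomial in $|\q|$ indexed by $2^{O(k)}$-many configurations'' and conclude there are at most $2^{p'(k)}$ signatures. But a subset of a set of size $\mathrm{poly}(|\q|)\cdot 2^{O(k)}$ gives $2^{\mathrm{poly}(|\q|)\cdot 2^{O(k)}}$ possible signatures, which is exponential in $|\q|$, not merely in $k$. The paper avoids this by not tracking partial homomorphisms at all: its ``type'' of a segment is simply the triple $(P,i,C)$ recording which of the $k$ solitary $T$-nodes are budded at the parent, which edge leads into the segment, and which $T$-nodes are budded below---so there are only $2^{O(k)}$ types, independent of $|\q|$. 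Homomorphism questions are then asked \emph{about} these types (can a root/leaf segment map into the blow-up of a given type?), each answerable in time $\mathrm{poly}(|\q|)$ because the structures involved are trees of bounded width. This separation of the $k$-dependent combinatorics (the type graph~$\mathfrak G$) from the $|\q|$-dependent homomorphism checks is exactly what yields the $p(|\q|)2^{p'(k)}$ bound; your signatures entangle the two.

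\textbf{Step~2 contains a false claim.} You assert that every $\Lambda$-CQ is focused. The CQ $\q_6$ in Example~\ref{ex:one more} is a $\Lambda$-CQ (its solitary $F$ and both solitary $T$'s lie on different branches from the root) and is explicitly shown there to be \emph{not} focused. Fortunately Proposition~\ref{thmequi} states that conditions $(b)$ and $(c)$ are equivalent for arbitrary 1-CQs, so you do not need focus to get FO-rewritability of $(\Delta_\q,\G)$ from a depth bound---but your justification as written is wrong.

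\textbf{Step~3 is where the paper does the real work, and your sketch does not get there.} The paper's criterion is not a fixed-point stabilisation but a condition on \emph{periodic structures} $(\pre,\per,\post)$ in the type graph $\mathfrak G$: FO-rewritability holds iff every such structure with $\per\neq\emptyset$ admits one of three specific homomorphisms (Claim~\ref{lemmaFO}). When the criterion fails, the paper first strengthens the witness (showing one can also rule out a leaf-segment homomorphism into $\bar\pre\cup\bar\per$) and then reduces \emph{undirected} reachability: each vertex of the input graph is replaced by a fresh copy of~$\bar\per$, edges of the graph become cross-links between copies, and $\bar\pre$, $\bar\post$ are attached at the source and target. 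The failure of all four homomorphism conditions is precisely what blocks every folding of $\q$ into the resulting instance under the ``wrong'' $T/F$ assignment. Your proposal to reuse the edge-replacement gadget of Theorem~\ref{thm:tree1} with an unspecified ``signature-cycle'' does not supply this mechanism; in particular it gives no argument for why $FT$-twins cannot collapse the reduction.
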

\begin{proof}
Let $\q$ be a $\Lambda$-CQ with solitary $T$-nodes $T(y_1),\dots,T(y_k)$. By Prop.~\ref{thmequi}, $(\Delta_\q,\G)$ is FO-rewritable iff there exists $d < \omega$ such that any cactus (for $\q$) contains a homomorphic image of some cactus of depth $\le d$. 
The \emph{neighbourhood} of a segment $\segs$ in a cactus $\C$ consists of $\segs$ itself and those segments that, in the skeleton $\C^s$, are the children, parent and siblings of $\segs$---at most $2k + 1$ segments in total. Since $\q$ is a ditree, in which the $F$-node is $\prec$-incomparable with any $T$-node, the following holds for any cactuses $\C$, $\C'$ (for $\q$):

\begin{lclaim}\label{c:segs}
Suppose $h \colon \C \to \C'$ is a homomorphism that maps the root  of a segment $\segs$ in $\C$ to a node in a segment $\segs'$ in $\C'$. Then the nodes in $\segs$ are mapped by $h$ to nodes in the neighbourhood of $\segs'$.
\end{lclaim}
\begin{example}\label{e:three}
\em
Consider the $\Lambda$-CQ $\q_8$ of span $1$ below. We invite the reader to verify that there is a homomorphism $h \colon \C_2 \to \C_i$, for $i \ge 3$ (where $\C_i$ is obtained by $i$-many applications of \textbf{(bud)} to $\C_0 = \q_8$) such that the $h$-image of the leaf segment in $\C_2$ intersects three segments in $\C_i$. It is not hard to see that $(\Delta_\q,\G)$ is FO-rewritable to $\exists \avec{z}\,\big(\C_0\lor\C_1\lor\C_2\big)$.\\ 
\centerline{ 
\begin{tikzpicture}[decoration={brace,mirror,amplitude=7},line width=0.8pt,scale =.95]
\node (0) at (-0.3,0) {$\q_8$};
\node[point,scale=0.6] (r) at (0,0) {};
\node[point,scale=0.6,label=right:{\scriptsize $FT$}] (missed) at (0,-.5) {};
\node[point,scale=0.6] (11) at (1,0) {};
\node[point,scale=0.6] (12) at (2,0) {};
\node[point,scale=0.6] (13) at (3,0) {};
\node[point,scale=0.6] (20) at (0,-1) {};
\node[point,scale=0.6] (21) at (1,-1) {};
\node[point,scale=0.6,label=above:{\scriptsize $F$}] (22) at (2,-1) {};
\node[point,scale=0.6,label=above:{\scriptsize $FT$}] (23) at (3,-1) {};
\node[point,scale=0.6,label=above:{\scriptsize $FT$}] (24) at (4,-1) {};
\node[point,scale=0.6] (25) at (5,-1) {};
\node[point,scale=0.6,label=above:{\scriptsize $FT$}] (26) at (6,-1) {};
\node[point,scale=0.6,label=above:{\scriptsize $FT$}] (27) at (7,-1) {};
\node[point,scale=0.6,label=above:{\scriptsize $FT$}] (14) at (4,0) {};
\node[point,scale=0.6,label=above:{\scriptsize $FT$}] (15) at (5,0) {};
\node[point,scale=0.6,label=above right:{\scriptsize $FT$}] (30) at (0,-2) {};
\node[point,scale=0.6] (31) at (1,-2) {};
\node[point,scale=0.6,label=above:{\scriptsize $FT$}] (32) at (2,-2) {};
\node[point,scale=0.6,label=above:{\scriptsize $T$}] (33) at (3,-2) {};
\node[point,scale=0.6] (34) at (4,-2) {};
\node[point,scale=0.6,label=above:{\scriptsize $FT$}] (35) at (5,-2) {};
\node[point,scale=0.6,label=above:{\scriptsize $FT$}] (36) at (6,-2) {};
\draw[->,right] (r) to node[below] { } (11);
\draw[->,right] (11) to node[below] { } (12);
\draw[->,right] (12) to node[below] { } (13);
\draw[->,right] (13) to node[below] { } (14);
\draw[->,right] (14) to node[below] { } (15);
\draw[->,right] (r) to node[below] { } (missed);
\draw[->,right] (missed) to node[below] { } (20);
\draw[->,right] (20) to node[below] { } (21);
\draw[->,right] (21) to node[below] { } (22);
\draw[->,right] (22) to node[below] { } (23);
\draw[->,right] (23) to node[below] { } (24);
\draw[->,right] (24) to node[below] { } (25);
\draw[->,right] (25) to node[below] { } (26);
\draw[->,right] (26) to node[below] { } (27);
\draw[->,right] (20) to node[below] { } (30);
\draw[->,right] (30) to node[below] { } (31);
\draw[->,right] (31) to node[below] { } (32);
\draw[->,right] (32) to node[below] { } (33);
\draw[->,right] (33) to node[below] { } (34);
\draw[->,right] (34) to node[below] { } (35);
\draw[->,right] (35) to node[below] { } (36);
\end{tikzpicture}
}
\end{example}
In any skeleton $\C^s$, we label by $i\in \{1, \dots, k\}$ every edge that results from budding $T(y_i)$.
The neighbourhood of any segment $\segs$ is given by the triple $t = (P,i_\segs,C)$,  where $P \subseteq \{1,\dots,k\}$ comprises the labels on the edges from the parent $\segs'$ of $\segs$, $i_\segs$ is the label on $(\segs',\segs)$, and $C \subseteq \{1,\dots,k\}$ are the labels on the edges to $\segs$'s children. If $\segs$ is the root of $\C^s$, $P = \emptyset$ and we set  $i_\segs = 0$; if $\segs$ is a leaf, $C=\emptyset$. We refer to $\segs$ as the \emph{central segment of} $t$, and to $t$ as the \emph{type} of $\segs$; we call it a \emph{root type} if $\segs$ is the root of $\C^s$, and a \emph{leaf type} if $\segs$ is a leaf. A cactus $\C$ is \emph{acyclic} if none of the branches in $\C^s$ has two nodes of the same type.   

Let $\mathfrak G$ be the digraph whose nodes are all possible types and there is an edge $(t,t')$ labelled by $i\in \{1, \dots, k\}$ iff some skeleton $\C^s$ has an edge $(\segs,\segs')$ labelled by $i$ with $\segs$ being of type $t$ and $\segs'$ of type $t'$. Let $\chi_{\C}$ be the \emph{canonical homomorphism} of $\C^s$ to $\mathfrak G$ (mapping the segments of $\C^s$ to their types). 
For a subgraph $\mathfrak H$ of $\mathfrak G$ denote by $\bar{\mathfrak H}$ the 
result of replacing the types in $\mathfrak H$ with their central segments and glueing them at $A$-nodes as indicated by the types and edges in $\mathfrak H$, mimicking \textbf{(bud)}. We call this operation the $\bar\cdot$-\emph{closure} of $\mathfrak H$. 

A node $v$ of type $(P,i,C)$ in a subgraph $\mathfrak H$ of $\mathfrak G$ is  \emph{realisable in $\mathfrak H$} if $v$ has exactly one outgoing edge labelled by $j$ in $\mathfrak H$, for each $j \in C$.
We call $\mathfrak H$ \emph{realisable} if it has exactly one \emph{source} (a node without incoming edges) of root type and all nodes in $\mathfrak H$ are realisable.

A \emph{periodic structure} is a triple $\mathfrak P = (\pre,\per,\post)$ satisfying the following conditions. 
First, we take some realisable subgraph $\mathfrak H$ of $\mathfrak G$ and define the \emph{pre-periodic} part $\pre$ to be  the subgraph of $\mathfrak H$ induced by those nodes $v$ in $\mathfrak H$, for which there are no arbitrarily long paths from the source to $v$. The \emph{periodic} part $\per$ is induced by the remaining nodes in $\mathfrak H$. 
%
Finally, the \emph{post-periodic part} $\post$ comprises a set $R$ of nodes in $\per$,  intersecting any directed cycle in $\per$, and a family of \emph{acyclic} subgraphs $\mathfrak H_v$, $v\in R$, of $\mathfrak G$ with unique source $v$ and such that all of the $\mathfrak H_v$'s nodes are realisable in $\mathfrak H_v$, for any $v \in  R$. 
Denote by $\bar{\mathfrak P} = (\bar\pre,\bar\per,\bar\post)$ the triple obtained by taking the $\bar\cdot$-closure of the components $\pre$, $\per$ and $\post$ in $\mathfrak P$.

To illustrate, for $k=1$, in the only periodic structure with non-empty $\per$ shown below, $\pre$ comprises the root segment $\segs_r$, $\per$ the segment $\segs$ (with two $A$-nodes), and $\post$ the leaf segment $\segs_l$. There are also three `degenerate' periodic structures with empty $\per$ and $\post$.\\ 
\centerline{
		\begin{tikzpicture}[line width=0.8pt]
		\node[point,scale=0.6,draw = white] (00) at (1.5,-.45) {$1$};
		\node[point,fill=black,scale=0.5,label=above:{$\segs_r$}] (1) at (0,0) {};
		\node[point,fill=black,scale=0.5,label=above:{$\segs$}] (2) at (1.5,0) {};
		\node[point,fill=black,scale=0.5,label=above:{$\segs_l$}] (3) at (3,0) {};
		\draw[->,right] (1) to node[below] {\scriptsize $1$} (2);
		\draw[->,right] (2) to node[below] {\scriptsize $1$} (3);
		\draw[->,right, scale =2] (2) to [out=-130,in=-50,loop] (2);
\end{tikzpicture}}\\ 
%
%
%
The \emph{acyclic version} of a rooted digraph $G$ is constructed as follows. We consider each path $\pi$ starting in the root and ending at the first repeating node $v$ on $\pi$ with the last edge $(u,v)$. For all such $\pi$, $v$ and $(u,v)$, we add to $G$ a fresh node $v'$ and replace $(u,v)$ by $(u,v')$.

%
%
The proof of the following criterion can be found in 
Appendix~\ref{a:FOlemmaproof}:

\begin{lclaim}\label{lemmaFO}
The d-sirup $(\Delta_\q,\G)$ is FO-rewritable iff, for any periodic structure $\mathfrak P = (\pre,\per,\post)$ with $\per\ne\emptyset$, one of the following holds:
\begin{enumerate}
\item[{\bf (h1)}] there is a homomorphism from some cactus to the $\bar\cdot$-closure of the acyclic version of $\pre \cup \per$\textup{;}

\item[{\bf (h2)}] there is a homomorphism from the root segment of some cactus to $\bar \per$\textup{;} 

\item[{\bf (h3)}] there is a homomorphism from the root segment of one of the $\bar{\mathfrak H}_v$ to $\bar\post$. 
\end{enumerate}
\end{lclaim}

On the other hand, we have the following claim, which is proved 
in Appendix~\ref{ap:L-h}
and establishes an FO/L-hardness dichotomy of d-sirups with a $\Lambda$-CQ:

\begin{lclaim}\label{l:L-hard}
If none of conditions {\bf (h1)}--{\bf (h3)} holds, then evaluating $(\Delta_\q,\G)$ is $\L$-hard. 
\end{lclaim}

In Appendix~\ref{a:complexity}, 
we show that checking the criterion of Claim~\ref{lemmaFO} for $\Lambda$-CQs of span $k$ can be done in time $p(|\q|)2^{p'(k)}$, for some polynomials $p$ and $p'$. 
\end{proof}

As a consequence of Theorems~\ref{thm:tree1} and~\ref{thm:dich}, we obtain the  dichotomy:

\begin{corollary}
Any d-sirup $(\Delta_\q,\G)$ with a ditree 1-CQ $\q$ is either FO-rewritable or \L-hard. Deciding this dichotomy, parameterised by the number of solitary $T$-nodes in CQs, is fixed-parameter tractable.
\end{corollary}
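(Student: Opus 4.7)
The plan is to derive the corollary by a case analysis on the 1-CQ $\q$, combining the prior-work facts (a)--(d) listed before Theorem~\ref{thm:tree1} with Theorems~\ref{thm:tree1} and~\ref{thm:dich}. First I would preprocess: since $\q$ is a ditree, it can be minimised in polynomial time (see the references in Sec.~\ref{sec:ditrees}), and $(\Delta_\q,\G)$ is equivalent to $(\Delta_{\q^{\mathrm{min}}},\G)$, so without loss of generality $\q$ is minimal. Recall that, being a 1-CQ, $\q$ has exactly one solitary $F$-node; let $k$ denote the number of its solitary $T$-nodes (the parameter).

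I would then split into three cases. If $k=0$, then by fact (a), applied with $F$ and $T$ interchanged (permitted because the disjunctive rule~\eqref{d-sirup1} is symmetric in $T$ and $F$), $(\Delta_\q,\G)$ is FO-rewritable. If $k \ge 1$ and $\q$ contains a $\prec$-comparable \solitarypair, then hypothesis $(i)$ of Theorem~\ref{thm:tree1} is met, so $(\Delta_\q,\G)$ is \NL-hard and, a fortiori, \L-hard. Otherwise $k \ge 1$ and every solitary $T$-node is $\prec$-incomparable with the unique solitary $F$-node, which is exactly the definition of a $\Lambda$-CQ of span $k$; Theorem~\ref{thm:dich}$(i)$ then delivers the dichotomy: either FO-rewritable or \L-hard. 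Together these cases give the first sentence of the corollary.

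For the fixed-parameter tractability statement, I would observe that each of the three cases can be detected efficiently in $|\q|$ with dependence on $k$ only through Theorem~\ref{thm:dich}. Minimisation and the test $k=0$ are polynomial in $|\q|$; testing existence of a $\prec$-comparable \solitarypair{} is polynomial in $|\q|$ (scan the ancestor chain in the ditree of every solitary $F/T$-node). Finally, in the remaining $\Lambda$-CQ case, Theorem~\ref{thm:dich}$(ii)$ yields a decision procedure in time $p(|\q|)2^{p'(k)}$. Summing these contributions gives an overall FPT bound in the parameter $k$.

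I do not expect a genuine obstacle here, as the corollary is essentially a bookkeeping combination of previously established results; the only subtle point worth double-checking is that the prior-work fact (a) applies after swapping the roles of $F$ and $T$, which is immediate from the symmetry of the covering axiom~\eqref{d-sirup1}, and that the case with $k\ge 1$ and no $\prec$-comparable \solitarypair{} matches the $\Lambda$-CQ hypothesis of Theorem~\ref{thm:dich} verbatim (in particular, $FT$-twins are permitted by that definition, so we need not exclude them in this case).
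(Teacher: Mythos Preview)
Your approach is correct and is exactly what the paper intends when it says the corollary follows from Theorems~\ref{thm:tree1} and~\ref{thm:dich}. One small wrinkle worth tightening: after you minimise and write ``being a 1-CQ, $\q$ has exactly one solitary $F$-node'', note that the core $\q^{\min}$ need not be a 1-CQ---the retraction may send the unique solitary $F$-node onto an $FT$-twin, leaving $\q^{\min}$ with no solitary $F$ at all. That case is harmless (fact~(a) gives FO-rewritability immediately), but it should be listed alongside your $k=0$ case rather than absorbed into ``without loss of generality''. Since the core is a retract, every node it retains carries all of its unary labels from $\q$; hence minimisation can only \emph{decrease} the number of solitary $T$-nodes, so the span of $\q^{\min}$ is at most the original parameter $k$ and your FPT bound via Theorem~\ref{thm:dich}$(ii)$ transfers back to $k$ as claimed.
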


This result is in sharp contrast to 2\ExpTime-completeness of deciding FO-rewritability of d-sirups $(\Delta_\q,\G)$ with a dag 1-CQ $\q$ having two solitary $T$-nodes. We hope that, using the techniques of~\cite{DBLP:conf/ijcai/LutzS17,DBLP:conf/kr/GerasimovaKKPZ20}, this dichotomy can be extended to a complete FO/L/NL/P-tetrachotomy of all d-sirups with a ditree 1-CQ. As a first step, we obtain the following trichotomy:

\begin{theorem}\label{t:tree-1-dich}
For any a ditree CQ $\q$ with one solitary $F$ and one solitary $T$, $(\Delta_\q,\G)$ is either FO-rewritable, or $\L$-complete, or \NL-complete. Deciding this trichotomy can be done in polynomial time. 
\end{theorem}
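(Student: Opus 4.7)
The plan is to case-split on whether the unique solitary pair $(\ct,\cf)$ of $\q$ is $\prec$-comparable. If it is, then Theorem~\ref{thm:tree1}$(i)$ gives \NL-hardness and item (c) in the list preceding Theorem~\ref{thm:tree1} places $(\Delta_\q,\G)$ in \NL, so $(\Delta_\q,\G)$ is \NL-complete.

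Assume therefore that $(\ct,\cf)$ is $\prec$-incomparable, so $\q$ is a $\Lambda$-CQ of span $1$. By Theorem~\ref{thm:dich}, we can decide in polynomial time whether $(\Delta_\q,\G)$ is FO-rewritable or \L-hard, and in the FO-rewritable case we are done. I then plan to show that, among the \L-hard instances, $(\Delta_\q,\G)\in\L$ exactly when $\q$ is quasi-symmetric. Since here $(\ct,\cf)$ is the only solitary pair and is automatically of minimal distance, quasi-symmetry reduces to symmetry with respect to $\rr$ of the CQ obtained from $\q$ by removing the $F,T$ labels on $\cf,\ct$ and cutting their below-branches --- a polynomial-time check via ditree isomorphism. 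If $\q$ is quasi-symmetric, item (d) yields \L-membership and hence \L-completeness. Otherwise we must establish \NL-hardness (with \NL-membership given by item (c)). For the sub-case where $\q$ has no $FT$-twins, this is exactly Theorem~\ref{thm:tree1}$(ii)$.

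The main obstacle is the remaining sub-case: $\q$ incomparable, \L-hard, not quasi-symmetric, with $FT$-twins. The plan is to reuse the dag-reachability reduction of Theorem~\ref{thm:tree1}: replace each edge $e=(\unode,\vnode)$ of the input dag by a fresh copy $\q^e$ of $\q$ with $\ct^e$ identified to $\unode$ (re-labelled $A$) and $\cf^e$ identified to $\vnode$ (re-labelled $A$), and add $T(\snode),F(\tnode)$. The `yes' direction is as in Theorem~\ref{thm:tree1}. For the `no' direction, define the natural model $\I$ by labelling with $T$ the $A$-nodes reachable from $\snode$ in $G$ and with $F$ the rest. For any candidate homomorphism $h\colon \q\to\I$, Claim~\ref{c:hinf} (whose statement explicitly allows $FT$-twins) restricts $h(\ct),h(\cf),h(\itf)$ to one of four configurations inside the neighbourhood of a single $\q$-copy.

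The crux, unlike the twin-free case, is that each configuration (1)--(4) might in principle be realised through an $FT$-twin acting in its `wrong' role. I expect to prove the following structural dichotomy: if any such realisation exists, it extracts a self-map $\q\to\q$ whose iterates collapse the depth of arbitrary cactuses in $\mathfrak K_\q$, so by Prop.~\ref{thmequi}$(c)$ one obtains an FO-rewriting of $(\Delta_\q,\G)$, contradicting the \L-hardness assumption; hence under \L-hardness every configuration is blocked and the reduction correctly certifies unreachability, giving \NL-hardness. This is the technical heart of the proof, and the collapsing mechanism will mirror the constructions exhibited in Examples~\ref{ex:one more} and~\ref{e:three}. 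Polynomial-time decidability of the trichotomy follows since comparability, quasi-symmetry, and the presence of the depth-collapsing self-maps can each be tested in polynomial time, and Theorem~\ref{thm:dich}$(ii)$ at $k=1$ handles FO-rewritability.
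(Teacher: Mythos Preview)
Your proposal is correct and reaches the same trichotomy, but the organization differs from the paper's in a way worth noting. The paper does \emph{not} invoke Theorem~\ref{thm:dich} at all. Instead, in the incomparable non-quasi-symmetric case it performs a direct, concrete test: build the three-copy structure $\wormh$ from the proof of Theorem~\ref{thm:tree1}$(ii)$, consider the two models $\I$ in which both contacts carry $T$ (resp.\ both carry $F$), and check whether $\q$ homomorphically embeds into either. If neither embedding exists, the reduction of Theorem~\ref{thm:tree1}$(ii)$ goes through verbatim (Claim~\ref{c:hinf} forces cases (1)--(4), none realisable), giving \NL-hardness. If some embedding $h$ exists, the paper uses $h$ to assemble, segment by segment, a homomorphism $g\colon\C_1\to\C_n$ or $g\colon\C_2\to\C_n$ for every $n$, whence FO-rewritability via Prop.~\ref{thmequi}. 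This is the same ``collapsing'' phenomenon you describe, but it is not literally a self-map $\q\to\q$ whose iterates are composed; rather, $h$ is a map $\q\to\wormh$ (three glued copies), and $g$ is built by placing copies of $h$ on consecutive segments. Your routing through Theorem~\ref{thm:dich} is logically fine---it gives the FO/\L-hard split and then your contrapositive argument (``if $h$ existed we would be FO, contradicting \L-hardness'') recovers \NL-hardness---but it imports the periodic-structure machinery where a single tree-homomorphism test on an $O(|\q|)$-size structure suffices. Conversely, your route buys something: \L-hardness in the quasi-symmetric case comes for free from Theorem~\ref{thm:dich}, whereas the paper proves it by a separate undirected-reachability reduction.
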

\begin{proof}
We use the results of~\cite{DBLP:conf/kr/GerasimovaKKPZ20} listed as items (c) and (d) on page~\pageref{reference}.
Let $\ct$ and $\cf$ be the solitary $T$- and $F$-nodes in $\q$. If $(\ct,\cf)$ is $\prec$-comparable then $(\Delta_\q,\G)$ is \NL-complete by (c) and  Theorem~\ref{thm:tree1}~$(i)$. If $\q$ is quasi-symmetric, then $(\Delta_\q,\G)$ is in \L{} by (d);
\L-hardness is shown 
in Appendix~\ref{a:trich}
by a reduction of graph reachability (using a construction that is similar to the one in the proof of Theorem~\ref{thm:tree1}). 

Otherwise, we consider two models $\I$ over the structure $\wormh$ (defined in the proof-sketch of Theorem~\ref{thm:tree1}~$(ii)$): one has both `contacts' in $F^{\I}$, the other in $T^{\I}$. We check whether there exists a homomorphism from $\q$ to either of these models: If neither, then  
$(\Delta_\q,\G)$ is \NL-hard by the the proof of Theorem~\ref{thm:tree1}~$(ii)$.
If at least one of them is possible, then $(\Delta_\q,\G)$ is FO-rewritable by Prop.~\ref{thmequi} (as one can use the $\q\to\I$ homomorphism to define homomorphisms from some depth $\leq 2$ cactus to any larger cactus).
Details can be found in Appendix~\ref{a:trich}.
\end{proof}


\section{Conclusions}

In this paper, we settled the long-standing open problem on the complexity of deciding boundedness of monadic single rule datalog programs. Namely, we proved this problem to be 2\Exp-complete---that is, as hard as deciding program boundedness of arbitrary monadic datalog programs~\cite{DBLP:conf/stoc/CosmadakisGKV88}. The main innovation of our proof is that we look at the computations of ATMs and the expansions of sirups through the lens of Boolean circuits and show how these circuits  can be `implemented' in dag-shaped CQs to verify the correctness of computations encoded by the expansions.  

We obtained this result while trying to classify a somewhat different type of basic recursive programs called monadic disjunctive sirups. The disjunctive rule $F(x) \lor T(x) \leftarrow A(x)$ can make answering a Boolean CQ it mediates in the d-sirup range between \ACz{} and \coNP. Deciding FO-rewritability of monadic d-sirups (as well as of Schema.org and $\DLb$  ontology-mediated queries) was shown to be between 2\Exp{} and 2\NExpTime, and so a complete classification of monadic d-sirups according to their data complexity can be as illusory as the classification of monadic sirups, which has been challenging the datalog community since the 1980s. 

On the other hand, this paper shows that d-sirups with ditree CQs are less impenetrable, and we believe a complete classification is possible, though it could be quite tricky and laborious. This problem as well as pinpointing the exact complexity of deciding FO-rewritability of monadic d-sirups (2\Exp{} vs 2NexpTime) are left for future work.


\begin{acks}
This work was supported by the UK EPSRC grant EP/S032282, HSE University Basic Research Program, and Russian Science Foundation 20-11-20203 (Section 4). Thanks are due to the anonymous reviewers for their comments and constructive suggestions.
\end{acks}


\newpage




\newpage

\appendix


\section{Proof of Lemma~\ref{l:dagq} from {\bf (foc)}, {\bf (leaf)} and {\bf (\locality)}}\label{a:lproof}

Let $\q$ be a 1-CQ having two solitary $T$-nodes $\tleft$ and $\tright$ such that, for every $\C,\C'\in \mathfrak K_{\omqmx}$,
\begin{description}
\item[(foc)] if $h\colon\C\to\C'$ is a homomorphism then $h$ maps 
the root segment of $\C$ into 
the root segment of $\C'$;

\item[(leaf)] there is a homomorphism $h \colon \qmxfa \to \C$ mapping $\qmxfa$ into some non-leaf segment $\mathfrak s$ of $\C$ iff either $\mathfrak s$ is \syic{} or $\mathfrak s$ 
represents a $\qreject$-configuration in $\C^s$;

\item[(\locality)] if $h$ maps $\qmxfa$ into a non-leaf segment $\mathfrak s$ that is not \pbr{} in $\C^s$ due to violating 
{\bf (pb1)},
but $\mathfrak s$ is \syc{} in $\C^s$ according to the other properties, then
\begin{itemize}
\item[--]
$h(\tleft)=\tleft$ and $h(\tright) \ne \tleft$, if $\mathfrak s=\q^-_{TA}$;
\item[--]
$h(\tright)=\tright$ and $h(\tleft) \ne \tright$, if $\mathfrak s=\q^-_{AT}$.
\end{itemize}
\end{description}
We prove the following:

\smallskip
\  {\sc Lemma}~\ref{l:dagq}.
\emph{
$\atm$ rejects $\w$ iff there is $K<\omega$ such that every $\C\in \mathfrak K_{\omqmx}$ contains a homomorphic image of some $\C^-\in \mathfrak K_{\omqmx}$ of depth $\le K$.
}

\smallskip
$(\Leftarrow)$	
Suppose $\atm$ accepts $\w$, and so there is an accepting com\-pu\-tation-tree $\T_{\textit{accept}}$. We 
 construct an ideal tree $\beta^\infty$
by using only $\beta_{\T_{\textit{accept}}}^+$ in every step, and then take a subtree $\beta^\infty_{\nodea}$ whose root $\nodea$ is the main node of some $\cinit$.
Given  $K<\omega$, we take the $(K+1)$-cut $\beta_K$ of the \whatisit{} tree $\beta^\infty_{\nodea}$.
By the ($\Rightarrow$) direction of Claim~\ref{c:syc},
every node of depth $\le K$ in $\beta_K$ is \syc{} in $\beta_K$.
By the construction of $\beta^\infty$ from the accepting \otree{} $\T_{\textit{accept}}$, no node in $\beta_K$ 
represents a $\qreject$-configuration. 
Let $\C_K$ be the 
cactus such that $\C_K^s=\beta_K$. Then, by the ($\Rightarrow$) direction of {\bf (leaf)},
there is no homomorphism from $\qmxfa$ to $\C_K$ mapping $\qmxfa$ into some non-leaf segment of $\C_K$.

It follows that no 
cactus $\C^-$ of depth $\le K$ can be homomorphically mapped to $\C_{K}$.
Indeed, 
suppose on the contrary  that there is a homomorphism $h$ from $\C^-$ to $\C_K$. 
By {\bf (foc)}, $h$ must map the root segment of $\C^-$ into the root segment of $\C_K$.
Thus, any leaf segment $\mathfrak s$ of $\C^-$
(and so $\qmxfa$) should be mapped by $h$ into some segment $\mathfrak s'$ of $\C_K$ whose depth is $\le K$  in $\C_K^s$, and so $\mathfrak s'$ is a non-leaf segment of $\C_K$, which is a contradiction.

\smallskip
$(\Rightarrow)$		
Suppose $\atm$ rejects $\w$, and so every \otree{} is rejecting.
Let $K=\dexp+8\dpoly+19$
and take some cactus $\C$ of depth $>K$. 
We will cut each of the long branches of $\C^s$ at some depth $\le K$, and show that the resulting cactus $\C^-$ can be
mapped homomorphically into $\C$.

To this end, take a branch $\mathcal{B}$ of $\C^s$ longer than $K$. We will cut $\mathcal{B}$ (and possibly some other
branches) at some depth $\le K$, and show that the resulting cactus $\C'$ can be mapped homomorphically into $\C$.
There are two cases: either the $4\dpoly+11$-long prefix of $\mathcal{B}$ does not contain a $001{\ast}$-sequence,
or it does. In the former case, the segment $\mathfrak s$ at the end of the $4\dpoly+11$-long prefix of $\mathcal{B}$
(which is a  non-leaf segment of $\C^s$) is not \good{} in $\C^s$, and so it is \syic{} in $\C^s$.
Thus, by the ($\Leftarrow$) direction of {\bf (leaf)},
there is a homomorphism $h\colon\qmxfa\to\C$ mapping $\qmxfa$ into $\mathfrak s$.
We cut $\mathcal{B}$ at $\mathfrak s$. Let $\mathfrak s'$ denote the leaf segment
corresponding to $\mathfrak s$ in the resulting cactus $\C'$. Then by mapping 
$\mathfrak s'=\qmxfa$ according to $h$ and taking the
isomorphism on any other segment of $\C'$, we obtain a homomorphism from $\C'$ to $\C$.

Now consider the latter case. We take some $001{\ast}$-sequence in the $4\dpoly+11$-long
prefix of $\mathcal{B}$,
and let $\mathfrak v$ be the segment at the end of this $001{\ast}$-sequence. 
We consider the subtree $\C_{\mathfrak v}^s$ of $\C^s$ with root $\mathfrak v$.
We claim that there is a segment $\mathfrak s$ in $\C_{\mathfrak v}^s$ whose depth 
 in $\C^s$ is $\le K$ and such that
\begin{description}
\item[(correct)]
every segment on the path from $\mathfrak v$ to $\mathfrak s$ is
\syc{} in $\C^s$, and there is a homomorphism $h_0 \colon \qmxfa \to \C$ mapping $\qmxfa$ into $\mathfrak s$.
\end{description}
%
%
Indeed, denote by $d_{\mathfrak v}$ the depth of $\mathfrak v$ in $\C^s$ (then $d_{\mathfrak v}\leq 4\dpoly+11$).
There are two cases:

$(i)$
There is some segment of depth $<K-d_{\mathfrak v}$ 
in $\C_{\mathfrak v}^s$ that is \syic{} in $\C_{\mathfrak v}^s$. 
Then we choose such a segment $\mathfrak s$ for which every segment on the path from $\mathfrak v$ to $\mathfrak s$ is \syc{} in $\C_{\mathfrak v}^s$ (and so in $\C^s$). If $\mathfrak s$ is a leaf of $\C_{\mathfrak v}^s$ (and so of $\C^s$), then 
the isomorphism from $\qmxfa$ to $\mathfrak s$ is a homomorphism from $\qmxfa$ to $\C$ mapping $\qmxfa$ into $\mathfrak s$.
Otherwise, by 
the ($\Leftarrow$) direction of {\bf (leaf)},
there is a homomorphism from $\qmxfa$ to $\C$ mapping $\qmxfa$ into the non-leaf segment $\mathfrak s$
(whose depth in $\C^s$ is $\leq K$).

$(ii)$
All segments of depth $< K-d_{\mathfrak v}$ in $\C_{\mathfrak v}^s$ are \syc{} in $\C_{\mathfrak v}^s$.\\ 
%
\centerline{
\begin{tikzpicture}[decoration={brace,mirror,amplitude=7},line width=1pt,scale =.52]
\node[point,fill=black,scale=0.7,label = left:$\mathfrak s$] (1) at (0,0) {};
\node[]  at (.9,-1) {$\gamma_{\creject}$};
\draw[thin] (1) -- (2.5,-1.5) -- (0,-1.5) -- (1);
\node[point,fill=black,scale=0.7,label = left:$\mathfrak v'$] (2) at (-1,3) {};
\node[]  at (0,1) {$\vdots$};
\node[]  at (0,2) {$\gamma_{\cinit}$};
\draw[thin] (2) -- (2,1.5) -- (-1,1.5) -- (2);
\node[point,fill=black,scale=0.3] (b1) at (-1,4) {};
\node[point,fill=black,scale=0.3] (b2) at (-1,5) {};
\node[point,fill=black,scale=0.3] (b3) at (-1,6) {};
\node[point,fill=black,scale=0.3] (b4) at (-1,7) {};
\node[point,fill=black,scale=0.7,label = left:$\mathfrak v$] (3) at (-2,8.5) {};
\node[]  at (-1.5,7.5) {$\gamma_c$};
\draw[thin] (3) -- (.5,7) -- (-2,7) -- (3);
\node[point,fill=black,scale=0.3] (d1) at (-2,9.5) {};
\node[point,fill=black,scale=0.3] (d2) at (-2,10.5) {};
\node[point,fill=black,scale=0.3] (d3) at (-2,11.5) {};
\node[point,fill=black,scale=0.3] (d4) at (-2,12.5) {};
\node[point,fill=black,scale=0.3] (d5) at (-2,15) {};
\draw (2) to node[right] {$0$} (b1);
\draw (b1) to node[right] {$1$} (b2);
\draw (b2) to node[right] {$0$} (b3);
\draw (b3) to node[right] {$0$} (b4);
\draw (3) to node[right] {$\ast$} (d1);
\draw (d1) to node[right] {$1$} (d2);
\draw (d2) to node[right] {$0$} (d3);
\draw (d3) to node[right] {$0$} (d4);
\draw (d4) to  (d5);
\draw[dotted,rounded corners=10] (2) -- (2.7,1.5) -- (3.1,0) -- (3.3,-1.8) -- (-3,-1.8) -- (-2.5,0) -- (2);
 \node[point,scale=0.7,draw=white] (c1) at (-1.5,.5) {\Large $\dots$};  
 \node[]  at (1.5,2.6) {$\beta_{\T}$};
 \draw[thin,dashed,rounded corners=10] (3) -- (1.3,7) -- (2.8,4) -- (4.1,1) -- (5,-4);
 \node[]  at (-5,-4.4) {$\mathcal{B}$}; 
\node[]  at (-2.8,14.5) {$\C^s$}; 
\draw[thin] (-6.3,-2.5) -- (5.3,-2.5); 
\draw[thin] (-6.5,-4) -- (5.3,-4);  
 \draw[thin,rounded corners=10] (d5) -- (-.25,11) -- (1.8,7) -- (3,4) -- (4.4,1) -- (5.3,-4);
 \draw[thin,rounded corners=10] (d5) -- (-4,7) -- (-6.5,-4); 
\draw[very thick,rounded corners] (3) -- (-2.2,7.5) -- (-2.4,4) -- (-4,1) -- (-5,-4);
 \draw[thin,dashed,rounded corners=10] (3) -- (-3,7) -- (-5.8,-4);
 \node[]  at (.2,8) {$\C^s_{\mathfrak v}$}; 
%
\draw[thick,gray] (-6.6,-2.5) -- (-6.8,-2.5) -- (-6.8,15) -- (-6.6,15); 
 \node[gray]  at (-7.2,8) {$K$};
\draw[thick,gray] (5.4,-2.5) -- (5.6,-2.5) -- (5.6,3) -- (5.4,3); 
 \node[gray]  at (6.2,.8) {$\geq\dexp$};
 \draw[thick,gray] (3.8,3) -- (4,3) -- (4,8.5) -- (3.8,8.5);
\node[gray]  at (5,5.8) {$4\dpoly+8$};
 \draw[thick,gray] (1.8,8.5) -- (2,8.5) -- (2,15) -- (1.8,15);
\node[gray]  at (3.8,12) {$d_{\mathfrak v}\leq 4\dpoly+11$};  
 \draw[thick,gray] (7.1,-2.5) -- (7.3,-2.5) -- (7.3,8.5) -- (7.1,8.5);
\node[gray]  at (8,4) {$\C_K^s$};   
\end{tikzpicture}}
Then let $\C_K^s$ be the 
$(K-d_{\mathfrak v})$-cut  of $\C_{\mathfrak v}^s$.
As $\C_K^s$ is a substructure of $\C_{\mathfrak v}^s$, all segments of depth $< K-d_{\mathfrak v}$ in $\C_K^s$ are \syc{} in $\C_K^s$. So, 
by the ($\Leftarrow$) direction of Claim~\ref{c:syc}, $\C_K^s$ is isomorphic to the 
$(K-d_{\mathfrak v})$-cut 
of some \whatisit{} tree, and therefore $\mathfrak v$ represents some configuration $c$. 
Whichever configuration $c$ is,
there is a segment $\mathfrak v'$ of depth $\le 4\dpoly+8$ in $\C_K^s$ that is the \mnode{}
of $\cinit$. 
As $K-d_{\mathfrak v}-(4\dpoly+8)\geq \dexp$, 
it follows that 
there is a \otree{} $\T$ such that $\beta_{\T}$ is a substructure of the subtree of $\C_K^s$ with root $\mathfrak v'$
(as the depth of each $\beta_{\T}$ is $\dexp$).
Thus, 
there is a segment $\mathfrak s$ in $\C_K^s$ 
representing a 
$\qreject$-configuration $\creject$ in $\C_K^s$ (because every \otree{} is rejecting, and so $\T$ is rejecting).
As $\mathfrak s$ is a non-leaf segment in $\C_K^s$ and $\C_K^s$ is a substructure of $\C^s$, $\mathfrak s$ is a non-leaf segment  representing $\creject$ in $\C^s $ whose depth is $\le K$ in $\C^s$.
Thus, by 
the ($\Leftarrow$) direction of {\bf (leaf)},
there is a homomorphism from $\qmxfa$ to $\C$ mapping $\qmxfa$ into $\mathfrak s$. 

So in both cases $(i)$ and $(ii)$, we have shown that there is a segment $\mathfrak s$ of depth $\le K$ in $\C^s$ 
such that 
{\bf (correct)} holds.
%
%
  However, $\mathfrak s$ is not necessarily in the branch $\mathcal{B}$.
 Let $\mathfrak s_m$ be the last ancestor of $\mathfrak s$ in $\mathcal{B}$, and list the segments
 $\mathfrak s=\mathfrak s_0,\mathfrak s_1,\dots,\mathfrak s_m$ on the path leading upwards from $\mathfrak s$
 to $\mathfrak s_m$. Let $\C'$ be obtained from $\C$ by cutting
 at $\mathfrak s_i$ every branch of $\C^s$ going through $\mathfrak s_i$ other than the one going to $\mathfrak s$,
  for every $i\leq m$.
(In particular, $\mathcal{B}$ is cut at $\mathfrak s_m$ which is of depth $\le K$.)
 Let $\mathfrak s_{i}^\star$ denote the segment corresponding to $\mathfrak s_{i}$ in $\C'$.
 Then $\mathfrak s_{0}^\star$ is a leaf in $\C'$, and so $\mathfrak s_{0}^\star=\qmxfa$. Also, for each $i>0$,
\begin{itemize}
\item[--] either $\mathfrak s_{i}^\star=\mathfrak s_{i}$ 

\item[--] or $\mathfrak s_{i}= \q^-_{AA}$ and $\mathfrak s_{i}^\star$ is either $\q^-_{AT}$ or $\q^-_{TA}$.
\end{itemize}

 %
%
We claim that, for every $i\le m$, there is some homomorphism $h_i\colon\mathfrak s_{i}^\star\to\C$
 mapping $\mathfrak s_{i}^\star$ into $\mathfrak s_{i}$ and such that
 \begin{multline}\label{localh}
 \mbox{if $\mathfrak s_{i-1}^\star$ is the $j$-child of $\mathfrak s_{i}^\star$, for $j=0,1$, then}\\
\mbox{$h_i$ maps the $t_j$-node of $\mathfrak s_{i}^\star$ to the $t_j$-node of $\mathfrak s_{i}$.}
\end{multline}
This will be enough for building a homomorphism from $\C'$ to $\C$: we take these $h_i$ on 
 each $\mathfrak s_{i}^\star$, and the isomorphism on any other segment.
 
 Indeed, if $i=0$ then the $h_0$ in {\bf (correct)} is suitable.
 If $i>0$ and $\mathfrak s_i^\star=\mathfrak s_i$, then the isomorphism is suitable for $h_i$. So suppose that 
 $\mathfrak s_i^\star\ne\mathfrak s_i$ (so $\mathfrak s_i= \q^-_{AA}$).
 We consider the case when $\mathfrak s_{i}^\star=\q^-_{AT}$, that is, $\mathfrak s_{i-1}^\star$ is a
 $0$-child of $\mathfrak s_{i}^\star$
 (the case when $\mathfrak s_{i}^\star=\q^-_{TA}$ is similar).
 Let $\C_i$ be obtained from $\C$ by cutting at $\mathfrak s_i$ the branch leading to $\mathfrak s$.
 Let $\mathfrak s_{i}^\dag$ denote the segment corresponding to $\mathfrak s_{i}$ in $\C_i$, that is, 
 $\mathfrak s_{i}^\dag=\q^-_{TA}$.\\
%
%
\centerline{
\begin{tikzpicture}[decoration={brace,mirror,amplitude=7},line width=1pt,scale =.2]
\node[]  at (1.7,7.5) {$\C$};   
 \draw[very thick] (-2.8,-8) -- (2,6);
 \node[]  at (-3.7,-7) {$\mathcal{B}$};  
\node[point,fill=black,scale=0.4,label = left:$\mathfrak s_m$] (1) at (0,0) {};
\node[point,fill=black,scale=0.3] (2) at (1,-.5) {};
\node[point,fill=black,scale=0.3] (3) at (2,-1) {};
\node[point,fill=black,scale=0.3] (4) at (3,-1.5) {};
\node[point,fill=black,scale=0.3,label = above:$\mathfrak s_i$] (5) at (4,-2) {};
\node[point,fill=black,scale=0.3] (6) at (5,-2.5) {};
\node[point,fill=black,scale=0.3] (7) at (6,-3) {};
\node[point,fill=black,scale=0.3] (8) at (7,-3.5) {};
\node[point,fill=black,scale=0.3,label = above:$\qquad{\mathfrak s_0=\mathfrak s}$] (9) at (8,-4) {};
\draw[thin] (1) to (9);
\draw[thin] (2) to (1,-6);
\draw[thin] (4) to (3,-4);
\draw[thin] (5) to (4,-7);
\draw[thin] (6) to (5,-3.8);
\draw[thin] (8) to (7,-6.8);
\draw[thin] (9) to (8,-7.5);
\draw[thin,dashed] (12.4,-8) to (12.4,7.5);
\end{tikzpicture}
\hspace*{-.4cm}
\begin{tikzpicture}[decoration={brace,mirror,amplitude=7},line width=1pt,scale =.2]
\node[]  at (1.7,7.5) {$\C'$};   
\draw[very thick] (1) -- (2,6);
\node[]  at (2,3) {$\mathcal{B}$};  
 \node[]  at (-3.7,-7) {$\phantom{\mathcal{B}}$};  
\node[point,fill=black,scale=0.4,label = left:$\mathfrak s_m^\star$] (1) at (0,0) {};
\node[point,fill=black,scale=0.3] (2) at (1,-.5) {};
\node[point,fill=black,scale=0.3] (3) at (2,-1) {};
\node[point,fill=black,scale=0.3] (4) at (3,-1.5) {};
\node[point,fill=black,scale=0.3,label = above:$\mathfrak s_i^\star$] (5) at (4,-2) {};
\node[point,fill=black,scale=0.3] (6) at (5,-2.5) {};
\node[point,fill=black,scale=0.3] (7) at (6,-3) {};
\node[point,fill=black,scale=0.3] (8) at (7,-3.5) {};
\node[point,fill=black,scale=0.3,label = above:$\ \ \ \mathfrak s_0^\star$] (9) at (8,-4) {};
\draw[thin] (1) to (9);
\draw[thin,dashed] (10,-8) to (10,7.5);
\end{tikzpicture}
\begin{tikzpicture}[decoration={brace,mirror,amplitude=7},line width=1pt,scale =.2]
\node[]  at (1.7,7.5) {$\C_i$};   
 \draw[very thick] (-2.8,-8) -- (2,6);
 \node[]  at (-3.7,-7) {$\mathcal{B}$};  
\node[point,fill=black,scale=0.4] (1) at (0,0) {};
\node[point,fill=black,scale=0.3] (2) at (1,-.5) {};
\node[point,fill=black,scale=0.3] (3) at (2,-1) {};
\node[point,fill=black,scale=0.3] (4) at (3,-1.5) {};
\node[point,fill=black,scale=0.3,label = above:$\mathfrak s_i^\dag$] (5) at (4,-2) {};
\draw[thin] (1) to (5);
\draw[thin] (2) to (1,-6);
\draw[thin] (4) to (3,-4);
\draw[thin] (5) to (4,-7);
\end{tikzpicture}}
 By {\bf (correct)}, $\mathfrak s_i$ is \syc{} in $\C^s$, and so $\mathfrak s_i$ is \pbr{}
 in $\C^s$. Thus,
 $\mathfrak s_i^\dag$ is \syic{} in $\C_i^s$ 
 because it violates condition {\bf (pb1)}
 in Sec.~\ref{ss:check}. On the other hand, $\mathfrak s_i^\dag$ is 
\syc{} in $\C_i^s$ in all the other aspects (this is because
 apart from $\mathfrak s_i$ and some of its descendants, every other segment is the same in both
 cactuses $\C$ and $\C_i$).
 Therefore, by
 the ($\Leftarrow$) direction of {\bf (leaf)},
 there is a homomorphism $h_{i}\colon\qmxfa\to\C_{i}$ mapping $\qmxfa$ into $\mathfrak s_i^\dag$.
Also, by  {\bf (\locality)},
 the same $h_i$ is a  homomorphism  from $\mathfrak s_i^\star$ to $\C$, mapping $\mathfrak s_i^\star$ to $\mathfrak s_i$ and such that \eqref{localh} holds:\\
%
\centerline{
\begin{tikzpicture}[decoration={brace,mirror,amplitude=7},line width=1pt,scale =.4]
\node[point,scale=0.6,label = above:$\tleft$,label = below:$T$] (a1) at (0,0) {};
\node[point,scale=0.6,label = above:$\tright$,label = below:$T$] (a2) at (1.5,0) {};
 \draw[thin,dashed,rounded corners=10] (-1,1.5) -- (2.5,1.5) -- (2.5,-1.5) -- (-1,-1.5) -- cycle;
\node[point,scale=0.6,label = above:$\tleft$,label = below:$T$] (b1) at (4,0) {};
\node[point,scale=0.6,label = above:$\tright$,label = below:$A$] (b2) at (5.5,0) {};
\node[point,scale=0.6,label = below:$FT$] (b3) at (7,0) {};
 \draw[thin,dashed,rounded corners=10] (3,1.5) -- (8,1.5) -- (8,-1.5) -- (3,-1.5) -- cycle;
 \node[]  at (9.5,0) {$\leadsto\ \ $};
\draw[->,gray,very thick] (0,-1.2) to [out=-70,in=-120] (4,-1.2); 
\draw[->,gray,very thick] (1.5,-1.2) to [out=-50,in=-140] (7,-1.2); 
 \node[gray]  at (3.1,-2.7) {$h_i$};
 \node[]  at (.8,2.3) {$\qmxfa$}; 
 \node[]  at (5.5,2.3) {$\mathfrak s_{i}^\dag=\q^-_{TA}$};  
\end{tikzpicture}
\begin{tikzpicture}[decoration={brace,mirror,amplitude=7},line width=1pt,scale =.4]
\node[point,scale=0.6,label = above:$\tleft$,label = below:$A$] (a1) at (0,0) {};
\node[point,scale=0.6,label = above:$\tright$,label = below:$T$] (a2) at (1.5,0) {};
 \draw[thin,dashed,rounded corners=10] (-1,1.5) -- (2.5,1.5) -- (2.5,-1.5) -- (-1,-1.5) -- cycle;
\node[point,scale=0.6,label = above:$\tleft$,label = below:$A$] (b1) at (4,0) {};
\node[point,scale=0.6,label = above:$\tright$,label = below:$A$] (b2) at (5.5,0) {};
\node[point,scale=0.6,label = below:$FT$] (b3) at (7,0) {};
 \draw[thin,dashed,rounded corners=10] (3,1.5) -- (8,1.5) -- (8,-1.5) -- (3,-1.5) -- cycle;
\draw[->,gray,very thick] (0,-1.2) to [out=-70,in=-120] (4,-1.2); 
\draw[->,gray,very thick] (1.5,-1.2) to [out=-50,in=-140] (7,-1.2); 
 \node[gray]  at (3.1,-2.7) {$h_i$};
 \node[]  at (.8,2.3) {$\mathfrak s_{i}^\star=\q^-_{AT}$}; 
 \node[]  at (5.5,2.3) {$\mathfrak s_{i}=\q^-_{AA}$};   
\end{tikzpicture}}

 So in any case we showed that there exists a $\C'\to\C$ homomorphism,
 for some subcactus $\C'$ of $\C$ where branch $\mathcal{B}$ is cut at some depth $\le K$.
If $\C'$ still has branches longer than $K$, we repeat the above process for a long branch in $\C'$ to obtain
 a $\C''\to\C'$ homomorphism for some $\C''$, and so on. At the end, we obtain a cactus $\C^-$ of depth $\le K$ homomorphically mapping into $\C$, which completes the proof of Lemma~\ref{l:dagq}.




\section{Proof of Claim~\ref{c:trigger}}\label{a:triggerproof}

$(\Rightarrow)$
Suppose that, for some $\C$ and $\mathfrak s$,  a gadget $\ga$ implementing a formula $\foga(y_1,\dots,y_n)$ 
is triggered at $\mathfrak s$. Then there is a  \mbox{$h\colon\qmxfa\to\C$} homomorphism
mapping the $\iga$-block in $\qmxfa$ to the $\fga$-block in $\mathfrak s$.
In particular, $h(\inode)=\midnode$,  
and so 
$h(\pnode)=\rnode$.
Thus, for every $i\le n$, the $B_i$-node in $\iga$ must also be mapped to one of the
two $B_i$-nodes in the $\fga$-block of $\mathfrak s$ (either $\beta_i^T$ or $\beta^F$). 
However, which of these two $B_i$-nodes is the image depends on the truth-value $b^{\mathfrak s}_i$ of the gathered input $\inputgs=(b^{\mathfrak s}_1,\dots,b^{\mathfrak s}_n)$ on the variable $y_i$. We claim that
\begin{itemize}
\item[$(i)$]
if $b^{\mathfrak s}_i=0$, 
then the $B_i$-node in $\iga$ is mapped by $h$ to $\beta^F$;

\item[$(ii)$]
if $b^{\mathfrak s}_i=1$, 
then the $B_i$-node in $\iga$ is mapped by $h$ to $\beta_i^T$.
\end{itemize}
%
Instead of proving $(i)$ and $(ii)$, here we give an illustrative example.
Suppose $\foga(y_1,\dots,y_5)$ is such that $(y_1,y_2,y_3)$ should be gathered from the $3$-long uppath,
and $(y_4,y_5)$ from a $2$-long downpath. Suppose the `environment' of $\mathfrak s$
in $\C^s$ looks like this:\\
\centerline{
\setlength{\unitlength}{.06cm}
\begin{picture}(35,70)
\thicklines
\multiput(0,10)(20,0){2}{\circle*{1}}
\multiput(10,20)(20,0){2}{\circle*{1}}
\multiput(20,30)(0,10){4}{\circle*{1}}
\put(30,10){\circle*{1}}
\put(22,30){$\mathfrak s$}
\multiput(20,63)(0,2){3}{\circle*{.5}}
\multiput(22,2)(0,2){3}{\circle*{.5}}

\multiput(19,29)(-10,-10){2}{\vector(-1,-1){8}}
\multiput(21,29)(-10,-10){2}{\vector(1,-1){8}}
\multiput(20,59)(0,-10){3}{\vector(0,-1){8}}
\put(30,19){\vector(0,-1){8}}

\put(16,55){$1$}
\put(16,45){$1$}
\put(16,35){$0$}
\put(12,26){$0$}
\put(25,26){$1$}
\put(2,16){$0$}
\put(15,16){$1$}
\put(32,15){$0$}
\end{picture}
}\\ 
%
%
Then if $h$ is a homomorphism triggering $\ga$ at $\mathfrak s$, then the possible inputs $\inputgs$ that can be gathered are $01100$, $01101$, or $01110$, because $h$ should map the pattern
%
%
\begin{center}
\begin{tikzpicture}[decoration={brace,mirror,amplitude=7},line width=0.8pt,scale =.51]
\node[point,scale=0.6,label = left:$\eta_{1}$] (1) at (0,0) {};
\node[point,scale=0.6] (2) at (0,-1) {};
\node[point,scale=0.6] (3) at (0,-2) {};
\node[point,scale=0.6] (4) at (0,-3) {};
\node[point,scale=0.6,label = left:--,label = right:--] (5) at (0,-4) {};
\node[point,scale=0.6] (6) at (0,-5) {};
\node[point,scale=0.6] (7) at (0,-6) {};
\node[point,scale=0.6] (8) at (0,-7) {};
\node[point,scale=0.6,label = left:--,label = right:--] (9) at (0,-8) {};
\node[point,scale=0.6] (10) at (0,-9) {};
\node[point,scale=0.6] (11) at (0,-10) {};
\node[point,scale=0.6] (12) at (0,-11) {};
\node[point,scale=0.6,label = left:$\gamma_{1}$,label = right:--] (13) at (0,-12) {};
\node[point,scale=0.6] (14) at (0,-13) {};
\node[point,scale=0.6,label = below:$B_{1}$] (15) at (0,-14) {};
\draw[->] (1) to (2);
\draw[->] (2) to (3);
\draw[->] (3) to (4);
\draw[->] (4) to (5);
\draw[->] (5) to (6);
\draw[->] (6) to (7);
\draw[->] (7) to (8);
\draw[->] (8) to (9);
\draw[->] (9) to (10);
\draw[->] (10) to (11);
\draw[->,line width=.5mm] (11) to node[left] {$S$} (12);
\draw[->] (12) to (13);
\draw[->] (13) to (14);
\draw[->] (14) to (15);

\node[point,scale=0.6,label = left:$\eta_{2}$] (t1) at (2,0) {};
\node[point,scale=0.6] (t2) at (2,-1) {};
\node[point,scale=0.6] (t3) at (2,-2) {};
\node[point,scale=0.6] (t4) at (2,-3) {};
\node[point,scale=0.6,label = left:--,label = right:--] (t5) at (2,-4) {};
\node[point,scale=0.6] (t6) at (2,-5) {};
\node[point,scale=0.6] (t7) at (2,-6) {};
\node[point,scale=0.6] (t8) at (2,-7) {};
\node[point,scale=0.6,label = left:--,label = right:--] (t9) at (2,-8) {};
\node[point,scale=0.6] (t10) at (2,-9) {};
\node[point,scale=0.6] (t11) at (2,-10) {};
\node[point,scale=0.6] (t12) at (2,-11) {};
\node[point,scale=0.6,label = left:$\gamma_{2}$,label = right:--] (t13) at (2,-12) {};
\node[point,scale=0.6] (t14) at (2,-13) {};
\node[point,scale=0.6,label = below:$B_{2}$] (t15) at (2,-14) {};
\draw[->] (t1) to (t2);
\draw[->] (t2) to (t3);
\draw[->] (t3) to (t4);
\draw[->] (t4) to (t5);
\draw[->] (t5) to (t6);
\draw[->] (t6) to (t7);
\draw[->,line width=.5mm] (t7) to node[left] {$S$} (t8);
\draw[->] (t8) to (t9);
\draw[->] (t9) to (t10);
\draw[->] (t10) to (t11);
\draw[->] (t11) to (t12);
\draw[->] (t12) to (t13);
\draw[->] (t13) to (t14);
\draw[->] (t14) to (t15);

\node[point,scale=0.6,label = left:$\eta_{3}$] (h1) at (4,0) {};
\node[point,scale=0.6] (h2) at (4,-1) {};
\node[point,scale=0.6] (h3) at (4,-2) {};
\node[point,scale=0.6] (h4) at (4,-3) {};
\node[point,scale=0.6,label = left:--,label = right:--] (h5) at (4,-4) {};
\node[point,scale=0.6] (h6) at (4,-5) {};
\node[point,scale=0.6] (h7) at (4,-6) {};
\node[point,scale=0.6] (h8) at (4,-7) {};
\node[point,scale=0.6,label = left:--,label = right:--] (h9) at (4,-8) {};
\node[point,scale=0.6] (h10) at (4,-9) {};
\node[point,scale=0.6] (h11) at (4,-10) {};
\node[point,scale=0.6] (h12) at (4,-11) {};
\node[point,scale=0.6,label = left:$\gamma_{3}$,label = right:--] (h13) at (4,-12) {};
\node[point,scale=0.6] (h14) at (4,-13) {};
\node[point,scale=0.6,label = below:$B_{3}$] (h15) at (4,-14) {};
\draw[->] (h1) to (h2);
\draw[->] (h2) to (h3);
\draw[->,line width=.5mm] (h3) to node[left] {$S$} (h4);
\draw[->] (h4) to (h5);
\draw[->] (h5) to (h6);
\draw[->] (h6) to (h7);
\draw[->] (h7) to (h8);
\draw[->] (h8) to (h9);
\draw[->] (h9) to (h10);
\draw[->] (h10) to (h11);
\draw[->] (h11) to (h12);
\draw[->] (h12) to (h13);
\draw[->] (h13) to (h14);
\draw[->] (h14) to (h15);

\node[point,scale=0.6,label = left:$\eta_{4}$] (r5) at (6,-4) {};
\node[point,scale=0.6] (r6) at (6,-5) {};
\node[point,scale=0.6] (r7) at (6,-6) {};
\node[point,scale=0.6] (r8) at (6,-7) {};
\node[point,scale=0.6,label = left:--,label = right:--] (r9) at (6,-8) {};
\node[point,scale=0.6] (r10) at (6,-9) {};
\node[point,scale=0.6] (r11) at (6,-10) {};
\node[point,scale=0.6] (r12) at (6,-11) {};
\node[point,scale=0.6,label = left:$\gamma_{4}$,label = right:--] (r13) at (6,-12) {};
\node[point,scale=0.6] (r14) at (6,-13) {};
\node[point,scale=0.6,label = below:$B_{4}$] (r15) at (6,-14) {};
\draw[->] (r13) to (r12);
\draw[->] (r12) to (r11);
\draw[->,line width=.5mm] (r11) to node[right] {$S$} (r10);
\draw[->] (r11) to (r10);
\draw[->] (r10) to (r9);
\draw[->] (r9) to (r8);
\draw[->] (r8) to (r7);
\draw[->] (r7) to (r6);
\draw[->] (r6) to (r5);
\draw[->] (r13) to (r14);
\draw[->] (r14) to (r15);

\node[point,scale=0.6,label = right:$\eta_{5}$] (q5) at (8,-4) {};
\node[point,scale=0.6] (q6) at (8,-5) {};
\node[point,scale=0.6] (q7) at (8,-6) {};
\node[point,scale=0.6] (q8) at (8,-7) {};
\node[point,scale=0.6,label = left:--,label = right:--] (q9) at (8,-8) {};
\node[point,scale=0.6] (q10) at (8,-9) {};
\node[point,scale=0.6] (q11) at (8,-10) {};
\node[point,scale=0.6] (q12) at (8,-11) {};
\node[point,scale=0.6,label = left:$\gamma_{5}$,label = right:--] (q13) at (8,-12) {};
\node[point,scale=0.6] (q14) at (8,-13) {};
\node[point,scale=0.6,label = below:$B_{5}$] (q15) at (8,-14) {};
\draw[->] (q13) to (q12);
\draw[->] (q12) to (q11);
\draw[->] (q11) to (q10);
\draw[->] (q10) to (q9);
\draw[->] (q9) to (q8);
\draw[->] (q8) to (q7);
\draw[->,line width=.5mm] (q7) to node[right] {$S$} (q6);
\draw[->] (q6) to (q5);
\draw[->] (q13) to (q14);
\draw[->] (q14) to (q15);

\node[point,scale=0.6,label = above:$W$] (w) at (7,-3) {};
\draw[->] (r5) to (w);
\draw[->] (q5) to (w);

\node[point,scale=0.6,draw=white,label = below:{\Large $\vdots$}] (vv) at (3,-14) {};
\node[point,scale=0.6,draw=white,label = above:$\iga$] (iga) at (-3,-7) {};
\end{tikzpicture}
\end{center}
%
to the pattern shown below:
%
\begin{center}
\begin{tikzpicture}[decoration={brace,mirror,amplitude=7},line width=0.8pt,xscale =.55,yscale =.6]
\node[point,scale=0.6,label = above:$\vdots$] (1) at (0,0) {};
\node[point,scale=0.6] (2) at (0,-1) {};
\node[point,scale=0.6] (3) at (0,-2) {};
\node[point,scale=0.6] (4) at (0,-3) {};
\node[point,scale=0.6,label = left:--,label = right:--] (5) at (0,-4) {};
\node[point,scale=0.6] (6) at (0,-5) {};
\node[point,scale=0.6] (7) at (0,-6) {};
\node[point,scale=0.6] (8) at (0,-7) {};
\node[point,scale=0.6,label = left:--,label = right:--] (9) at (0,-8) {};
\node[point,scale=0.6] (10) at (0,-9) {};
\node[point,scale=0.6] (11) at (0,-10) {};
\node[point,scale=0.6] (12) at (0,-11) {};
\node[point,scale=0.6,label = left:$A$] (13) at (0,-12) {};
\node[point,scale=0.6,label = left:$\xi$] (14) at (0,-13) {};
\node[point,scale=0.6,label = left:$\midnode$] (15) at (0,-14) {};
\draw[->] (1) to (2);
\draw[->] (2) to (3);
\draw[->,bend right=30] (3) to (4);
\draw[->,bend left=30,line width=.5mm] (3) to node[right] {$\ S$} (4);
\draw[->] (4) to (5);
\draw[->] (5) to (6);
\draw[->] (6) to (7);
\draw[->,bend right=30] (7) to (8);
\draw[->,bend left=30,line width=.5mm] (7) to node[right] {$\ S$} (8);
\draw[->] (8) to (9);
\draw[->] (9) to (10);
\draw[->] (10) to (11);
\draw[->] (11) to (12);
\draw[->,bend right=30] (12) to (13);
\draw[->,bend left=30,line width=.5mm] (12) to node[right] {$\ S$} (13);
\draw[->] (13) to (14);
\draw[->] (14) to (15);

\node[point,scale=0.6] (l16) at (-1,-15) {};
\node[point,scale=0.6,label = left:$A$] (l17) at (-2,-16) {};
\node[point,scale=0.6,label = right:{$\quad\dots$}] (l18) at (-2,-17) {};
\node[point,scale=0.6] (l19) at (-2,-18) {};
\node[point,scale=0.6] (l20) at (-3,-19) {};
\node[point,scale=0.6] (l21) at (-1,-19) {};
\node[point,scale=0.6,label = left:$A$] (l22) at (-4,-20) {};
\node[point,scale=0.6,label = right:$A$] (l23) at (0,-20) {};
\node[point,scale=0.6,label = below:$\vdots$] (l24) at (-4,-21) {};
\node[point,scale=0.6,label = below:$\vdots$] (l25) at (0,-21) {};
\node[point,scale=0.6,label = below:$W$] (w1) at (-5,-21) {};
\node[point,scale=0.6,label = below:$W$] (w2) at (-1,-21) {};
\draw [decorate] ([xshift = -29mm, yshift = -1mm]13.west) --node[left=3mm]{$\mathfrak s$} ([xshift = -12mm]l17.east);
\draw[->] (15) to (l16);
\draw[->,bend right=30] (l16) to (l17);
\draw[->,bend left=30,line width=.5mm] (l16) to node[right] {$\ S$} (l17);
\draw[->] (l17) to (l18);
\draw[->] (l18) to (l19);
\draw[->] (l18) to (-1,-16.5);
\draw[->] (l18) to (-1,-17.5);
\draw[->] (l19) to (l20);
\draw[->,bend right=30] (l19) to (l21);
\draw[->,bend left=30,line width=.5mm] (l19) to node[right] {$\ S$} (l21);
\draw[->,bend right=30] (l20) to (l22);
\draw[->,bend left=30,line width=.5mm] (l20) to node[right] {$\ S$} (l22);
\draw[->] (l21) to (l23);
\draw[->] (l22) to (l24);
\draw[->] (l23) to (l25);
\draw[->] (l22) to (w1);
\draw[->] (l24) to (w1);
\draw[->] (l23) to (w2);
\draw[->] (l25) to (w2);

\node[point,scale=0.6] (r16) at (1,-15) {};
\node[point,scale=0.6,label = right:$A$] (r17) at (2,-16) {};
\node[point,scale=0.6,label = right:{$\quad\dots$}] (r18) at (3,-17) {};
\node[point,scale=0.6] (r19) at (4,-18) {};
\node[point,scale=0.6] (r20) at (3,-19) {};
\node[point,scale=0.6] (r21) at (5,-19) {};
\node[point,scale=0.6,label = left:$A$] (r22) at (2,-20) {};
\node[point,scale=0.6,label = right:$T$] (r23) at (6,-20) {};
\node[point,scale=0.6,label = below:$\vdots$] (r24) at (2,-21) {};
\node[point,scale=0.6,label = below:$W$] (w3) at (3,-21) {};
\draw[->,bend right=30] (15) to (r16);
\draw[->,bend left=30,line width=.5mm] (15) to node[right] {$\ S$} (r16);
\draw[->] (r16) to (r17);
\draw[->] (r17) to (r18);
\draw[->] (r18) to (r19);
\draw[->] (r18) to (4,-16.5);
\draw[->] (r18) to (4,-17.5);
\draw[->] (r19) to (r20);
\draw[->,bend right=30] (r19) to (r21);
\draw[->,bend left=30,line width=.5mm] (r19) to node[right] {$\ S$} (r21);
\draw[->,bend right=30] (r20) to (r22);
\draw[->,bend left=30,line width=.5mm] (r20) to node[right] {$\ S$} (r22);
\draw[->] (r21) to (r23);
\draw[->] (r22) to (r24);
\draw[->] (r22) to (w3);
\draw[->] (r24) to (w3);

\node[point,scale=0.6,label = above:{\scriptsize $\beta_1^T\quad B_1$}] (b1) at (4,-8.8) {};
\node[point,scale=0.6,label = above:{\scriptsize $B_2$}] (b2) at (4,-10.1) {};
\node[point,scale=0.6,label = above:{\scriptsize $B_3$}] (b3) at (4,-11.4) {};
\node[point,scale=0.6,label = above:{\scriptsize $B_4$}] (b4) at (4,-12.7) {};
\node[point,scale=0.6,label = below:{\scriptsize $\beta_5^T\quad B_5$}] (b5) at (4,-14) {};
\node[point,scale=0.6,label = above :{\scriptsize $\qquad\quad B_1,\dots,B_5$},label = below:{\scriptsize $\beta^F$}] (b) at (8,-12) {};

\draw[->] (14) to (b1);
\draw[->] (14) to (b2);
\draw[->] (14) to (b3);
\draw[->] (14) to (b4);
\draw[->] (14) to (b5);
\draw[->] (b1) to (b);
\draw[->] (b2) to (b);
\draw[->] (b3) to (b);
\draw[->] (b4) to (b);
\draw[->] (b5) to (b);
\end{tikzpicture}
\end{center}
%
%
(We are also using that the parts of gadgets that are not depicted above do not contain $W$-nodes, so
the $h$-image cannot `stray' there when taking a downpath.)

It remains to see how $h$ maps the remaining part of the $\iga$-block into the $\fga$-block of $\mathfrak s$.
We claim that for every non-leaf gate $g$ in $\foga$, if $g_{ij}^\ell$ is an occurrence of $g$ on some branch,
then the end-node $p_{ij}^\ell$ of the $RSR$-pattern corresponding to  $g_{ij}^\ell$ in $\iga$
is mapped 
in such a way that
\begin{itemize}
\item[$(iii)$]
$h(p_{ij}^\ell)$ is the $\out$-node of the gadget for $g$, whenever the value of $g$ under $\inputgs$ is $0$;

\item[$(iv)$]
$h(p_{ij}^\ell)$ is the $(D)$-node of the gadget for $g$, whenever the value of $g$ under $\inputgs$ is $1$.
\end{itemize}
We prove this by induction on the tree-structure of $\foga$, going from leaves to root. 
Take some gate $g$, and let $g_{ij}^\ell$ be an occurrence of $g$. 

First, suppose that $g$ is an AND-gate. There are many cases, depending on the truth-values of $g$ and its two inputs $g_1$ and $g_2$ under $\inputgs$, and also on whether each of the $g_i$ is a leaf gate or not. We consider just two cases, the other ones are similar.
\begin{itemize}
\item[--]
Suppose that the value of $g$ under $\inputgs$ is $0$, $\ell=1$ (and so $g_1$ is a leaf labelled by $y_i$),  and 
$b^{\mathfrak s}_i=1$.
Suppose that $g_2$ is also a leaf gate, and so $g_2$ has value $0$ under $\inputgs$.
Let  $g_{i'j'}^{1}$ be an occurrence of $g_2$. By $(ii)$, the $B_{ij}$-node in $\iga$
is mapped by $h$ to the upper $B_{ij}$-node in
the $\fga$-block of $\mathfrak s$. So the first $R$-edge of the $RSR$-pattern corresponding to 
$g_{ij}^1$ is mapped to the $R$-edge connecting the two $B_{ij}$-nodes.
Thus,
the $S$-edge of the $RSR$-pattern corresponding to
$g_{ij}^1$ must be mapped to an $S$-edge starting at the $\inp_1$-node of the $g$-gadget.
Similarly, by $(i)$, the $B_{i'j'}$-node in $\iga$
is mapped by $h$ to the lower $B_{i'j'}$-node in
the $\fga$-block of $\mathfrak s$. So the $S$-edge of the $RSR$-pattern corresponding to
$g_{i'j'}^1$ must be mapped to an $S$-edge following an $R$-edge starting at the $\inp_2$-node of the $g$-gadget. As $h$ preserves $E$, the end-nodes of these two $S$-edges in the $g$-gadget must coincide,
and so it must be node $c_1$. So $h(p_{ij}^1)$ is the $\out$-node of the $g$-gadget.

\item[--]
Suppose that the value of $g$ under $\inputgs$ is $1$, and both of its inputs are non-leaf gates
having value $1$ under $\inputgs$.
Suppose $g_{ij}^{\ell-1}$ is an occurrence of $g_1$ and $g_{i'j'}^{\ell'}$ is an occurrence of $g_2$.
By the IH, $h(p_{ij}^{\ell-1})$ is the $(D)$-node of the gadget for $g_1$, and  
$h(p_{i'j'}^{\ell'})$ is the $(D)$-node of the gadget for $g_2$. 
Then the $S$-edges of the $RSR$-patterns corresponding to
$g_{ij}^{\ell-1}$ and $g_{i'j'}^{\ell'}$ must be mapped, respectively, to $S$-edges starting at the $\inp_1$- and $\inp_2$-nodes of the $g$-gadget. As $h$ preserves $E$, the end-nodes of these two $S$-edges in the $g$-gadget must coincide,
and so it must be node $b$. So $h(p_{ij}^\ell)$ is the $(D)$-node of the $g$-gadget, as required.
\end{itemize}
The case when  $g$ is a NOT-gate can be handled similarly, thereby completing
the proof of $(iii)$ and $(iv)$. As $h$ preserves $D$, it follows that $\foga[\inputgs]=1$.

$(\Leftarrow)$ 
If there is $\inputgs$ such that $\inputgs$ is gathered from `around' $\mathfrak s$ in $\C^s$ according to the
\itype{} for $\foga$ and $\foga[\inputgs]=1$, then 
we define a function $h\colon\qmxfa\to\C$ by taking
\begin{itemize}
\item[--]
$h(\midnode)=\fnode$ for the $\fnode$-node of $\mathfrak s$,
\item[--]
$h(\inode)=\midnode$ for the $\midnode$-node of $\mathfrak s$,
\end{itemize}
and mapping
\begin{itemize}
\item[--] the $\iga$-block to the $\fga$-block of $\mathfrak s$ following the structure of $\inputgs$ and $\foga$ as 
described above,
\item[--]
the $\igai$-block of every gadget $\ga_i$ different from $\ga$ to the $\igai$-block of $\mathfrak s$.
\item[--]
the $\fgai$-block of every gadget $\ga_i$ to the $\fgai'$-block of $\mathfrak s$, and
\item[--]
the $\fgai'$-block of every gadget $\ga_i$ also to the $\fgai'$-block of $\mathfrak s$.
\end{itemize}
Using the interaction-regulating mechanism between different gadgets described in Sec.~\ref{qstructure},
it is easy to see  that $h$ is a homomorphism, and $\ga$ is triggered by $h$ at $\mathfrak s$.


\section{Proof of Theorem~\ref{thm:tree1}}\label{a:treeproof}

We prove the following:

\smallskip
\  {\sc Theorem}~\ref{thm:tree1}.
\emph{Suppose $\q$ is a minimal ditree CQ with at least one solitary $F$, at least one solitary $T$ and such that either}
\begin{itemize}
\item[$(i)$] \emph{there is a $\prec$-comparable \solitarypair{} $(\ct,\cf)$ or}

\item[$(ii)$] \emph{$\q$ is not quasi-symmetric and has no $FT$-twins.}
\end{itemize}
\emph{Then evaluating the d-sirup $(\Delta_\q,\G)$ is \NL-hard.}

\smallskip
The proof is by reduction of the \NL-complete reachability problem for dags. Given a dag $G = (V,E)$ with nodes $\snode,\tnode \in V$, we construct a data instance $\A_G$ as follows.
We pick a \solitarypair{} $(\ct,\cf)$ such that,
in case $(i)$, $(\ct,\cf)$ is $\prec$-comparable and there is no solitary $T$- or $F$-node between $\ct$ and $\cf$;
and, in case $(ii)$, $(\ct,\cf)$ is of minimal distance, $\prec$-incomparable, and not symmetric.
 Then, in both cases, we replace each $e = (\unode,\vnode) \in E$ by a fresh copy $\q^e$ of $\q$ in which $\ct^e$ is renamed to $\unode$ with $T(\unode)$ replaced by $A(\unode)$, and $\cf^e$ is renamed to $\vnode$ with $F(\vnode)$ replaced by $A(\vnode)$. The dag $\A_G$ comprises the $\q^e$, for $e \in E$, as well as $T(\snode)$ and $F(\tnode)$.
We show that $\snode \to_G \tnode$ iff the answer to $(\Delta_\q,\G)$ over $\A_G$ is `yes'\!. 
		
$(\Rightarrow)$ If $\snode=\vnode_0, \dots, \vnode_n = \tnode$ is a path in $G$ with $e_i =(\vnode_i,\vnode_{i+1}) \in E$, for $i < n$, then for
		any model $\I$ of $\Delta_\q$ and $\A_G$, there is some $i < n$ such that  $\I \models T(\vnode_i)$ and $\I \models F(\vnode_{i+1})$, and so the identity map from $\q$ to its copy $\q^{e_i}$ is a $\q\to\I$ homomorphism.
		
$(\Leftarrow)$ If $\snode \not\to_G \tnode$, we define a model $\I$ of $\Delta_\q$ and $\A_G$ by labelling with $T$ the $A$-nodes in $\A_G$ that (as nodes of $G$) are reachable from $\snode$ (via a directed path in $G$) and with $F$ the remaining ones. We call these $A$-nodes \emph{contacts\/}.
We claim that if one of $(i)$ or $(ii)$ holds, then there is no homomorphism from $\q$ to $\I$, and so the answer to $(\Delta_\q,\G)$ over $\A_G$ is `no'\!. Indeed, take any map $h$ from $\q$ to $\I$, and
consider the substructure $\wormh$ of $\A_G$ comprising those copies $\q^{e_1},\dots,\q^{e_n}$ of $\q$  that have a non-empty intersection with $h(\q)$. To simplify notation, we set $\q^j = \q^{e_j}$. Then $\I$ can be regarded as a model of
$\wormh$. We show that $h$ cannot be a homomorphism from $\q$ to $\I$. We prove the two cases $(i)$ and $(ii)$ separately. 
Throughout, for any ditree CQ $\q'$ and node $x$ in it, we denote by $\q'_x$ the sub-ditree of $\q'$ with root $x$.



In case $(i)$, we picked a 
\solitarypair{} $(\ct,\cf)$ such that it is $\prec$-compa\-rable and 
there is no solitary $T$- or $F$-node between $\ct$ and $\cf$. 
Suppose that $\ct\prec\cf$ (the other case is similar).
Then $\wormh$ is as follows:

\centerline{\includegraphics[scale=0.6]{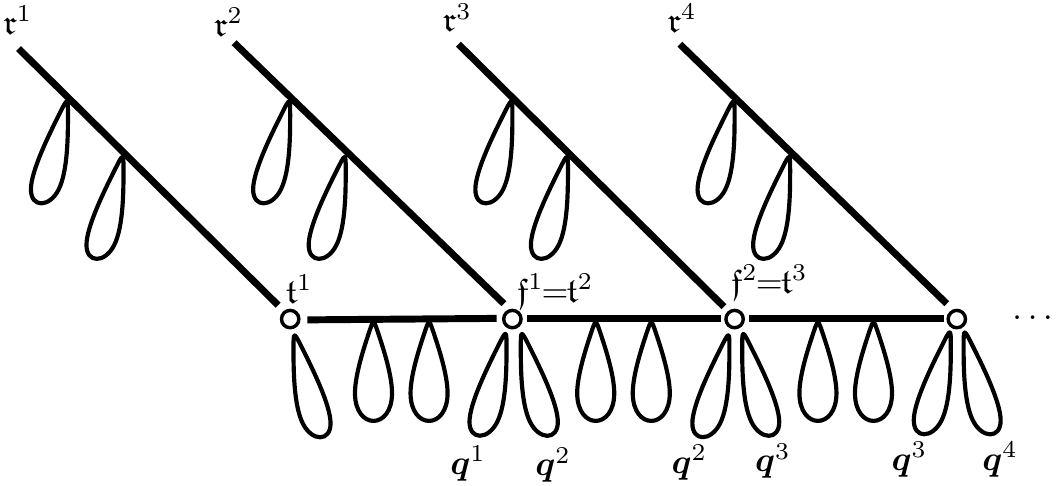}}

\noindent
We may assume that the model
$\I$ over $\wormh$ is such that the contacts between the $\q$-copies are either all in $F^\I$ or all in $T^\I$. 
We track the location of $h(\ct)$. Let $a$ be such that $h(\ct)$ is in $\q^a$ and $h(\ct)\ne\cf^a$.
We consider several cases, and show that none of them is possible.
\begin{enumerate}
\item $h(\ct)\prec_{\q^a}\ct^a$.\\
Then $h(\rr)$ is also in $\q^a$, and so there is not enough room for $h(\q)$ in $\q^a$.

\item $h(\ct)$ and $\ct^a$ are $\prec_{\q^a}$-incomparable.\\
Then there is $x$ such that $\rr\preceq x\prec \ct$ and $x^a=\inf_{\q^a}\bigl(h(\ct),\ct^a\bigr)$. 
So $h(x)\prec_{\q^a}h(\ct)$.
If $h(x)\prec_{\q^a}x^a$ as well, then there is not enough room for $h(\q)$ in $\q^a$. So $x^a\preceq_{\q_a}h(x)\prec_{\q^a}h(\ct)$. If $x^a\prec_{\q_a}h(x)$ then $\q_x$ is mapped by $h$ into $\q_{h(x)}^a$, but there is no room for that.
So it follows that $h(x)=x^a$.
Let $y$ be the child of $x$ in $\q$ with $y\preceq\ct$. Then $h(y)$ is the child of $x^a$ in $\q^a$ that is $\preceq_{\q^a}h(\ct)$, and so $y^a\ne h(y)$. Thus, $\q^a_{y^a}$ and $\q^a_{h(y)}$ are disjoint subtrees of $\q^a$, and so
$\q_{y}$ and $\q_{\iota^a\bigl(h(y)\bigr)}$ are disjoint subtrees of $\q$. Therefore, the following function $f$ is
a $\q\to(\q\setminus\q_y)$ homomorphism:
\[
f(z)=\left\{
\begin{array}{ll}
\iota^a\bigl(h(z)\bigr), & \mbox{if $z$ is in $\q_y$},\\
z & \mbox{otherwise},
\end{array}
\right.
\]
contrary to the minimality of $\q$.

\item $h(\ct)=\ct^a$.\\
Then all contacts are in $T^\I$, and so $h(\cf)\ne\cf^a$. Thus, there is $x$ such that $\ct\preceq x\prec \cf$ and $h(x)=x^a=\inf_{\q^a}\bigl(\cf^a,h(\cf)\bigr)$.  Let $y$ be the child of $x$ with $y\preceq\cf$.
Then $h(y)$ is the child of $x^a$ in $\q^a$, that is $\preceq_{\q^a}h(\cf)$, and so $y^a\ne h(y)$. Thus, $\q^a_{y^a}$ and $\q^a_{h(y)}$ are disjoint subtrees of $\q^a$, and so
$\q_{y}$ and $\q_{\iota^a\bigl(h(y)\bigr)}$ are disjoint subtrees of $\q$. Therefore, the following function $f$ is
a $\q\to(\q\setminus\q_y)$ homomorphism:
\[
f(z)=\left\{
\begin{array}{ll}
\iota^a\bigl(h(z)\bigr), & \mbox{if $z$ is in $\q_y$},\\
z & \mbox{otherwise},
\end{array}
\right.
\]
contrary to the minimality of $\q$.

\item $\ct^a\prec_{\q^a} h(\ct)$.\\
Since there are no $T$-nodes between $\ct^a$ and $\cf^a$, there are three cases: 
either $(a)$ $\cf^a\preceq_{\q^a}h(\ct)$,
or $(b)$  $(h(\ct),\cf^a)$ is $\prec_{\q^a}$-incom\-parable,
or $(c)$ $h(\ct)=\vartheta^a_\ast$ for some $FT$-twin $\vartheta_\ast$ with $\ct\prec\vartheta_\ast\prec\cf$.
%
\begin{itemize}
\item[$(a)$]
Then $\q_{\ct}$ is mapped by $h$ into $\q_{\cf}^a$, but there is no room for that as $\ct\prec\cf$. 

\item[$(b)$]
 Then $\q_{\ct}$ is mapped by $h$ into $\q_{x}^a$ for some $x$ with $\ct\prec x$, but there is no room for that.
 
 \item[$(c)$]
Then the set $X=\{\vartheta^\ell\in h(\q) \mid \vartheta\mbox{ is an $FT$-twin with $\ct\prec\vartheta$}\}$ is not empty.
 We define a function $\shift\colon h(\q)\to h(\q)$ by taking 
 $\shift(x)=h\bigl(\iota^\ell(x)\bigr)$ whenever $x$ is a node in $\q^\ell$, where we  consider each contact $\cc=\cf^i=\ct^{i+1}$, for $1\le i<n$, as a node in $\q^{i+1}$, that is, $\shift(\cc)=\shift(\ct^{i+1})=h\bigl(\iota^{i+1}(\ct^{i+1})\bigr)=h(\ct)$.
		Throughout, we use the following obvious `shift' property of $\shift$: for every $\ell$, $1\le \ell\le n$,
\begin{align}
&\text{if  $y,z$ are both in the same copy $\q^\ell$,} \notag\\ 
&\text{$y,z\ne\cf^\ell$ whenever $\ell<n$, and $y\preceq_{\q^\ell} z$, } \notag\\
& \quad \text{then } \shift(y)\preceq_{h(\q)}\shift(z) \notag\\ \label{shift}
& \qquad \text{and}\ \delta_{\q^\ell}(y,z)\ =\
		\delta_{h(\q)}\bigl(\shift(y),\shift(z)\bigr).
\end{align}	
It is straightforward to see that the restriction $g\!\mid_X$ of $\shift$ to $X$ is an $X\to X$ function.
As $X$ is finite, there exists a `fixpoint' of $g\!\mid_X$: a node $\vartheta$ in $X$ and a number $N > 0$ such that 
$(g\!\mid_X)^N(\vartheta)=\vartheta$. 
		We regard this fixpoint-cycle as a fixpoint-cycle of $\shift$, and will `shift  it to the left.' More precisely,  we claim that 
		%
		\begin{equation}\label{cfixpoint}
		\mbox{there is a contact $\cc$ with $\shift^N(\cc)=\cc$.}
		\end{equation}
		Indeed,
		let $y_0=\vartheta, y_1=\shift(\vartheta), y_2=\shift^2(\vartheta),\dots, y_{N-1}=\shift^{N-1}(\vartheta)$. 
		For every $j<N$, we have
		$y_j\in X$, and so there is a contact $\ct^{\ell_j}$ with $\ct^{\ell_j}\prec_{\q^{\ell_j}} y_j$.  We let $d_j=\delta_{\q^{\ell_j}}\bigl(\ct^{\ell_j},y_j\bigr)$.
		Let $K< N$ be such that 
		$
		d_K=\min\{ d_j\mid j< N\},
		$
		and set $\cc=\cf^{\ell_K-1}=\ct^{\ell_K}$.
		As each $y_j$ is an $FT$-twin, 
		we have $y_j\ne\cf^{\ell_j}$ whenever $j<N$. Thus,
		by the definition of $K$ and \eqref{shift}, for every $j\le N$,
		$\shift^j(\cc)$ belongs to the same copy $\q^{\ell_{K+ j}}$ as $y_{K+j}$,
		$\shift^j(\cc)\preceq_{\q^{\ell_{K+j}}}y_{K+j}$,
		and 
		\[
		d_K=
		\delta_{\q^{\ell_{K+j}}}\bigl(\shift^j(\cc),y_{K+j}\bigr).
		\]
		It follows, in particular, that $\shift^N(\cc)$ belongs to the same copy $\q^{\ell_{K}}$ as $y_{K}$, and 
		\[
		\delta_{\q^{\ell_{K}}}\bigl(\shift^N(\cc),y_K\bigr)=\delta_{\q^{\ell_{K}}}(\cc,y_{K}).
		\]
		Therefore, $\shift^N(\cc)=\cc$, as required in \eqref{cfixpoint}.
		
		It remains to show that \eqref{cfixpoint}
		leads to a contradiction. Indeed, we have either $\cc\notin F^\I$ or $\cc\notin T^\I$.
		On the other hand, by our assumption $\shift^j(\cc)=h(\ct)=\vartheta_\ast^a$ for some $FT$-twin $\vartheta_\ast$,
		and so $\shift^j(\cc)$ is an $FT$-twin for every $j$ with $1\leq j\leq N$.

 \end{itemize}
\end{enumerate}

\medskip
In case $(ii)$, 
$\q$ is not quasi-symmetric, and we may also assume that $\q$ has no $\prec$-comparable \solitarypair s.
In this case, we picked a \solitarypair{} $(\ct,\cf)$ such that it is of minimal distance, $\prec$-incomparable, and not symmetric. 
Suppose $h \colon \q \to \I$ is a homomorphism, and
let $a$ be such that $h(\rr) \in \q^a$. As $(\ct,\cf)$ is $\prec$-incomparable, $\wormh$ consists of (at most) three copies
$\q^a$, $\q^{a-1}$ and $\q^{a+1}$ of $\q$, and looks as shown in the picture below:\\ 
%
\centerline{   
\begin{tikzpicture}[line width=0.7pt,scale =.7]
\node[]  at (-1.2,4.4) {$\wormh$};
\node[point,scale=0.6] (a1) at (0,0) {};
\node[point,fill=white,scale=0.6,label =below:$\ct^a$,label =above:\ \ \ $\cf^{a-1}$] (a2) at (3,0) {};
\node[point,scale=0.2] (a3) at (2,1) {};
\node[point,scale=0.6,label =above:$\rr^{a-1}$\  ] (a4) at (2,3) {};
\node[]  at (0,3.4) {$\q^{a-1}$};
\draw[] (a2) -- (1.8,-1) -- (2.4,-1) -- cycle; 
\draw[] (a1) to (a3);
\draw[] (a2) to (a3);
\draw[] (a4) to (a3);
\draw[] (a1) -- (-.8,-1) -- (-.3,-1) -- cycle;
\draw[] (2.4,.6) -- (2,.1) -- (2.6,.1) -- cycle;
\draw[] (.6,.3) -- (.4,-.2) -- (1,-.2)-- cycle;
\draw[] (1.6,.8) -- (1,.2) -- (1.5,.2)-- cycle;
\draw[] (2,1.8) -- (1.3,1.3) -- (1.8,1.3) -- (2,1.8);
\draw[] (2,2.7) -- (2.2,2.1) -- (2.6,2.1) -- (2,2.7);
\draw[thin,dashed,rounded corners] (3,-1.3) -- (3,4) -- (-1,4) -- (-1,-1.3) -- cycle;
\draw[] (a2) -- (3.8,-1) -- (4.4,-1) -- cycle; 
\node[point,scale=0.6] (b1) at (3,0) {};
\node[point,fill=white,scale=0.6,label =below:$\cf^a$,label =above:\ \ \ $\ct^{a+1}$] (b2) at (6,0) {};
\node[point,scale=0.2] (b3) at (5,1) {};
\node[point,scale=0.6,label =above:\ \ $\rr^{a}$] (b4) at (5,3) {};
\node[]  at (3.5,3.4) {$\q^{a}$};
\draw[] (b2) -- (4.8,-1) -- (5.4,-1) -- cycle; 
\draw[] (b1) to (b3);
\draw[] (b2) to (b3);
\draw[] (b4) to (b3);
\draw[] (5.4,.6) -- (5,.1) -- (5.6,.1) -- cycle;
\draw[] (3.6,.3) -- (3.4,-.2) -- (4,-.2)-- cycle;
\draw[] (4.6,.8) -- (4,.2) -- (4.5,.2)-- cycle;
\draw[] (5,1.8) -- (4.3,1.3) -- (4.8,1.3) -- (5,1.8);
\draw[] (5,2.7) -- (5.2,2.1) -- (5.6,2.1) -- (5,2.7);
\draw[] (b2) -- (6.8,-1) -- (7.4,-1) -- cycle; 
\node[point,scale=0.6] (c1) at (6,0) {};
\node[point,fill=white,scale=0.6] (c2) at (9,0) {};
\node[point,scale=0.2] (c3) at (8,1) {};
\node[point,scale=0.6,label =above:\quad $\rr^{a+1}$] (c4) at (8,3) {};
\node[]  at (6.8,3.4) {$\q^{a+1}$};
\draw[] (c2) -- (9.4,-1) -- (9.8,-1) -- cycle; 
\draw[] (c1) to (c3);
\draw[] (c2) to (c3);
\draw[] (c4) to (c3);
\draw[] (8.4,.6) -- (8,.1) -- (8.6,.1) -- cycle;
\draw[] (6.6,.3) -- (6.4,-.2) -- (7,-.2)-- cycle;
\draw[] (7.6,.8) -- (7,.2) -- (7.5,.2)-- cycle;
\draw[] (8,1.8) -- (7.3,1.3) -- (7.8,1.3) -- (8,1.8);
\draw[] (8,2.7) -- (8.2,2.1) -- (8.6,2.1) -- (8,2.7);
\draw[thin,dashed,rounded corners] (6,-1.3) -- (6,4) -- (3,4) -- (3,-1.3) -- cycle;
\draw[thin,dashed,rounded corners] (10,-1.3) -- (10,4) -- (6,4) -- (6,-1.3) -- cycle;
\node[point,fill=white,scale=0.6,label =above:$\ct^{a-1}$] (a1) at (0,0) {};
\node[point,fill=white,scale=0.6,label =below:$\ct^a$,label =above:\ \ \ $\cf^{a-1}$] (a2) at (3,0) {};
\node[point,fill=white,scale=0.6,label =below:$\cf^a$,label =above:\ \ \ $\ct^{a+1}$] (b2) at (6,0) {};
\node[point,fill=white,scale=0.6,label =above right:\!\!\!\!$\cf^{a+1}$] (c2) at (9,0) {};
\end{tikzpicture}}
%
As $(\ct,\cf)$ is not symmetric, $\I$ is such that the contacts between the $\q$-copies are either both in $F^\I$ or both in $T^\I$.

%
The following `structural' claim (tracking the possible locations of $h(\cf)$ and $h(\ct)$) is 
also used in the proof of Theorem~\ref{t:tree-1-dich}: 


\smallskip
\ {\sc Claim}~\ref{c:hinf}.
\emph{Suppose $(\ct,\cf)$ is $\prec$-incomparable and of minimal distance
\textup{(}though $\q$ might contain $FT$-twins\textup{)}.
If $\itf=\inf_{\q}(\ct,\cf)$ then
$h(\itf)$ is in $\q^a$, and one of the following holds\textup{:}}
\begin{enumerate}
\item
\emph{$\itf^a\prec_{\q^a}h(\itf)\prec_{\q^a}\ct^a$, $h(\ct)$ is in $\q^{a-1}$ with $\cf^{a-1}\prec_{\q^{a-1}}h(\ct)$, and $h(\cf)=\ct^a$\textup{;}}

\item
\emph{$\itf^a\prec_{\q^a}h(\itf)\prec_{\q^a}\cf^a$, $h(\cf)$ is in $\q^{a+1}$ with $\ct^{a+1}\prec_{\q^{a+1}}h(\cf)$, and $h(\ct)=\cf^a$\textup{;}}

\item
\emph{$h(\itf)=\itf^a$, $h(\cf)=\cf^a$, and $h(\ct)$ is in $\q^a$ with $h(\ct)\prec_{\q^a}\cf^a$\textup{;}}

\item
\emph{$h(\itf)=\itf^a$, $h(\ct)=\ct^a$, and $h(\cf)$ is in $\q^a$ with $h(\cf)\prec_{\q^a}\ct^a$.}
\end{enumerate}

%

\begin{proof}
If $h(\itf)\in\q^{a-1}\setminus\q^a$ then $\q_{\itf}$ is mapped by $h$ into $\q^{a-1}_{\cf^{a-1}}$, but there is no room for that.
Similarly, $h(\itf)\in\q^{a+1}\setminus\q^a$ then $\q_{\itf}$ is mapped by $h$ into $\q^{a+1}_{\ct^{a+1}}$, but there is no room for that. So $h(\itf)$ is in $\q^a$. We consider several cases:
\begin{itemize}
\item[--]
$h(\itf)\prec_{\q^a}\itf^a$.\\
Then there is not enough room for $h(\q)$ in $\q^a$.

\item[--]
$(h(\itf),\itf^a)$ is $\prec_{\q^a}$-incomparable.\\
Then let $x=\inf(h(\itf),\itf^a)$. If $h(x)=x^a$ then $\q$ is not minimal. If $x^a\prec_{\q^a} h(x)\prec_{\q^a} h(\itf)$ then $\q_x$
is mapped by $h$ into $\q^a_{h(x)}$, but there is no room for that.

\item[--]
$\itf^a\prec_{\q^a}h(\itf)$, and
both $(h(\itf),\ct^a)$ and $(h(\itf),\cf^a)$ are $\prec_{\q^a}$-incomparable.\\
Then $\q_{\itf}$ is mapped by $h$ into $\q^a_{h(\itf)}$, but there is no room for that.

\item[--]
$h(\itf)=\ct^a$.\\
Then $h(\ct)$ cannot be in $\q^a$, as there is no room for that. So $h(\ct)$ is in $\q^{a-1}$, and so
depth of $\q_{\cf}$ $>$ depth of $\q_{\ct}$.
Also, $h(\cf)$ cannot be in $\q^{a-1}$, as there is no room for that. So $h(\cf)$ is in $\q^{a}$, and so
depth of $\q_{\ct}$ $>$ depth of $\q_{\cf}$, a contradiction.

\item[--]
$h(\itf)=\cf^a$.\\
This is similar to the previous case.
\end{itemize}
It follows that either $\itf^a\prec_{\q^a}h(\itf)\prec_{\q^a}\ct^a$ or $\itf^a\preceq_{\q^a}h(\itf)\prec_{\q^a}\cf^a$.
Consider first the case when $\itf^a\prec_{\q^a}h(\itf)\prec_{\q^a}\ct^a$.
First, we track the location of $h(\ct)$. 
\begin{itemize}

\item[--] $h(\ct)$ is in $\q^a$ and $(h(\ct),\ct^a)$ is $\prec_{\q^a}$-incomparable.\\
Then let $u$ be such that $u^a=\inf(h(\ct),\ct^a)$. As $u^a\prec_{\q^a} h(u)\prec_{\q^a}h(\ct)$,
$\q_u$ is mapped by $h$ into $\q_{h(u)}^a$, but there is no room for that.


\item[--] $h(\ct)$ is in $\q^a$, and $\ct^a\prec_{\q^a} h(\ct)$.\\
Then $\q_{\ct}$ is mapped by $h$ into $\q_{h(\ct)}^a$, but there is no room for that.
\end{itemize}
Therefore, it follows that $h(\ct)$ is in $\q^{a-1}$ with $\cf^{a-1}\prec_{\q^{a-1}}h(\ct)$. Thus,
\begin{equation}\label{height}
\mbox{depth of $\q_{\cf}\ >$ depth of $\q_{\ct}$.}
\end{equation}
We claim that
\begin{equation}\label{fcontact}
h(\cf)=\ct^a,
\end{equation}
as required in item (1) of Claim~\ref{c:hinf}.
Indeed, suppose $h(\cf)\ne\ct^a$. Then 
\begin{equation}\label{hft}
\mbox{$h(\cf)$ is an  $FT$-twin,}
\end{equation}
as there are no $\prec$-comparable solitary $T$- and $F$-nodes, and $(\ct,\cf)$ is of minimal distance 
by assumption.
We cannot have that $h(\cf)\in\q^{a-1}$ with $\cf^{a-1}\prec_{\q^{a-1}}h(\cf)$, as then there is no room for $h(\q_{\cf})$.
We cannot have that $h(\cf)\in\q^{a}_{\ct^a}$ by \eqref{height}.
Thus, there is $x$ such that $x^a=\inf\bigl(\ct^a,h(\cf)\bigr)$ and $\itf^a\preceq_{\q^a} x^a\prec_{\q^a}\ct^a$.
We will show that this case is not possible either.
 We define a function $\shift\colon h(\q)\to h(\q)$ by taking 
 $\shift(x)=h\bigl(\iota^\ell(x)\bigr)$ whenever $x$ is a node in $\q^\ell$, where we  consider the contact $\cc=\ct^a=\cf^{a-1}$ as a node in $\q^{a-1}$, that is, $\shift(\cc)=h\bigl(\iota^{a-1}(\cf^{a-1})\bigr)=h(\cf)$ (the definition of
 $\shift(\cc')$ when the other contact $\cc'$ is in $h(\q)$ does not matter).
 Let $g$ be the restriction of $\shift$ to $\q^a_{x^a}\cup\q^{a-1}_{\cf^{a-1}}$.
 We will use the following obvious `shift' property of $g$: for every $\ell\in\{a,a-1\}$,
\begin{multline}\label{shiftagain}
\mbox{if  $y,z$ are both in the same copy $\q^\ell$, $z\ne\ct^a$, and $y\preceq_{\q^\ell} z$, }\\
\mbox{then } g(y)\preceq_{h(\q)}g(z)\ \mbox{and}\ \delta_{\q^\ell}(y,z)\ =\
		\delta_{h(\q)}\bigl(g(y),g(z)\bigr).
\end{multline}	
Now it is not hard to see that 
 $g\bigl(\q^a_{x^a}\cup\q^{a-1}_{\cf^{a-1}}\bigr)\subseteq\q^a_{x^a}\cup\q^{a-1}_{\cf^{a-1}}$.
 Moreover, using \eqref{shiftagain} one can also show that, for every $FT$-twin $\vartheta$,
 \begin{align}
 \label{upshift}
& \mbox{if $\vartheta^a,h(\vartheta)\in\q^a_{x^a}$, then $\delta_{\q^a}(x^a,\vartheta^a)<\delta_{\q^a}\bigl(x^a,h(\vartheta)\bigr)$, and}\\
\nonumber
& \mbox{if $\vartheta^{a-1},h(\vartheta)\in\q^{a-1}_{\cf^{a-1}}$,}\\
\label{downshift}
&\hspace*{2cm}\mbox{then $\delta_{\q^{a-1}}(\cf^{a-1},\vartheta^{a-1})>\delta_{\q^{a-1}}\bigl(\cf^a,h(\vartheta)\bigr)$.}
\end{align}
Now let $X=\{\vartheta^\ell\in \q^a_{x^a}\cup\q^{a-1}_{\cf^{a-1}} \mid \vartheta\mbox{ is an $FT$-twin}\}$.
Then the restriction $g\!\mid_X$ of $g$ to $X$ is an $X\to X$ function.
As $X$ is finite, there exists a `fixpoint' of $g\!\mid_X$: a node $\vartheta^\ell$ in $X$ and a number $N > 0$ such that 
$(g\!\mid_X)^N(\vartheta^\ell)=g^N(\vartheta^\ell)=\vartheta^\ell$. 
Now it follows from \eqref{upshift} and \eqref{downshift} that $N>1$ and 
there is some $j\leq N$ with $g^j(\vartheta^\ell)\in\q^{a-1}$.
Let $j\le N$ be such that the distance $\delta_{\q^{a-1}}\bigl(\cf^a,g^j(\vartheta^\ell)\bigr)$ is minimal, that is,
$g^{j+1}(\vartheta^\ell)\notin\q^{a-1}$, and let $\eta^{a-1}= g^j(\vartheta^\ell)\bigr)$. We `shift up' the fixpoint-cycle with the distance $\delta_{\q^{a-1}}\bigl(\cf^a,g^j(\vartheta^\ell)\bigr)$: By \eqref{shiftagain}, we have $g^N(\cf^{a-1})=\cf^{a-1}$ 
and $g(\cf^{a-1})=h(\cf)$ is in $\q^a$. By \eqref{hft}, $h(\cf)$ is an $FT$-twin, and so $g^N(\cf^{a-1})=g^{N-1}\bigl(g(\cf^{a-1})\bigr)$ is an $FT$-twin as well. But $\cf^{a-1}$ is a contact, and so it cannot be both in $F^{\I}$ and $T^{\I}$, 
 a contradiction, proving \eqref{fcontact}.


The case when $\itf^a\prec_{\q^a}h(\itf)\prec_{\q^a}\cf^a$ is similar, and it follows that 
$h(\cf)$ is in $\q^{a+1}$ with $\ct^{a+1}\prec_{\q^{a+1}}h(\cf)$, and 
$h(\ct)=\cf^a$, as required in item (2) of Claim~\ref{c:hinf}.


So suppose that $h(\itf)=\itf^a$. It is easy to see that $h(\ct)$ cannot be such that it is in $\q^a$ but
both $(h(\ct),\ct^a)$ and $(h(\ct),\cf^a)$ are $\prec_{\q^a}$-incomparable (as otherwise $\q$ would not be minimal).
Also, $h(\ct)$ cannot be in $\q^{a+1}\setminus\q^a$, as there is no room for the $h$-image of $\q_{\ct}$ there.
Similarly, $h(\cf)$ cannot be such that it is in $\q^a$ but
both $(h(\cf),\ct^a)$ and $(h(\cf),\cf^a)$ are $\prec_{\q^a}$-incomparable.
Also, $h(\cf)$ cannot be in $\q^{a-1}\setminus\q^a$, as there is no room for the $h$-image of $\q_{\cf}$ there.
It also follows that $\delta(\itf,\ct)\ne\delta(\itf,\cf)$ (as either both contacts are in $T^{\I}$ or both are in $F^{\I}$, and
so $h(\ct)$ and $h(\cf)$ cannot both be contacts).

Thus, both $h(\ct)$ and $h(\cf)$ are in $\q^a$, 
either $(h(\ct),\ct^a)$ or $(h(\ct),\cf^a)$ is $\prec_{\q^a}$-comparable, and
either $(h(\cf),\ct^a)$ or $(h(\cf),\cf^a)$ is $\prec_{\q^a}$-comparable.

Next, we show that if $(h(\ct),\cf^a)$ is $\prec_{\q^a}$-comparable, then $h(\ct)\prec_{\q^a}\cf^a$.
Indeed, suppose on the contrary that $\cf^a\prec_{\q^q}h(\ct)$.
Let $x$ be such that $\itf\prec x\prec\ct$ and $h(x)=\cf^a$. If $x$ is not labelled by $T$ in $\q$ then $h$ maps a proper subCQ of $\q$ to $\q$, contradicting the minimality of $\q$.
If $x$ is labelled by $T$ then $\cf^a\in T^{\I}$ should hold.
Thus, $h(\cf)\ne\cf^a$, and so $h(\cf)=x$. Therefore, $x$ must be labelled by $F$ too, so it is an $FT$-twin. But then
$h(x)$ cannot be a contact. 

Similarly, it can be shown that if $(h(\cf),\ct^a)$ is $\prec_{\q^a}$-comparable, then $h(\cf)\prec_{\q^a}\ct^a$.

Now, suppose first that $(h(\ct),\cf^a)$ is $\prec_{\q^a}$-comparable. Then it follows that $(h(\cf),\cf^a)$ is $\prec_{\q^a}$-comparable, and so $h(\cf)=\cf^a$, that is, item 3.\ of Claim~\ref{c:hinf} holds.
On the other hand, if $(h(\ct),\cf^a)$ is $\prec_{\q^a}$-incomparable, then  $(h(\ct),\ct^a)$ is $\prec_{\q^a}$-comparable,
and so $h(\ct)=\ct^a$. Thus, $(h(\cf),\cf^a)$ is $\prec_{\q^a}$-incomparable, and so $(h(\cf),\ct^a)$ is $\prec_{\q^a}$-comparable. Therefore, $h(\cf)\prec_{\q^a}\ct^a$ holds, as required in item 4.\ of Claim~\ref{c:hinf}.
\end{proof}

Now we can complete  the proof of Theorem~~\ref{thm:tree1}~$(ii)$ by observing that
none of the cases in Claim~\ref{c:hinf} is possible, whenever $\q$ contains neither $FT$-twins nor
$\prec$-comparable \solitarypair s. 


\section{Proof of Claim~\ref{lemmaFO}}\label{a:FOlemmaproof}

We prove the following:

\smallskip
\ {\sc Claim}~\ref{lemmaFO} {\em 
The d-sirup $(\Delta_\q,\G)$ is FO-rewritable iff, for any periodic structure $\mathfrak P = (\pre,\per,\post)$ with $\per\ne\emptyset$, one of the following holds:
\begin{enumerate}
\item[{\bf (h1)}] there is a homomorphism from some cactus to the $\bar\cdot$-closure of the acyclic version of $\pre \cup \per$\textup{;}

\item[{\bf (h2)}] there is a homomorphism from the root segment of some cactus to $\bar \per$\textup{;} 

\item[{\bf (h3)}] there is a homomorphism 
from the root segment of one of the $\bar{\mathfrak H}_v$ to $\bar\post$.
\end{enumerate}}

We say that a skeleton $\C^s$ \emph{fits} a periodic structure $\mathfrak P = (\pre,\per,\post)$ if its $ \chi_\C$-image in $\mathfrak G$ is such that each branch of $\C^s$ consists of two consecutive parts: the first one is mapped by $\chi_\C$ into $\mathfrak H$ ($= \pre \cup \per$) and the second one into some $\mathfrak H_v$ in $\post$, for $v \in R$.
If $\C^s$ fits $\mathfrak P$, we denote by $d(\C^s,\mathfrak P)$ the minimum depth of the nodes in $\C^s$ that belong to the second part of these branches. 
 Note that, for any periodic structure $\mathfrak P$ with non-empty $\per$ and any $d < \omega$, there is a cactus $\C$ such that $\C^s$ fits $\mathfrak P$ and $d(\C^s,\mathfrak P) > d$. (If the source of $\mathfrak H$ is $(\emptyset,0,C)$, then in the root segment of $\C$, we bud $T(y_j)$ iff $j \in C$. We continue budding as prescribed by $\mathfrak H$, using the cycles in $\per$ to make sure that $d(\C^s,\mathfrak P) > d$, and then acyclic $\mathfrak H_v$, which end in leaf types that give rise to leaf segments in $\C$ with all of the $T(y_i)$ unbudded.)

$(\Rightarrow)$ Suppose $d<\omega$ is such that every cactus contains a homomorphic image of some cactus of depth $\le d$. 
Let $\mathfrak P = (\pre,\per,\post)$ be a periodic structure with $\per \ne \emptyset$  and let $\C^s$ fit $\mathfrak P$ with `sufficiently large' $d(\C^s,\mathfrak P)$. Take  a minimal cactus $\C'$ of depth $\le d$, for which there is a homomorphism $h' \colon \C' \to \C$. Set $\bar h = \bar \chi_\C h'$ and let $\segs'$ be the root segment in $\C'$.


\emph{Case} {\bf (h1)}: $\bar h(\segs') \cap \bar\pre \ne \emptyset$. Then, since $d(\C^s,\mathfrak P)$ is sufficiently large, $\bar h(\C') \subseteq \bar\pre \cup \bar\per$. If $\bar h(\C')$ goes for more than one cycle in $\bar\pre \cup \bar\per$, we could cut out one cycle and obtain a homomorphism of a smaller cactus $\C''$ into $\C$, contrary to our assumption.
This gives us a homomorphism 
from a small depth cactus to the $\bar\cdot$-closure of the acyclic version of $\pre \cup \per$. 

%

\emph{Case} {\bf (h2)}: $\bar h(\segs') \cap \bar\pre = \emptyset$ and $\bar h(\segs') \cap \bar\per \ne \emptyset$. If $\bar h(\segs') \not\subseteq \bar\per$, then using the fact that $\mathfrak H$ is realisable, we can modify $\bar h$ and obtain the required homomorphism from the root segment $\segs'$ to $\bar\per$.

\emph{Case} {\bf (h3)}: if neither of the previous two cases holds, then we have $\bar h(\segs') \subseteq \bar{\mathfrak H_v}$, for some $\mathfrak H_v$ in $\post$ and $v \in R$.

$(\Leftarrow)$ Consider an arbitrary cactus $\C$, its skeleton $\C^s$ and the canonical homomorphism $\chi_\C$. Our aim is to define a periodic structure $\mathfrak P = (\pre, \per, \post)$, using which we could construct a `small' cactus that is homomorphically embeddable into $\C$. 

We start from the root and move along each branch of $\C^s$ and the $\chi_\C$-images of its nodes in $\mathfrak G$ until the same type repeats twice, that is, we visit the same node twice in $\mathfrak G$. This will give us the pre-periodic $\pre$ and periodic parts $\per$ of $\mathfrak P$. 
We next associate with $\C^s$ a certain $\post$. For each branch of $\C^s$, consider the node after which the $\chi_\C$-image of the branch leaves $\per$. Let $R$ be the set of all of such nodes in $\mathfrak G$. For each $v \in R$, pick some branch in $\C^s$ whose $\chi_\C$-image leaves $\per$ at $v$. Let $p$ be the node on this branch such that $\chi_\C(p) =v \in \per$ but the $\chi_\C$-image of the child of  $p$ on this branch is not in $\per$. Consider the $\chi_\C$-image in $\mathfrak G$ of the subtree $\C^s_p$ of $\C^s$ with root $p$. 
Cut out from $\C^s_p$ the segments between repeating types, if any, so that the $\chi_\C$-image $\mathfrak H_p$ of the remaining part is acyclic. The constructed pairs $(p,\mathfrak H_p)$ form the post-periodic part $\post$. It follows from the construction that the resulting $\mathfrak P = (\pre, \per, \post)$ is a periodic structure. 

\begin{itemize}
\item[--]
If {\bf (h1)} is satisfied, we are done because some cactus can be homomorphically embedded into the $\bar\cdot$-closure of 
the acyclic version of $\pre \cup \per$, and so into some initial part of $\C$.

\item[--]
If {\bf (h2)} holds, then there is a homomorphism from some root segment $\segs$ into a segment in $\bar \per$. This gives a homomorphism from $\segs$ into $\C$, with the image of the root of $\segs$ being in a non-root segment $\segs'$ in $\C$. Using the fact that $\per$ is periodic, we extract from $\C$ a cactus, starting from $\segs'$, of smaller depth that is homomorphically embeddable into $\C$, to which we apply the same argument. 

\item[--]
If {\bf (h3)} holds, we are done by the definition of $\mathfrak H_v$.
\end{itemize}



\section{Proof of Claim~\ref{l:L-hard}}\label{ap:L-h}

We prove the following:

\smallskip
\ {\sc Claim}~\ref{l:L-hard}.
\emph{If none of conditions {\bf (h1)}--{\bf (h3)} holds, then evaluating $(\Delta_\q,\G)$ is $\L$-hard.}

\smallskip
Let $\mathfrak P = (\pre,\per,\post)$ be a periodic structure.  
%
%
%
%
Consider the following additional condition: 
\begin{enumerate}
\item[{\bf (h4)}] there is a homomorphism from the leaf segment to $\bar\pre \cup \bar\per$. 
\end{enumerate}
We claim that
%
\begin{align}
\nonumber
& \mbox{if $\mathfrak P = (\pre,\per,\post)$ is periodic structure with $\per\ne\emptyset$ for which}\\
\nonumber
&\mbox{none of {\bf (h1)}--{\bf (h3)} holds, then there is a periodic structure $\mathfrak P'$}\\
\label{lem:minimal_tree}
&\mbox{such that none of {\bf (h1)}--{\bf (h4)} holds for it.}
\end{align}
%
%
%
Indeed,
let $\mathfrak P$ be a minimal periodic structure, for which none of {\bf (h1)}--{\bf (h3)} holds.  Suppose there is a homomorphism $h$ from the leaf segment $\segs$ into $\bar\pre \cup \bar\per$. Let $a$ be the $A$-node (focus) in $\segs$. Then $h(a)$ is also an $A$-node. It corresponds to an edge $(\segs',\segs'')$ in $\mathfrak P$. We cut this edge, remove from $\mathfrak P$ those segments that are only reachable from $\segs''$, and change $\segs'$ to the segment obtained by replacing the label $A$ on $h(a)$ in $\segs'$ with $T$. The resulting $\mathfrak P'$ is a periodic structure. It is not hard to see that it satisfies none of {\bf (h1)}--{\bf (h3)}, contrary to the minimality of $\mathfrak P$, which proves \eqref{lem:minimal_tree}.
%

We are now in a position to show $\L$-hardness of evaluating $(\Delta_\q,\G)$ with a periodic structure, for which none of {\bf (h1)}--{\bf (h4)} holds. The proof is by reduction of  undirected reachability. Suppose we are given a graph $G$, two vertices $s$ and $t$ in it, and we want to decide whether $t$ is reachable from $s$. 

Let $\mathfrak P$ be a periodic structure, for which none of {\bf (h1)}--{\bf (h4)} holds. 
Consider its blow-up $\bar{\mathfrak P}$. Replace each node $v$ in $G$ by a fresh copy $\bar\per_v$ of $\bar\per$ in $\mathfrak P$. For any two connected nodes $u$ and $v$, if there is a directed edge between nodes $\alpha$ and $\beta$ in $\bar\per$, we add the same edge between $\alpha_u$ and $\beta_v$, and also between $\alpha_v$ and $\beta_u$. We additionally attach $\bar\pre$ to $\bar\per_s$ and $\bar\post$ to $\bar\per_t$ in the same way as in $\mathfrak P$. We regard the resulting structure as a data instance $\A$ and show that there is a path from $s$ to $t$ in $G$ iff the answer to $(\Delta_\q,\G)$ over $\A$ is `yes'\!.

$(\Rightarrow)$ Suppose there is a path from $s$ to $t$. Consider a sufficiently large cactus $\C$ with the periodic structure $\mathfrak P$. Let $h$ be a homomorphism from $\C$ to $\A$, which canonically maps the pre-periodic part of $\C$ to $\bar\pre$ and then, having entered $\bar\per$, it follows the path from $s$ to $t$ in $G$ via the edges of the form $(\alpha_u,\beta_v)$ and uses $\bar\post$ at $t$. The cactus $\C$ should be large enough to reach $t$ in this way. Now, taking an arbitrary assignment of $T$ and $F$ to $A$, we see that we always have a segment in $\A$ (under this assignment) isomorphic to $\q$.

$(\Leftarrow)$ Suppose now there is no path from $s$ to $t$ in $G$. Denote by $V_s$ the set of nodes reachable from $s$ and by $V_t$ the remainder in $G$ (so there are no edges between $V_s$ and $V_t$). Consider an assignment under which all $A$-labels in $V_s$ become $F$ and $A$ in $V_t$ become $T$. We show that there is no homomorphism from $\q$ to the resulting data instance.

Suppose there is $h \colon \q  \to V_s$. This means that there was a homomorphism from  either $\q$, or the leaf segment into the data instance before the assignment. Both of these cases are impossible by \eqref{lem:minimal_tree}.

Suppose there is $h \colon \q \to V_t$. This means that there was a homomorphism from  either $\q$, or a root segment into the data instance before the assignment, which is again a contradiction with  \eqref{lem:minimal_tree}.


\section{Complexity of checking the criterion of Claim~\ref{lemmaFO}}\label{a:complexity}

\newcommand{\perstr}{\mathfrak P}

\begin{lemma}
For $\Lambda$-CQs of span $k$, one can check the criterion of Claim~\ref{lemmaFO} in time $p(|\q|)2^{p'(k)}$, for some polynomials $p$ and $p'$. Thus, deciding FO-rewritability of d-sirups with a $\Lambda$-CQ is fixed-parameter tractable, if the CQ's span is regarded as a parameter. 
\end{lemma}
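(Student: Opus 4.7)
The plan is to enumerate periodic structures at the level of types (rather than segments) and to check the three homomorphism conditions by polynomial-time tree-CSP subroutines, exploiting the fact that $\q$ is a ditree.

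First, the number of distinct types $(P,i,C)$ with $P,C\subseteq\{1,\dots,k\}$ and $i\in\{0,1,\dots,k\}$ is at most $(k+1)\cdot 4^{k}=2^{O(k)}$, so the graph $\mathfrak G$ has at most $2^{O(k)}$ vertices and edges (with edge labels in $\{1,\dots,k\}$), and can be constructed in time $|\q|^{O(1)}\cdot 2^{O(k)}$. For any subgraph $\mathfrak H$ of $\mathfrak G$, the $\bar\cdot$-closure $\bar{\mathfrak H}$ is a ditree-like structure of size $O(|\q|\cdot |\mathfrak H|)=|\q|^{O(1)}\cdot 2^{O(k)}$. Since $\q$, cactuses, root segments of cactuses, and the $\bar\cdot$-closures of acyclic subgraphs of $\mathfrak G$ are all ditrees (glued at $A$-nodes), homomorphism existence between them can be tested in polynomial time in the sizes by standard tree-CSP evaluation. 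This yields a polynomial-time oracle for each of the three conditions {\bf (h1)}, {\bf (h2)}, {\bf (h3)}, given a candidate periodic structure.

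Next, we argue that to detect the existence of a \emph{bad} periodic structure (one for which all of {\bf (h1)}--{\bf (h3)} fail) it suffices to search among candidates of size $2^{O(k)}$. This is where the $\Lambda$-shape of $\q$ is used: since the only solitary $F$ is at the root and each solitary $T$ sits on an incomparable branch, Claim~\ref{c:segs} (neighbourhoods preserved by homomorphisms) implies that all relevant $\q$-to-$\bar{\mathfrak H}$ homomorphisms are determined by how they interact with a \emph{constant-radius} environment in $\mathfrak H$. Consequently, in any bad periodic structure we can replace long acyclic sections of $\pre$ or $\mathfrak H_v$ by shorter ones whenever a type repeats on a branch, without creating any new homomorphism; and $\per$ itself can be taken to be a single SCC of $\mathfrak G$ together with its reachable closure inside $\mathfrak H$. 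Both $\pre\cup\per$ and each $\mathfrak H_v$ can thus be chosen with at most $|\mathfrak G|=2^{O(k)}$ types, hence the periodic structure has $|\q|^{O(1)}\cdot 2^{O(k)}$ many segments in its closure.

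Finally, the algorithm enumerates candidates. It enumerates SCCs of $\mathfrak G$ (at most $2^{O(k)}$) to fix the cycle-backbone of $\per$; for each, it enumerates realisable extensions obtained by greedy out-edge choice at each node (at most $k^{|\mathfrak G|}=2^{O(k)}$ choices in total, since at each of $2^{O(k)}$ nodes there are at most $2^k$ out-neighbourhoods); it likewise enumerates $\pre$ and each $\mathfrak H_v$; and for every such candidate $\mathfrak P$ it invokes the three tree-CSP oracles to test {\bf (h1)}--{\bf (h3)}. The total running time is $|\q|^{O(1)}\cdot 2^{O(k)}$, of the required form $p(|\q|)\,2^{p'(k)}$. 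The main obstacle is the pumping argument of the second paragraph, ensuring that the space of candidates can be cut down from the a-priori double-exponential $2^{2^{O(k)}}$ (all subgraphs of $\mathfrak G$) to single-exponential in $k$; the crucial ingredient there is the rigidity of homomorphisms between ditree cactuses of a $\Lambda$-CQ given by Claim~\ref{c:segs}.
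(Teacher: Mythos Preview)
Your proposal has a genuine gap in the enumeration bound. You correctly observe that $|\mathfrak G|=2^{O(k)}$ and that each candidate periodic structure can be taken to have at most $|\mathfrak G|$ types, but this bounds only the \emph{size} of each candidate, not the \emph{number} of candidates. In your third paragraph you write that the number of realisable extensions is ``at most $k^{|\mathfrak G|}=2^{O(k)}$''; this is an arithmetic error, since $k^{|\mathfrak G|}=k^{2^{O(k)}}=2^{2^{O(k)}\log k}$ is doubly exponential in $k$. More fundamentally, the number of subgraphs of $\mathfrak G$ (even of bounded size) is $2^{\Theta(|\mathfrak G|)}=2^{2^{\Theta(k)}}$, so a brute-force enumeration of candidate periodic structures cannot achieve the target $p(|\q|)\,2^{p'(k)}$ bound. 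Your pumping argument does not bridge this gap: showing that every bad $\mathfrak P$ has a small bad representative still leaves doubly-exponentially many small representatives to examine.

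The paper avoids enumeration entirely. It reformulates the criterion in terms of anchored and unanchored homomorphisms and then computes, directly on $\mathfrak G$, which types and edges carry the relevant properties: it marks types as \emph{black} (a root segment maps into $\bar t$) or \emph{blue} (every $\mathfrak H_v$ from $v$ admits a root-segment embedding, decided via a two-player pebble game on $\mathfrak G$), and it computes by a fixpoint iteration on depth $d$ which edges are \emph{cuttable}. Each of these computations runs in time polynomial in $|\mathfrak G|$ and $|\q|$, hence in $p(|\q|)\,2^{p'(k)}$; the existence of a bad periodic structure is then read off from whether any root neighbourhood has all its used $A$-edges cuttable. The key idea you are missing is this replacement of universal quantification over periodic structures by local node- and edge-colouring on $\mathfrak G$ via game and fixpoint arguments.
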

\begin{proof}
We slightly modify the criterion of Claim~\ref{lemmaFO}. Suppose we have a periodic structure $\perstr$ and a homomorphism from a cactus $\C$ into $\bar \perstr$. We say that this homomorphism is \emph{anchored} if the image of the root segment in $\C$ includes some  $A$-node of the root segment in $\perstr$. We also use the same notion for a homomorphism from a root segment: it is \emph{anchored} if its image includes some $A$-node of the root segment in $\perstr$.

The modified criterion is as follows: $(\Delta_\q,\G)$ is FO-rewritable iff, for any periodic structure $\perstr$, there is either $(i)$ an unanchored homomorphism from a root segment to $\bar{\perstr}$, or $(ii)$ an anchored homomorphism from an acyclic cactus to $\bar{\perstr}$. The proof of this criterion remains essentially the same as the proof of Claim~\ref{lemmaFO}. Our aim is to show that it can be checked in time $p(|\q|)2^{p'(k)}$.

As well-known (see, e.g.,~\cite{DBLP:journals/jacm/Grohe07} and references therein), we can check whether there is a homomorphism from a segment into a type in polynomial time. 
Clearly, there are $2^{O(k)}$ nodes in the digraph $\mathfrak G$. 

We call a node (type) $t$ in $\mathfrak G$ \emph{black}, if we can homomorphically map some  root segment into the blow-up $\bar t$ of $t$. We can find all black nodes in $\mathfrak G$ in time $p(|\q|)2^{p'(k)}$. 

We call a node $v$ \emph{blue} if, for every $\mathfrak H_v$ starting from it, there is a homomorphism of a root segment into $\bar{\mathfrak H}_v$. We can find all blue nodes in time polynomial in the size of $\mathfrak G$. Indeed, consider a game between two players. They move a pebble over the nodes of $\mathfrak G$. In each node, the first player picks one outgoing edge for each possible label. The second player picks one of these edges and the players proceed along this edge to the next node. Black positions are winning for the second player. Non-black leaf-types are winning for the first player. It is easy to see that, for a given $v$, there is $\bar{\mathfrak H}_v$ with no root segment embeddings iff the node $v$ is winning for the first player ($\mathfrak H_v$ describes a winning strategy). So, to check whether a node is blue, it is enough to check the winner in this node. 

If a periodic structure has a black or a blue node, which is not a child of the root,   case $(i)$ holds  for this structure. 
Now, we need to check whether every acyclic structure, for which there are no such nodes, satisfies $(ii)$. 

By definition, edges in $\mathfrak G$ correspond to (budded) $A$-nodes in $\bar{\mathfrak G}$. For an edge $(u,v)$ in $\mathfrak G$, we say that we can \emph{cut it at depth $d$} if, for every periodic structure starting from this edge (more precisely, from the corresponding $A$ node), there is a focused homomorphism (a homomorphism is focused if the focus of a cactus, with $F$ renamed to $A$, is mapped into the focus of the periodic structure) into it from a cactus $\C^{F\to A}$ of depth at most $d$. The image of $\C^{F\to A}$ here is allowed to use any nodes of $\bar u$ save its focus. The property `to be cuttable at depth $d$' can be checked by recursion on $d$. For $d=1$, it just means that there is a homomorphism of a leaf segment into $u$ or $v$ using the $A$-node corresponding to $(u,v)$. For the recursion step, we need to try all possible extensions of $v$, not using coloured nodes, and to check all possible homomorphisms of a segment in $u$ or $v$ that map the focus node of a segment into $(u,v)$ and that does not use the focus node of $u$. If, for any extension of $v$,  there is a homomorphism of a segment that, apart from $(u,v)$, uses only $A$ nodes cuttable at depth $d-1$, then $(u,v)$ is cuttable at depth $d$.
We keep finding all nodes that can be cut at depth $d = 1, 2, 3, \dots$ until we find $d$, for which there are no new such nodes. Fix this $d$. The operation described above can be done in time $p(|\q|)2^{p'(k)}$.

Next, for each root segment in $\bar{\mathfrak G}$, we consider all of its neighbourhoods of depth 1 and check whether there is a homomorphism of a root segment in such a neighbourhood such that all $A$-nodes used by it can be cut at depth $d$. If this is the case, then in any periodic structure, in which there is no unanchored homomorphism  of a root segment, there is an anchored homomorphism of a cactus of depth $d$ (that can be made acyclic in the usual way).

Otherwise, we claim that there is a periodic structure for which neither $(i)$ nor $(ii)$  holds. To show this, we start from a root and its depth one neighbourhood that cannot be cut at depth $d$ and, in each $A$-node that cannot be cut at depth $d$, we proceed in such a way that one of the next nodes cannot be cut at depth $d$.
Suppose there is a homomorphism of a cactus of depth into the resulting periodic structure. Consider the smallest cactus with this property. Take an $A$-node of depth $d+1$ in this cactus. It is mapped into some edge in our periodic structure. Due to the minimality of the cactus, this $A$-node cannot be cut at depth $d$. However, all of its children in the cactus can be cut at depth $d$, which is a contradiction.
\end{proof}


\section{Proof of Theorem~\ref{t:tree-1-dich}}\label{a:trich}

We prove the following:

\smallskip
\  {\sc Theorem}~\ref{t:tree-1-dich}.
\emph{For any a ditree CQ $\q$ with one solitary $F$ and one solitary $T$, $(\Delta_\q,\G)$ is either FO-rewritable, or $\L$-complete, or \NL-complete. Deciding this trichotomy can be done in polynomial time.}

\smallskip
First, we show that if $\q$ is a ditree CQ  with a single solitary $F$-node $\cf$ and a single solitary $T$-node $\ct$, and $\q$ is quasi-symmetric, then $(\Delta_\q,\G)$ is \L-hard. 
%
%
The proof is by an FO-reduction  of the \L-complete reachability problem for undirected graphs.
Given an undirected graph $G = (V,E)$ with nodes $\snode,\tnode \in V$, we construct a data instance $\A_G$ as follows.
We replace each $e = (\unode,\vnode) \in E$ by a fresh copy $\q^e$ of $\q$ 		
such that node $\ct^e$ in $\q^e$ is renamed to $\unode$ with $T(\unode)$ replaced by $A(\unode)$, and node $\cf^e$ is renamed to $\vnode$ with $F(\vnode)$ replaced by $A(\vnode)$. Then $\A_G$ comprises all such $\q^e$, for $e \in E$, as well as $T(\snode)$ and $F(\tnode)$.
We show that $\snode \to_G \tnode$ iff the certain answer to $(\Delta_\q,\G)$ over $\A_G$ is `yes'\!. 

$(\Rightarrow)$ Suppose there is a path $\snode=\vnode_0,  \dots, \vnode_n = \tnode$ in $G$, in which $e_i =(\vnode_i,\vnode_{i+1}) \in E$, for $i < n$.  Then, for
		any model $\I$ of $\Delta_\q$ and $\A_G$, there is some $i < n$ such that  $\I \models T(\vnode_i)$ and $\I \models F(\vnode_{i+1})$. Thus, the identity map from $\q$ to its copy $\q^{e_i}$ is a $\q\to\I$ homomorphism, as required.
		
$(\Leftarrow)$ Suppose $\snode \not\to_G \tnode$. Define a model $\I$ of $\Delta_\q$ and $\A_G$ by labelling with $T$ the $A$-nodes in $\A_G$ that (as nodes in $G$) are reachable from $\snode$ (via an undirected path in $G$) and with $F$ the remaining ones. We claim
		that there is no homomorphism $h \colon \q \to \I$. Indeed, suppose to the contrary  that such $h$ exists. Consider the sub-structure $\wormh$ of $\A_G$ comprising those copies of $\q$  that have a non-empty intersection with $h(\q)$. As $\q$ is quasi-symmetric, $(\ct,\cf)$ is $\prec$-incomparable, and so $\wormh$ looks like in the proof of Theorem~\ref{thm:tree1}~$(ii)$. Thus, we have the four cases of Claim~\ref{c:hinf}. It follows that $\delta(\itf,\cf)\ne\delta(\itf,\ct)$. On the
other hand, as $\q$ is quasi-symmetric and $(\ct,\cf)$ is its only \solitarypair, $(\ct,\cf)$ is symmetric. Thus, we must have $\delta(\itf,\cf)=\delta(\itf,\ct)$, which is a contradiction.		

\smallskip
Next, suppose that $(\ct,\cf)$ is not $\prec$-comparable and $\q$ is not quasi-symmetric.
We consider two 
models $\I$ over the structure $\wormh$ (defined in the proof of Theorem~\ref{thm:tree1}~$(ii)$): one has both contacts in $F^{\I}$, the other in $T^{\I}$. We check whether there exists a homomorphism from $\q$ to either of these models: If neither, then  
$(\Delta_\q,\G)$ is \NL-hard by the proof of Theorem~\ref{thm:tree1}~$(ii)$.
If at least one of them is possible, then we show that $(\Delta_\q,\G)$ is FO-rewritable:
We will use the homomorphism $h\colon\q\to\I$ to define homomorphisms from some depth $\leq 2$ cactus 
to any larger cactus, and then apply the criterion of Prop.~\ref{thmequi}.

By Claim~\ref{c:hinf}, we have four cases for $h$. When $\q$ contains a single solitary $F$- and a single solitary $T$-node,
then case (2) is a `symmetrical' version of case (1), and case (4) is a `symmetrical' version of case (3). So below we deal
with cases (1) and (3) only. In each of these cases,
we claim that
\begin{multline}\label{cembedone}
\mbox{for any $n> 2$, either there is a homomorphism $g\colon\C_1\to\C_n$,}\\
\mbox{or there is a homomorphism $g\colon\C_2\to\C_n$,}
\end{multline}
where $\C_n$ is the unpruned $\q$-cactus (budded at the $T$-node) whose skeleton is of depth $n$.

Indeed,  we define $g$ in each of the cases.
Suppose that the segments of $\C_n$ are $\mathfrak s_0^n$ (root), $\dots,\mathfrak s_n^n$ (leaf).
\begin{enumerate}
\item
There are two cases, depending on whether $h(\q)$ does not intersect with $\q^{a+1}-\{\cf^a\}$ (like in $\q_7$ above), or it does (like in $\q_8$ of Example~\ref{e:three}). 
In the former case, we can map $\C_1$ to any larger cactus, while in the latter case
only $\C_2$. The map $g\colon\C_i\to\C_n$ for $i=1,2$, is defined as follows:
Each of the non-leaf segments of $\C_i$ is mapped to the same segment in $\C_n$ by its identity map, and
the remaining points in $\mathfrak s_i^i$ are mapped  by taking
\[
g(z)=\left\{
\begin{array}{ll}
\iota^a\bigl(h(z)\bigr) \mbox{ in }\mathfrak s_{i-1}^n, & \mbox{ if $h(z)\in\q^a$},\\
\iota^{a-1}\bigl(h(z)\bigr) \mbox{ in }\mathfrak s_{i}^n, & \mbox{ if $h(z)\in\q^{a-1}$},\\
\iota^{a+1}\bigl(h(z)\bigr) \mbox{ in }\mathfrak s_{i-2}^n, & \mbox{ if $i=2$ and $h(z)\in\q^{a+1}$}.
\end{array}
\right.
\]
%
It is straightforward to check that $g$ is well-defined and it is a homomorphism.

\item[(3)]
We can map $\C_1$ to any larger cactus by the map $g\colon\C_1\to\C_n$ defined as follows:
%
The root segment $\mathfrak s^1_0$ of $\C_1$ is mapped to the root segment $\mathfrak s_0^n$ of $\C_n$ by its identity map, and
the remaining points in $\mathfrak s_1^1$ are mapped  by taking
\[
g(z)=\left\{
\begin{array}{ll}
\iota^a\bigl(h(z)\bigr) \mbox{ in }\mathfrak s_{1}^n, & \mbox{ if $h(z)\in\q^a$},\\
\iota^{a-1}\bigl(h(z)\bigr) \mbox{ in }\mathfrak s_{2}^n, & \mbox{ if $h(z)\in\q^{a-1}$},\\
\iota^{a+1}\bigl(h(z)\bigr) \mbox{ in }\mathfrak s_{0}^n, & \mbox{ if $h(z)\in\q^{a+1}$}.
\end{array}
\right.
\]
\end{enumerate}
Again, it is straightforward to check that $g$ is well-defined and it is a homomorphism.


\end{document}